\NeedsTeXFormat{LaTeX2e}
\documentclass[11pt]{article}

\usepackage{fullpage}
\usepackage[T1]{fontenc}
\usepackage{fourier}

\usepackage[style=alphabetic, maxbibnames=99, maxalphanames=4,
minalphanames=4,url=false,isbn=false,doi=false,eprint=false]{biblatex}
\renewbibmacro{in:}{}
\addbibresource{polarization_arxiv_v1.bib}

\usepackage{amsmath}
\usepackage{amssymb}
\usepackage{bbm}
\usepackage{mathtools}
\usepackage{amsthm}
\usepackage{graphicx}
\usepackage{subcaption}
\usepackage{tikz}

\renewcommand{\vec}[1]{\boldsymbol{#1}}
\newcommand\cE{\mathcal E}
\newcommand\cF{\mathcal F}
\newcommand\cU{\mathcal U}
\newcommand\Atilde{\widetilde{A}}
\newcommand\Ahat{\widehat{A}}
\newcommand\Utilde{\widetilde{\cU}}
\newcommand\Uhat{\widehat{\cU}}
\newcommand\vu{\vec u}
\newcommand\vv{\vec v}
\newcommand\eps{\epsilon}
\newcommand{\deltars}{\delta_{ab}}
\newcommand{\Ctilde}{\widetilde{C}}
\newcommand{\Cmin}{C_{\mathrm{min}}}
\newcommand{\Cstart}{C_{\mathrm{start}}}
\newcommand{\Cstep}{C_{\mathrm{step}}}
\newcommand{\cact}{c_{\mathrm{cons}}}
\newcommand{\epsbase}{\eps_{\mathrm{base}}}
\newcommand{\cadv}{c_{\mathrm{adv}}}
\newcommand{\Tstart}{T_\mathrm{start}}
\newcommand{\Tend}{T_\mathrm{end}}
\newcommand{\Shat}{\widehat{S}}
\newcommand{\what}{\widehat{w}}

\newcommand\norm[1]{\left\|{#1}\right\|}
\newcommand\abs[1]{\left|{#1}\right|}
\DeclareMathOperator{\sign}{sign}
\DeclareMathOperator{\NC}{NC}
\DeclareMathOperator{\Ber}{Ber}

\newcommand\expec{\mathbb{E}}

\usepackage{hyperref}
\usepackage[capitalize]{cleveref}

\newtheorem{definition}{Definition}[section]
\newtheorem{claim}[definition]{Claim}
\newtheorem{remark}[definition]{Remark}
\newtheorem{theorem}[definition]{Theorem}

\newtheorem{lemma}[definition]{Lemma}
\newtheorem{corollary}[definition]{Corollary}


\AddToHook{env/claim/begin}{\crefalias{definition}{claim}}
\AddToHook{env/remark/begin}{\crefalias{definition}{remark}}
\AddToHook{env/theorem/begin}{\crefalias{definition}{theorem}}
\AddToHook{env/proposition/begin}{\crefalias{definition}{proposition}}
\AddToHook{env/lemma/begin}{\crefalias{definition}{lemma}}
\AddToHook{env/corollary/begin}{\crefalias{definition}{corollary}}

\crefname{definition}{definition}{definitions}
\crefformat{definition}{#2definition~#1#3} 
\Crefformat{definition}{#2Definition~#1#3} 

\crefname{claim}{claim}{claims}
\crefformat{claim}{#2claim~#1#3} 
\Crefformat{claim}{#2Claim~#1#3} 

\crefname{remark}{remark}{remarks}
\crefformat{remark}{#2remark~#1#3} 
\Crefformat{remark}{#2Remark~#1#3} 

\crefname{theorem}{theorem}{theorems}
\crefformat{theorem}{#2theorem~#1#3} 
\Crefformat{theorem}{#2Theorem~#1#3} 

\crefname{proposition}{proposition}{propositions}
\crefformat{proposition}{#2proposition~#1#3} 
\Crefformat{proposition}{#2Proposition~#1#3} 

\crefname{lemma}{lemma}{lemmas}
\crefformat{lemma}{#2lemma~#1#3} 
\Crefformat{lemma}{#2Lemma~#1#3} 

\crefname{corollary}{corollary}{corollaries}
\crefformat{corollary}{#2corollary~#1#3} 
\Crefformat{corollary}{#2Corollary~#1#3} 

\title{Geometric opinion exchange polarizes in every dimension}

\author{Abdou Majeed Alidou\footnote{AIMS Rwanda. Email:\url{abdou@aims.edu.gh}.}
\and
Júlia Baligács\footnote{University of Warsaw. Email:\url{jbaligacs@gmail.com}}
\and
Jan Hązła\footnote{AIMS Rwanda. Email:\url{jan.hazla@aims.ac.rw}.
A.M.A.~and J.H.~were
supported by the Alexander von Humboldt Foundation German research chair
funding and associated DAAD projects No. 57610033 and
57761435.}}

\date{}

\begin{document}
\maketitle

\begin{abstract}
A recent line of work studies models of opinion exchange
where agent opinions about $d$ topics are tracked
simultaneously.
The opinions are represented as vectors on the 
unit $(d-1)$-sphere,
and the update rule is based on the overall correlation
between the relevant vectors. The update rule
reflects the assumption of 
biased assimilation, i.e., a pair of opinions is
brought closer together if their correlation is positive
and further apart if the correlation is negative.

This model seems to induce
the polarization of opinions into two antipodal
groups. This is in contrast to many other known models
which tend to achieve consensus. The polarization
property has been recently proved for $d=2$, 
but the general case of $d\ge 3$ remained open.
In this work, we settle the general case,
using a more detailed 
understanding of the model dynamics
and tools from the theory of random processes.
\end{abstract}

\section{Introduction}

Models of belief formation and exchange 
are studied in several scientific disciplines,
including economics,
social sciences, and computer science. The topic is very relevant
to the functioning of a modern society. At the same time,
a given model and its analysis can contain interesting 
mathematics of general interest.

In this work, we focus
on the  model of ``geometric opinion exchange'' introduced
in~\cite{HJMR} and further studied
in~\cite{gaitonde2021polarization,ABH24}.
In this model, agent opinions are tracked simultaneously
for several topics, and accordingly represented as vectors. The
opinions are updated according to a ``geometric'' rule in the sense that 
an update depends on an overall correlation (scalar product)
between a pair of opinions.

More precisely, let 
\(d, n \geq 2\) denote the number of dimensions and the number of agents, respectively. 
We let \([n]\) denote the set \(\{1, 2, \hdots, n\}\) and refer to agents by indices from this set. 
An \emph{opinion} $\vu_i\in\mathbb{R}^d$ of agent $i$ is a $d$-dimensional vector on the unit sphere\footnote{
The unit sphere assumption can be interpreted as a
finite ``budget of attention'', ensuring that an agent cannot
have extreme opinions for all topics.
See~\cite{HJMR} for a discussion.
}, 
in other words satisfying $\|\vu_i\|=1$.
Given $n$ opinions, let us denote them collectively as a \emph{configuration} $\mathcal{U}=(\vu_1,\ldots,\vu_n)$.

Let $\alpha>0$ and $\mathcal{U}^{0}$ be some initial configuration.
We consider the following random process $(\cU^t)_t$:
Given $\cU^t$, choose $(i,j)\in[n]\times [n]$ uniformly at random.
We will call the pair $(i,j)$ an
\emph{interaction} and also
say that agent~\(j\) \emph{influences} the opinion of agent \(i\) at time $t$. The new configuration $\cU^{t+1}$ has the same
opinions as $\cU^t$, except for agent $i$, whose updated
opinion \(\vec u_i^{t+1}\) is given by
\begin{equation}
    \vec u_i^{t+1} = \frac{\vec w}{\norm{\vec w}} , \; \text{where} \;
    \vec w = \vec u_i^t \, + \, \alpha A_{ij}^t \cdot \vec u_j \; ,
    \label{eq:06}
\end{equation}
where $A_{ij}^t=\langle \vec u_i^t,\vec u_j^t\rangle$.

The motivation behind this update rule is the assumption of
\emph{biased assimilation}: If the opinions are positively correlated, i.e.,  $A_{ij}^t>0$, then agent $i$ responds favorably
to persuasion by $j$ and $\vec u_i^{t+1}$ lies on the great
circle of the sphere somewhere between $\vec u_i^t$ and $\vec u_j^t$.
In other words,  the opinion of agent $i$ moves closer towards~$\vec u_j^t$.
On the other hand, if $A_{ij}^t<0$, then agent $i$
responds negatively and moves away from $\vec u_j^t$
and towards $-\vec u_j^t$.

The distinguishing feature of this model is that it seems
to induce the \emph{polarization of opinions}.
That is, over time, each agent's opinion converges to one of a pair
of limiting points $(\vec u^*,-\vec u^*)$.
This behavior contrasts with many well studied, natural
models, which tend to induce \emph{consensus}:
All opinions converge to a single point $\vec u^*$.
For example, convergence to consensus 
(under natural assumptions) is known for
the DeGroot model~\cite{DeGroot}, voter model~\cite{holley1975ergodic}, and  
Bayesian network models~\cite{MST14}, as well as 
many of their variants
and other models, see the discussion in
\cite{HJMR,ABH24}, 
and more generally~\cite{AO11,MT17}.
As polarization can be observed
in many societal settings, it seems interesting to look
for models where it arises in a natural way.

Therefore, it is a natural objective to
characterize the conditions leading to 
polarization of opinions in the model described above.
This is our objective in this paper.
To state our result, we first need to define the notion of polarization:

\begin{definition}\label{def:polarization}
A configuration $\mathcal{U}$ is \emph{polarized} if, for every
$i,j$, either $\vu_i=\vu_j$ or $\vu_i=-\vu_j$.
We say that a sequence of configurations $(\mathcal{U}^{t})_t$
polarizes
if  $\ \lim_{t\to\infty}\mathcal{U}^{t}$ exists and is a polarized configuration
(where convergence is in the standard topology
in $\mathbb{R}^{d}$).
\end{definition}

Note that a consensus configuration
is also polarized according to this definition.
This is addressed just below in
\Cref{rem:polarization-clusters}.

We will show that the process
$(\cU^t)_t$ almost surely polarizes, unless the initial
configuration $\cU^0$ contains a clear obstacle preventing polarization.
For example, consider an initial configuration
where $A_{1i}=0$ for every $i>1$. From~\eqref{eq:06}, it is clear that the opinion of agent $1$ will remain orthogonal to other opinions
for the rest of time. We will prove that an appropriate
generalization of this scenario is 
the only obstacle preventing polarization.

\begin{definition}[Separable configuration]
\label{def:separable}
    A configuration \(\mathcal{U}\) is \emph{separable} when its
    opinions can be partitioned into two nonempty sets \(S\) and \(T\) such that,
    for every opinion \(\vu \in S\) and \(\vv \in T\), it holds \(\vu \perp \vv\).
\end{definition}

We note that $\cU^t$ is separable if and only
if $\cU^{t+1}$ is separable, see Lemma~2.17 in~\cite{ABH24}.

\begin{theorem}\label{thm:main-polarization}
Let $\cU^0$ be an initial configuration which is not separable.
Then, almost surely, $(\cU^t)_t$
polarizes.
\end{theorem}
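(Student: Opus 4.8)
The plan is to run a martingale argument built around the potential $\Psi^t := \operatorname{tr}\!\big((G^t)^2\big) = \sum_{i,j\in[n]} (A^t_{ij})^2$, where $G^t = (A^t_{ij})_{i,j}$ is the Gram matrix of $\cU^t$. Two elementary observations fix its role: $n \le \Psi^t \le n^2$, and $\Psi^t = n^2$ holds exactly when every $|A^t_{ij}| = 1$, which by the equality case of Cauchy--Schwarz happens exactly when $\cU^t$ is polarized. The technical core of this step is to show that $\Psi^t$ is a bounded submartingale for every $\alpha>0$. A fixed interaction $(i,j)$ changes only the $i$-th row and column of $G^t$, so expanding the update~\eqref{eq:06} yields a closed rational expression for $\Psi^{t+1} - \Psi^t$ in the entries of $G^t$; averaging over the uniformly random $(i,j)$, and noting that the normalization in~\eqref{eq:06} contributes a factor $\big(1+\alpha(2+\alpha)(A^t_{ij})^2\big)^{-1}$, one regroups the conditional drift $\expec[\Psi^{t+1}-\Psi^t\mid \cU^t]$ as a sum of two nonnegative terms: a quadratic term, comparable to $\sum_i \big\| M^t \vu^t_i - \langle \vu^t_i, M^t\vu^t_i\rangle\, \vu^t_i \big\|^2$, where $M^t := \sum_k \vu^t_k (\vu^t_k)^\top$, and a fluctuation term, comparable to $\sum_{i,j} (A^t_{ij})^2\big(1-(A^t_{ij})^2\big)$, an antisymmetric cross term cancelling after the sum. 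Granting this, $\Psi^t$ converges almost surely, and since its conditional increments are nonnegative and $\Psi^t$ is bounded, $\sum_t \expec[\Psi^{t+1}-\Psi^t\mid \cU^t] < \infty$ almost surely, so this drift tends to $0$.

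One then identifies the zero set $Z$ of the drift: the fluctuation term forces $A_{ij}\in\{-1,0,1\}$ for all $i,j$, whence $\vu_i = \pm\vu_j$ is an equivalence relation whose classes span pairwise orthogonal lines, so a configuration in $Z$ is separable unless it has a single class, i.e.\ unless it is polarized. Thus $Z$ restricted to non-separable configurations is precisely the set of polarized configurations. Since $\cU^0$, and hence every $\cU^t$ by Lemma~2.17 of~\cite{ABH24}, is non-separable, it remains to promote the vanishing of the drift to $\Psi^t\to n^2$ almost surely. Here lies the real difficulty, and where tools from the theory of random processes come in: $Z$ also contains separable configurations which, although never visited, could a priori be approached in the limit, and near such a point the drift is only of order $\dist(\cU,Z)^2$, so summability of the drift does not on its own forbid convergence there. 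The separable points of $Z$ are, however, unstable equilibria of the averaged dynamics; one excludes convergence to them, for instance by producing from every configuration in a small neighbourhood a bounded-length sequence of interactions of probability bounded below along which $\Psi$ grows by a fixed positive amount, so that a trajectory lingering near $Z$ would force infinitely many jumps of fixed size in the convergent sequence $\Psi^t$, a contradiction (alternatively, a Pemantle-type non-convergence-to-unstable-equilibria theorem). With the drift tending to $0$ this yields that $\cU^t$ almost surely enters and remains in every neighbourhood of the compact set of polarized configurations, i.e.\ $\Psi^t\to n^2$ and all $|A^t_{ij}|\to 1$.

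Finally one promotes near-polarization to convergence of $\cU^t$ itself. Once all $|A^t_{ij}|$ are near $1$ the signs $\sign A^t_{ij}$ realize a consistent pattern $\sigma_i\sigma_j$ and can no longer change, since a flip would need some $A^t_{ij}$ to cross $0$; fixing the eventual signs and replacing each $\vu^t_i$ by $\sigma_i\vu^t_i$ puts all opinions in a small spherical cap and leaves the update rule unchanged but with all correlations positive. The angular width $w^t := \max_{i,j}\angle(\vu^t_i,\vu^t_j)$ is then non-increasing, as each update moves $\vu^t_i$ onto the short geodesic toward $\vu^t_j$ and so cannot leave the geodesic hull of the current opinions; and whenever $w^t$ stays bounded below, an update of an extremal opinion toward an interior one---an event of probability bounded below---shrinks $w^t$ by a definite amount. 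Hence $w^t\to 0$ with geometric-type decay, so $\tfrac1n\sum_i \vu^t_i$ is Cauchy and converges to a unit vector $\vu^*$, and every $\vu^t_i\to\vu^*$; undoing the sign normalization, $\cU^t$ converges to a polarized configuration, which is the theorem. The main obstacle is the middle step: ruling out that the non-separable process lingers near the separable part of the zero-drift set is a statement about the process's fluctuations rather than its drift, and is exactly where the finer probabilistic input is needed.
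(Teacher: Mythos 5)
Your proposal is essentially the ``potential function'' route that the paper explicitly considers and rejects in \Cref{sec:plan}, and it has two genuine gaps. First, the submartingale property of $\Psi^t=\sum_{i,j}(A^t_{ij})^2$ is asserted but not established: you claim the conditional drift ``regroups as a sum of two nonnegative terms,'' but no such decomposition is exhibited, and the authors state explicitly that they are not aware of a proof of \eqref{eq:21} and have only empirical evidence for it. A sketch that hinges on an inequality the paper's authors could not prove needs to actually prove it.

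Second, and more fundamentally, your mechanism for excluding convergence to the separable part of the zero-drift set is ``from every configuration in a small neighbourhood, a bounded-length sequence of interactions of probability bounded below along which $\Psi$ grows by a fixed positive amount.'' The paper proves this is impossible for $d\ge 3$: for every $\eps>0$ and every $K$ there is an $(\eps,\eps)$-inactive (i.e., near-separable but non-separable) configuration that requires more than $K$ steps to become active, so no uniform bounded-length escape with a fixed gain in $\Psi$ exists. (This uniform escape property does hold for $d=2$, which is why the earlier work~\cite{ABH24} could settle that case.) The Pemantle-type alternative you mention does not apply either: those theorems need non-degenerate noise in the unstable direction with step sizes decaying suitably, whereas near an inactive configuration \emph{all} increments of the between-cluster correlations are of order $\delta_0^2$, i.e., the noise degenerates exactly at the rate at which the drift does. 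This degeneracy is the central obstacle of the paper, and its resolution requires the logarithmic reparametrization $Q_0=-\log\delta_0$, $Q_1=-\log\delta_1$, the notion of $(a,b)$-consistent configurations, and the multi-step drift estimates of \Cref{lem:p0-p1-properties} together with the two-process stopping argument of \Cref{thm:random-process}. None of this is replaced by anything in your proposal. (Your final step, upgrading near-polarization with one cluster to actual convergence, is fine in spirit and matches \Cref{lem:strictly-convex}.)
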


\begin{remark}
\label{rem:polarization-clusters}
The notion of polarization from \Cref{def:polarization}
is quite strong, with a couple of caveats.
First, it has nothing to say about the speed of convergence.
We leave the analysis of this aspect as 
a direction for further work.

Second, according to \Cref{def:polarization},
a ``consensus configuration'' with all opinions equal
also counts as a polarized configuration. Since
an initial configuration where all opinions
are sufficiently close to each other converges to consensus
(more on that later), this is unavoidable if
we want to prove \Cref{thm:main-polarization} as stated.

On the other hand, let
$\cU^0$ be an initial configuration and $\widetilde{\cU}^0$
be equal to $\cU^0$ except that some $i$-th opinion
satisfies $\vec u_i^0=-\widetilde{\vec u}_i^0$.
Applying~\eqref{eq:06}, it follows that if the same sequence
of interactions is applied to $\cU^t$ and $\widetilde{\cU}^t$,
also at every time $t$ the opinions in
$\cU^t$ and $\widetilde{\cU}^t$ are equal except that
$\vec u_i^t=-\widetilde{\vec u}_i^t$.

Using this symmetry and a concentration bound, 
one can prove
that if an initial configuration $\cU^0$
is drawn randomly i.i.d.~from a distribution
which is symmetric\footnote{
$P$ is symmetric around 0 if
$P(S)=P(-S)$ for every measurable
$S\in\mathbb{S}^{d-1}$, where
$-S=\{-\vec u:\vec u\in S\}$.
} around 0, then, with high probability
(as the number of agents increases),
the agents polarize into two
opposing groups of roughly equal size.
See Section~2.3.4 in \cite{ABH24}
for more details.
\end{remark}

\subsection{Inactive configurations}

While \Cref{thm:main-polarization} is
intuitive, 
its proof is not straightforward and requires
somewhat detailed understanding of the model dynamics.
Let us describe the main challenge that needs to be overcome. Consider a configuration
$\cU=\cU^t$ 
which is not separable, however, for every pair of opinions,
it holds either $|A_{ij}|\approx 0$ or $|A_{ij}|\approx 1$.
Let us say the configuration $\cU^{t+1}$ is obtained
from $\cU^t$ by agent $j_0$ influencing the opinion of agent~$i_0$.
From~\eqref{eq:06}, the opinion of $i_0$ will move
only by a small amount: If $|A_{i_0j_0}|\approx 0$, then this
is clear. On the other hand, $A_{i_0j_0}\approx 1$ means that the distance
between $\vec u_{i_0}^t$ and $\vec u_{j_0}^t$ is small, and
$\vec u_{i_0}^{t+1}$ lies on the arc between these two vectors, in particular,
it will be close to $\vec u_{i_0}^t$.
Furthermore, a symmetric argument applies if~$A_{i_0j_0}\approx -1$.

Therefore, whenever such an ``almost separable'' configuration
is reached, we need to make sure that the random process
continues making progress and does not
``get stuck'' indefinitely in such a state. While it might be intuitive
that such configurations are unstable and the process
must eventually escape, our proof of this is rather involved.
Let us make this more formal by introducing the notion of an \emph{inactive} configuration:

\begin{definition}
\label{def:eps0-eps1-inactive-v2}
Let $\eps_0,\eps_1> 0$.
A configuration $\cU$ is \emph{$(\eps_0,\eps_1)$-inactive} if, for every
pair of opinions, either $|A_{ij}|> 1-\eps_1^2$ or $|A_{ij}|<\eps_0$.
\end{definition}

It is useful to think of an $(\eps_0,\eps_1)$-inactive configuration
as partitioned into ``clusters'' of opinions which are close (up to sign), 
such that all correlations between clusters
are close to zero:
\begin{definition}[Cluster]
\label{def:cluster}
    Let \(\mathcal{U}\) be a configuration.
    A non-empty set \(C \subset [n]\) is a cluster of~\(\mathcal{U}\) if,
    for every \(i, j \in C\), \(\abs{A_{ij}} > 1/2\), and
    for every \(i \in C\), \(j \not\in C\), \(\abs{A_{ij}} < 1/2\).
\end{definition}

Of course if $\cU$ is $(\eps_0,\eps_1)$-inactive,
then if $i,j$ are in the same cluster
it holds $|A_{ij}|>1-\eps_1^2$ and if they
are in distinct clusters it holds $|A_{ij}|<\eps_0$.
It is readily proved that, for sufficiently small $\eps>0$,
an $(\eps,\eps)$-inactive configuration is uniquely partitioned into
at most $d$ clusters (see \Cref{lem:cluster}).
We can now state our technical result rigorously:
\begin{theorem}\label{thm:technical-informal}
Given $n,d,\alpha$, there exist positive constants $\epsbase>\eps_1>\eps$
and a natural number $T$ such that
the following holds:

Let $\cU^0$ be an $(\eps,\eps)$-inactive configuration with clusters
$S_1,\ldots, S_k$. 
Furthermore, assume that there
exist $i,j\in [n]$ with $0<|A_{ij}^0|<\eps$. Then, almost surely,
there exists $t$ such that $\cU^{tT}$ is
not $(\eps,\eps_1)$-inactive.

Furthermore, for the smallest such $t$, $\cU^{tT}$
is $(\epsbase,\epsbase)$-inactive and
has the same clusters
as $\cU^0$ and, 
with probability at least~0.7, satisfies $|A^{tT}_{ij}|>1-\eps_1^2$ for
every $i,j\in S_a$, $a=1,\ldots,k$.
\end{theorem}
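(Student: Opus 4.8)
\emph{Overview and parameters.}
The plan is to treat the clusters $S_1,\dots,S_k$ --- which are well defined for small $\eps$, and which I will show do not change up to the escape time --- as almost orthonormal ``super-agents'': each opinion $\vu_i$ with $i\in S_a$ equals a fixed direction $\vec v_a$ up to sign and up to a perturbation of size $\delta_a$ (the \emph{spread} of cluster~$a$), and the within-cluster defect $\eta^t:=\max\{\,1-\abs{A^t_{ij}}:\ i,j\ \text{in a common cluster}\,\}$ governs all the $\delta_a$. I would fix the constants in the order $\epsbase$ (small in $n,d,\alpha$), then $\eps_1$ (small in $\epsbase$, with $n\eps_1$ still small beside $\epsbase$), then $\eps$ (with $\mathrm{poly}(n)\,\eps\ll\eps_1$), and finally a \emph{bounded} natural number $T$ of order $1/\alpha$, small enough that $(1+\alpha)^T<\eps_1/\eps$ and $T\alpha<\tfrac14$; every inequality below follows from this hierarchy.

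\emph{The key drift.}
Let $Y^t:=\sum_{a<b}\sum_{i\in S_a,\,j\in S_b}(A^t_{ij})^2$ be the cross-cluster energy (equivalently, up to an additive constant and an $O(\eta^t)$ error, half the squared Frobenius norm $\sum_{i,j}(A^t_{ij})^2$ of the Gram matrix). Expanding~\eqref{eq:06} to first order, always under the assumption that the configuration is inactive with clusters $S_1,\dots,S_k$, one finds three things: (i) an interaction between two \emph{distinct} clusters $a\neq b$ raises $Y^t$ by $\Theta\!\bigl(\alpha(|S_a|+|S_b|)\bigr)(A^t_{ij})^2$ --- precisely biased assimilation multiplying a cross-correlation by a factor $\approx 1+\alpha$, felt throughout both clusters; (ii) a within-cluster interaction acts on the relevant cross-correlations as an averaging step of weight $\tfrac{\alpha}{1+\alpha}$, so, after pairing $(i,k)$ with $(k,i)$, its net effect on $Y^t$ is at most $O(\alpha\eta^t)$ in absolute value and non-positive to leading order; (iii) an interaction involving a \emph{third} cluster changes every correlation by a second-order amount $O(\alpha\eps^2)$, hence $Y^t$ by at most $O(\alpha\eps^2\sqrt{Y^t})$. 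Averaging over the uniform interaction,
\[
\expec\bigl[\,Y^{t+1}-Y^t\ \big|\ \cU^t\,\bigr]\ \ge\ c\,\alpha\,Y^t\ -\ C\alpha\,\eta^t\ -\ C\alpha\,\eps^2\sqrt{Y^t}\,,
\]
with $c,C>0$ depending only on $n,\alpha$. Since biased assimilation contracts the cluster spreads geometrically, $\eta^t$ is --- apart from an $O(\alpha\eps^2)$ second-order leakage per interaction with another cluster --- geometrically small in time and its total contribution is summable; absorbing it into a suitable correction of $Y^t$ produces a nonnegative bounded submartingale while the configuration stays inactive. This rules out $Y^t$ stabilizing at a positive value; combined with the separate argument of the last paragraph (ruling out $Y^t\to 0$), it follows that almost surely some cross-correlation reaches $\eps$, and --- since $T$ is a bounded constant and every correlation moves by at most $O(\alpha\epsbase)$ per step --- this is registered at the next multiple of $T$. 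That is the first assertion.

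\emph{Confinement and within-cluster correlations.}
Now fix the smallest $t$ for which $\cU^{tT}$ is not $(\eps,\eps_1)$-inactive. An invariant argument shows that up to time $tT$ the configuration stays $(\epsbase,\epsbase)$-inactive with the \emph{same} clusters: as long as no correlation has entered the band $[\eps,1-\eps_1^2]$, each opinion is either nearly aligned with (up to sign) or nearly orthogonal to the opinion influencing it, so by~\eqref{eq:06} every correlation moves by $O(\alpha\epsbase)$ per step; since $(1+\alpha)^T<\eps_1/\eps$, a cross-correlation that first crosses $\eps$ during super-round~$t$ is still below $\eps_1$ (hence below $\epsbase$ and below $1/2$, so no two clusters merge) at time $tT$, while a within-cluster correlation leaks downward by only a second-order $O(\alpha\eps^2)$ per step --- the two cross-correlations involved being $O(\eps)$, or $O(\eps_1)$ during the final super-round --- so with $T\alpha<\tfrac14$ it stays above $1-\eps_1^2>1-\epsbase^2$. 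Thus $\cU^{tT}$ is $(\epsbase,\epsbase)$-inactive with unchanged clusters, and the escaping correlation lies in $[\eps,\epsbase)$. For the last assertion, the within-cluster defects obey (by the same computation) a supermartingale-type relation with a restoring term $-\Theta(\alpha)\,\eta^t$ and a near-mean-zero fluctuation of per-step size $O(\alpha\eps^2)$, so their typical size is $O(\mathrm{poly}(n)\,\eps^2)\ll\eps_1^2$; a second-moment / maximal-inequality bound at the stopping time $tT$, with the failure budget split over the at most $d$ clusters, gives that with probability at least $0.7$ every within-cluster correlation is still $>1-\eps_1^2$.

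\emph{Main difficulty.}
The crux is the drift step, and inside it the race between the possibly minuscule initial energy $Y^0$ --- the hypothesis only guarantees $Y^0>0$, with no quantitative lower bound --- and the ``contamination'' of the potential. The within-cluster part is benign: it is controlled by the \emph{current} defect $\eta^t$, which contracts with the clusters, so it is tied to a summable budget and is negligible beside the multiplicative growth of $Y^t$ once that has had time to act. The genuinely delicate term is the third-cluster one, $O(\alpha\eps^2\sqrt{Y^t})$: on the quadratic scale it overwhelms the $\Theta(\alpha Y^t)$ growth once $Y^t\lesssim\eps^4$, creating an \emph{unstable equilibrium} near $\eps^4$ that, naively, the process could fail to cross. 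The resolution I would pursue is to change scale: the same effect is, on a single cross-correlation $A_{ij}$, an additive $O(\alpha\eps^2)$ ``knock'', which on the \emph{logarithmic} scale competes with the amplification events (each multiplying $\abs{A_{ij}}$ by $\approx 1+\alpha$, and occurring at a positive rate), so that $\log\abs{A_{ij}}$ --- or a global variant of it that is robust to the cluster spreads --- behaves like a random walk with bounded increments and a strictly positive bias, which therefore leaves any bounded region almost surely. Turning this into a clean potential that is a genuine submartingale (after subtracting a predictable decreasing remainder) uniformly down to $Y^t=0$, and choosing the hierarchy $\epsbase\gg n\eps_1\gg\mathrm{poly}(n)\,\eps$ so that every error is absorbed, is the technical heart of the proof; the confinement and within-cluster estimates are then comparatively routine perturbation bounds.
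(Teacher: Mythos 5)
There is a genuine gap, and it sits exactly at the point you label ``the technical heart'': the claim that $\log\abs{A_{ij}}$ (or ``a global variant of it robust to the cluster spreads'') is a bounded-increment random walk with strictly positive bias. This is false at the one-step level, for a reason your item (ii) glosses over. Two opinions $i,k$ in the same cluster with $1-\abs{A_{ik}}=O(\eps^2)$ can have correlations with an external agent $j$ of opposite sign and comparable magnitude (e.g.\ $A_{ij}\approx\eps$, $A_{kj}\approx 0$ or $\approx-\eps$); a within-cluster interaction then acts as an averaging step that multiplies the \emph{maximal} cross-correlation $\delta_0$ by a factor bounded away from $1$ from below, and such an interaction occurs with probability of the same order $\Theta(n^{-2})$ as the amplifying cross-cluster interaction. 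Consequently $\expec[-\log\delta_0(t+1)\mid\cU^t]$ can exceed $-\log\delta_0(t)$, as the paper notes explicitly in \Cref{sec:plan}. Your quadratic potential does not rescue this: granting your (paired) bound that within-cluster interactions cost only $O(\alpha\eta^t)$, the hypothesis gives no lower bound on $Y^0$ while $\eta^0$ can be as large as $\eps^2$, so the compensated process is dominated early on by the $-C\alpha\eta^t$ term, and in any case a bounded submartingale merely converges --- nothing forces its limit above $\eps^2$. You yourself defer the exclusion of $Y^t\to0$ to the log-scale argument, which is precisely the one with the hole. The paper's entire mechanism for this --- $(a,b)$-consistent configurations, their reachability in $O(1)$ interactions (\Cref{lem:consistent-reachable}), their persistence for $\Omega(-\log\delta_0)$ steps (\Cref{lem:remain-consistent}), and the resulting drift of $Q_0$ over a \emph{long} fixed window $T$ (\Cref{lem:q0-expectation}) --- has no substitute in your proposal, and note that it forces $T$ to be large, whereas your confinement step needs $T$ small; these roles must be reconciled, which the paper does via \Cref{cor:bounded-q0-q1}.

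The second assertion (probability $0.7$) also needs more than you provide. The time to escape is unbounded (since $\delta_0(0)$ can be arbitrarily small), so the within-cluster defect gets an unbounded number of ``chances'' to cross $\eps_1^2$; moreover, when $\delta_1(t)\ll\delta_0(t)$ a single cross-cluster interaction can push $\delta_1$ up to $\Theta(\delta_0)$ in one step (so $Q_1$ has no a priori bounded increments), and the restoring drift on $\delta_1$ is only available in the regime $\delta_1\ge C\delta_0$. A maximal inequality at the stopping time does not control this; the paper's \Cref{thm:random-process} handles it by bounding, via Azuma, the number $N_\ell$ of steps the process spends at each level of $Q_0$, and then union-bounding exponentially small failure probabilities over all excursions during which $Q_1\le Q_0-C$. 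Your confinement paragraph (same clusters, $(\epsbase,\epsbase)$-inactivity up to time $tT$) is essentially the paper's \Cref{cor:bounded_step} and \Cref{cor:bounded-q0-q1} and is fine, but the two probabilistic cores of the theorem are not established by the proposal.
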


Let us discuss some aspects of the statement of \Cref{thm:technical-informal}.
There is an assumption that there exists
a pair of opinions from different clusters with nonzero correlation.
This is necessary to exclude the cases where the clusters are pairwise
orthogonal (in which case the configuration is separable and will never become active\footnote{
When outlining the proof, we might describe a configuration as ``active'' if it is not
$(\eps_0,\eps_1)$-inactive, where the values of $\eps_0,\eps_1$
are not important or implicit from the context.
In the proofs we only use the rigorous notion
of $(\eps_0,\eps_1)$-inactive configurations.
}), as well as when there is only one cluster.
On the other hand, it might be natural to
prove that
the configuration $\cU^{tT}$ is not
$(\eps,\eps)$-inactive, but we show that it is not
$(\eps,\eps_1)$-inactive for some $\eps_1>\eps$. This 
weaker conclusion makes the proof easier, while still allowing to deduce
\Cref{thm:main-polarization} from \Cref{thm:technical-informal}.

Importantly, the conclusion of the theorem is somewhat stronger than
the statement that the configuration ceases to be inactive.
In fact, we prove that the configuration becomes active, and that,
with fixed positive probability, it becomes active
\emph{because of two opinions in different clusters achieving
a noticeable correlation}. This additional property is helpful
for the following reason. Assume that a configuration
becomes active because there are two opinions $i,j$ in the same
cluster $S_a$ with $|A_{ij}|\le 1-\eps_1^2$, however all opinion pairs between
clusters remain almost orthogonal with absolute correlations less than $\eps$.
Then, it seems possible (indeed likely) that the configuration will become
inactive again by moving the opinions in cluster $S_a$ closer together,
while keeping between-cluster correlations small. If this keeps repeating, the process
might become stuck forever with the same cluster structure.

On the other hand, consider
a configuration that becomes active due to $|A_{ij}|\ge\eps$
for two opinions in different clusters. Then, we will see that,
with a noticeable probability, those two clusters can ``collapse''
into one and the next time the process becomes inactive, it will
have a strictly smaller number of clusters.

Intuitively, the unfavorable outcome of a configuration becoming active
because of the inside-cluster correlations seems very unlikely.
However, excluding it rigorously turns out to be quite difficult.

\subsection{Our contribution and previous work}

Models that utilize 
the update rule~\eqref{eq:06} and
other similar rules were introduced
in~\cite{HJMR}. Other works studying
such models include~\cite{gaitonde2021polarization}
and~\cite{ABH24}.
In particular, \cite{ABH24} introduced the particular dynamics
studied in this paper, and posed the question of convergence to
polarization. Then, they proved \Cref{thm:main-polarization}
restricted to $d=2$, and observed that a crucial property
used in the $d=2$ proof does not hold for $d\ge 3$.

Furthermore, \cite{ABH24} observed a partial result for $d\ge 3$: 
If there exists a fixed $\eps>0$ such that
an $(\eps,\eps)$-inactive initial configuration is almost surely escaped,
then $(\cU^t)_t$ almost surely polarizes. More or less,
they proved that \Cref{thm:technical-informal} implies
\Cref{thm:main-polarization}. However, they left open
if \Cref{thm:technical-informal} holds. Our contribution
is answering that question in the positive and proving 
\Cref{thm:technical-informal}. The derivation of \Cref{thm:main-polarization}
from \Cref{thm:technical-informal} is discussed in \Cref{sec:implication}.

\begin{remark}
The framework in~\cite{ABH24} is more general in that
it discusses update rules of the form
\begin{align}
\vec u_i^{t+1} = \frac{\vec w}{\norm{\vec w}}\;,\qquad
\vec w=\vec u_i^t+f(A_{ij}^t)\cdot\vec u_j 
\end{align}
for a more general class of functions $f:[-1,1]\to\mathbb{R}$
(which they call \emph{stable} functions).
As can be seen in~\eqref{eq:06}, we restrict ourselves to the choice
$f(x)=\alpha\cdot x$. This restriction is mostly for the sake of
concreteness and readability. We do not claim a general proof,
but we do not expect
significant changes in a more general setting.

However, another problem might be worth
of further study. As explained in~\Cref{rem:polarization-clusters},
even though our definition of polarization includes consensus,
if the initial opinions are i.i.d.~and symmetric, then
the ``balanced'' polarization occurs with two opposing groups
of similar size. However, the argument to justify this
works only if the function $f$ satisfies $f(-x)=-f(x)$.

It remains open to understand the group sizes of the two groups 
for general update
functions. One example is
\begin{align}
f(x)=\alpha x\cdot  1[x\ge 0]+\beta x\cdot 1[x<0]\;
\end{align}
for some $\alpha\ne\beta$. For $\alpha>\beta$, this could represent
a scenario where ``positive'' interactions influence agents more strongly than ``negative'' ones.
\end{remark}

\begin{remark}
Furthermore, in the results in~\cite{ABH24},
the pair of agents $(i,j)$ is not necessarily chosen uniformly,
but rather from a fully supported distribution $\mathcal{D}$.
Following our proof, it should be clear that it can be
adapted to all fully supported distributions. (Some of the constants
will have additional dependence on $\min_{i,j}\mathcal{D}(i,j)$).

However, our proof does not apply for
distributions which are not fully supported
(for example, if the agents can influence each other only
along edges of a social network). This is another natural direction
for further work.
\end{remark}

\section{Outline of the proof}
\label{sec:theorem-statement}

We start with a couple of
observations about clusters and inactive configurations.

\begin{lemma}[Lemma 2.20 in \cite{ABH24}]
\label{lem:cluster}
    Let \(\mathcal{U}\) 
    be $(\eps_0,\eps_1)$-inactive for
    $\max(\eps_0,\eps^2_1)\le \frac{1}{256}$.
    Then, the clusters of $\mathcal{U}$ form a partition
    of the set of agents $[n]$. 
    Furthermore, if $\max(\eps_0,\eps^2_1)<\frac{1}{d(d+1)}$, then 
    the number of clusters is at most $d$.
\end{lemma}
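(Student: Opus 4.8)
The plan is to first note that, under the hypothesis, the ``$>1/2$ versus $<1/2$'' threshold in \Cref{def:cluster} coincides with the finer ``$>1-\eps_1^2$ versus $<\eps_0$'' threshold of $(\eps_0,\eps_1)$-inactivity: since $\eps_0\le\tfrac1{256}<\tfrac12$ and $1-\eps_1^2\ge\tfrac{255}{256}>\tfrac12$, for every pair we have $\abs{A_{ij}}>\tfrac12$ iff $\abs{A_{ij}}>1-\eps_1^2$, and $\abs{A_{ij}}<\tfrac12$ iff $\abs{A_{ij}}<\eps_0$. Define a relation on $[n]$ by $i\sim j$ iff $\abs{A_{ij}}>\tfrac12$. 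It is reflexive ($A_{ii}=1$) and symmetric, and I claim the clusters are exactly its equivalence classes, which gives the first assertion.

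The only nontrivial point is transitivity. If $\abs{A_{ij}}>1-\eps_1^2$, then with $\sigma=\sign A_{ij}$ we get $\norm{\vu_i-\sigma\vu_j}^2=2-2\abs{A_{ij}}<2\eps_1^2$, and likewise $\norm{\vu_j-\tau\vu_k}^2<2\eps_1^2$ with $\tau=\sign A_{jk}$; the triangle inequality gives $\norm{\vu_i-\sigma\tau\vu_k}<2\sqrt2\,\eps_1$, hence $\sigma\tau A_{ik}=1-\tfrac12\norm{\vu_i-\sigma\tau\vu_k}^2>1-4\eps_1^2\ge\tfrac{63}{64}>\tfrac12$, so $i\sim k$. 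Given that $\sim$ is an equivalence relation, any class $C$ is a cluster (internal pairs satisfy $\abs{A_{ij}}>\tfrac12$ by definition of $\sim$; for $i\in C$, $j\notin C$ we have $i\not\sim j$, hence $\abs{A_{ij}}<\eps_0<\tfrac12$ by inactivity), and conversely any cluster equals the $\sim$-class of any of its members. Hence the clusters partition $[n]$.

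For the bound on the number of clusters $k$, choose one opinion $\vv_a=\vu_{i_a}$ from each cluster $C_a$; then $\abs{\langle\vv_a,\vv_b\rangle}<\eps_0$ for $a\ne b$. If $k\ge d+1$, then $\vv_1,\dots,\vv_{d+1}$ are linearly dependent in $\mathbb{R}^d$, so $\sum_a c_a\vv_a=0$ for some $c\ne0$. Expanding $0=\norm{\sum_a c_a\vv_a}^2$ and using $\abs{\langle\vv_a,\vv_b\rangle}<\eps_0$ together with $\sum_{a\ne b}\abs{c_a}\abs{c_b}\le\big(\sum_a\abs{c_a}\big)^2\le(d+1)\sum_a c_a^2$ gives $0\ge\big(1-(d+1)\eps_0\big)\sum_a c_a^2$, forcing $(d+1)\eps_0\ge1$, which contradicts $\eps_0\le\max(\eps_0,\eps_1^2)<\tfrac1{d(d+1)}$. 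Hence $k\le d$. (Equivalently one may apply Gershgorin's disc theorem to the Gram matrix of the $\vv_a$, which is $I+E$ with $E$ having vanishing diagonal and off-diagonal entries of absolute value below $\eps_0$.)

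I expect no serious obstacle here. The one place requiring care is the transitivity step, where the argument must pass through chordal distances rather than the correlations directly, and one loses a factor $4$ in going from $\eps_1^2$ to $4\eps_1^2$ -- which is precisely why the hypothesis is stated with a constant like $\tfrac1{256}$ and not, say, $\tfrac14$.
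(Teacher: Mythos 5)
Your proof is correct. The paper itself does not prove this lemma --- it is imported verbatim as Lemma~2.20 of \cite{ABH24} --- but your argument is the natural one and matches the paper's own toolkit: the transitivity step via chordal distances is exactly the content of \Cref{cl:consistent-signs} (which gives $|A_{j\ell}|\ge 1-(2\eps_1)^2$ for two near-aligned pairs), and the bound on the number of clusters via linear independence of pairwise almost-orthogonal unit vectors in $\mathbb{R}^d$ is the standard Gram-matrix argument.
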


\begin{claim}[Lemma 2.8 in~\cite{ABH24}]
\label{cl:consistent-signs}
Let $\eps^2_1< 1/4$.
If $\min(|A_{ij}|,|A_{i\ell}|)\ge 1-\eps_1^2$, then
$|A_{j\ell}|\ge 1-(2\eps_1)^2$ and
$\sign(A_{ij})=\sign(A_{i\ell})\sign(A_{j\ell})$.
In particular, $\sign(A_{ij})=\sign(A_{i\ell})\sign(A_{j\ell})$
whenever $i,j,\ell$ all lie in the same cluster
of an $(\eps_0,\eps_1)$-inactive configuration for
$\eps_1^2<1/4$.
\end{claim}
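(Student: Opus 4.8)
The plan is to reduce everything to elementary Euclidean geometry of the three unit vectors $\vu_i,\vu_j,\vu_\ell$, using only the polarization identity $\norm{\vec a\pm\vec b}^2=2\pm 2\bck{\vec a,\vec b}$ for unit vectors together with the triangle inequality. Write $\sigma=\sign(A_{ij})$ and $\tau=\sign(A_{i\ell})$; since $|A_{ij}|,|A_{i\ell}|\ge 1-\eps_1^2>3/4$, both $\sigma,\tau$ lie in $\{-1,+1\}$. The hypothesis then says precisely that $\vu_j$ is Euclidean-close to $\sigma\vu_i$ and $\vu_\ell$ is close to $\tau\vu_i$: indeed $\norm{\vu_i-\sigma\vu_j}^2=2-2\sigma A_{ij}=2-2|A_{ij}|\le 2\eps_1^2$, and similarly $\norm{\vu_i-\tau\vu_\ell}^2\le 2\eps_1^2$.

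First I would apply the triangle inequality to get $\norm{\sigma\vu_j-\tau\vu_\ell}\le\norm{\vu_i-\sigma\vu_j}+\norm{\vu_i-\tau\vu_\ell}\le 2\sqrt{2}\,\eps_1$. Squaring and using the polarization identity once more (here $\norm{\sigma\vu_j}=\norm{\tau\vu_\ell}=1$) gives $2-2\sigma\tau A_{j\ell}=\norm{\sigma\vu_j-\tau\vu_\ell}^2\le 8\eps_1^2$, that is, $\sigma\tau A_{j\ell}\ge 1-4\eps_1^2=1-(2\eps_1)^2$. Since $\sigma\tau\in\{-1,+1\}$ we have $|A_{j\ell}|\ge\sigma\tau A_{j\ell}\ge 1-(2\eps_1)^2$, which is the first claimed bound. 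For the sign identity, the assumption $\eps_1^2<1/4$ enters exactly here: it makes $1-4\eps_1^2>0$, so $\sigma\tau A_{j\ell}>0$ and hence $\sign(A_{j\ell})=\sigma\tau=\sign(A_{ij})\sign(A_{i\ell})$; multiplying both sides by $\sign(A_{i\ell})\in\{-1,+1\}$ rewrites this as $\sign(A_{ij})=\sign(A_{i\ell})\sign(A_{j\ell})$.

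For the ``in particular'' clause, recall the observation made right after \Cref{def:cluster}: in an $(\eps_0,\eps_1)$-inactive configuration, any two opinions in a common cluster satisfy $|A|>1-\eps_1^2$. So if $i,j,\ell$ all lie in one cluster the hypothesis $\min(|A_{ij}|,|A_{i\ell}|)\ge 1-\eps_1^2$ is met, and the first part yields the sign relation.

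I do not expect a genuine obstacle here: the argument is a two-line triangle-inequality estimate. The only thing to watch is the factor-of-two loss coming from the Euclidean triangle inequality, which is why the conclusion is $1-(2\eps_1)^2$ rather than something sharper; if one wanted a tighter constant one could instead work with geodesic angles and the spherical triangle inequality, but the crude bound is all that is needed later.
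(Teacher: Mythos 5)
Your proof is correct. The paper itself gives no proof of this claim (it is imported verbatim as Lemma~2.8 of \cite{ABH24}), and your triangle-inequality argument via $\norm{\vu_i-\sigma\vu_j}^2=2-2|A_{ij}|$ is the standard, essentially canonical way to establish it; every step, including the reduction of the sign identity to $\sigma\tau A_{j\ell}>0$ under $\eps_1^2<1/4$ and the appeal to the post-\Cref{def:cluster} observation for the ``in particular'' clause, checks out.
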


Next, we observe that one interaction
in a ``sufficiently inactive''
configuration does not change its
clusters.

\begin{lemma}
    \label{cor:bounded_step}
    Let $\cU$ be $(\eps_0,\eps_1)$-inactive
    with $\max(\eps_0,\eps_1^2)\le\frac{1}{4(2+\alpha)^2}$,
    and $\cU'$ reachable from $\cU$ in one step.
    Then, for every pair of agents \((i,j)\):
    \begin{itemize}
        \item if \(\abs{A_{ij}} < \epsilon_0\), then \(\abs{A_{ij}'} < 1/2\).
        \item if \(\abs{A_{ij}} > 1-\epsilon^2_1\), then \(\sign (A_{ij}') = \sign (A_{ij})\) and \(\abs{A_{ij}'} > 1/2\).
    \end{itemize}
In particular, $\cU$ and $\cU'$ have the same clusters
and $\sign (A'_{ij})=\sign (A_{ij})$ for every $i,j$
with $|A_{ij}|>1-\eps_1^2$.
\end{lemma}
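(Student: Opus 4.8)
The plan is to isolate the single opinion that the interaction moves and then control each correlation it touches by a direct computation, using \Cref{cl:consistent-signs} only to rule out sign reversals. Write $(i_0,j_0)$ for the chosen interaction, so that $\cU'$ agrees with $\cU$ except that the opinion of $i_0$ becomes $\vec w/\norm{\vec w}$ with $\vec w = \vu_{i_0}+\alpha A_{i_0 j_0}\vu_{j_0}$, as in \eqref{eq:06}. Every pair $(i,j)$ with $i_0\notin\{i,j\}$ has $A'_{ij}=A_{ij}$, and for such pairs both bullets are immediate, since the hypothesis on $\max(\eps_0,\eps_1^2)$ gives $\eps_0<1/2$ and $1-\eps_1^2>1/2$. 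So I may assume the pair is $(i_0,\ell)$ (the pair $(\ell,i_0)$ being identical by symmetry of $A$), and I would first record the two identities
\[
A'_{i_0\ell}=\frac{A_{i_0\ell}+\alpha A_{i_0 j_0}A_{j_0\ell}}{\norm{\vec w}},\qquad
\norm{\vec w}^2 = 1+\alpha(2+\alpha)A_{i_0 j_0}^2 ,
\]
which in particular give $1\le\norm{\vec w}\le 1+\alpha$. Everything then reduces to estimating the numerator, split according to whether $\abs{A_{i_0\ell}}<\eps_0$ or $\abs{A_{i_0\ell}}>1-\eps_1^2$ (exhaustive by inactivity).

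In the first case, $\abs{A_{i_0\ell}}<\eps_0$, I would show $\abs{A_{i_0 j_0}}\cdot\abs{A_{j_0\ell}}<\eps_0$: by inactivity each factor lies below $\eps_0$ or above $1-\eps_1^2$, and if both were above $1-\eps_1^2$ then \Cref{cl:consistent-signs} with common index $j_0$ would force $\abs{A_{i_0\ell}}\ge 1-(2\eps_1)^2>3/4$, contradicting $\abs{A_{i_0\ell}}<\eps_0$. Hence one factor is below $\eps_0$ and the other is at most $1$, so using $\norm{\vec w}\ge 1$,
\[
\abs{A'_{i_0\ell}}\le \abs{A_{i_0\ell}}+\alpha\abs{A_{i_0 j_0}A_{j_0\ell}}<(1+\alpha)\eps_0\le\frac{1+\alpha}{4(2+\alpha)^2}<\frac14 ,
\]
which is the claimed bound.

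In the second case, $\abs{A_{i_0\ell}}>1-\eps_1^2$, I would split on $\abs{A_{i_0 j_0}}$. If $\abs{A_{i_0 j_0}}<\eps_0$, then the added term $\alpha A_{i_0 j_0}A_{j_0\ell}$ has magnitude at most $\alpha\eps_0$, which is far below $\abs{A_{i_0\ell}}$, so the numerator keeps the sign of $A_{i_0\ell}$, and since $\norm{\vec w}^2\le 1+\alpha(2+\alpha)\eps_0^2$ is close to $1$, a short computation with the bound on $\max(\eps_0,\eps_1^2)$ gives $\abs{A'_{i_0\ell}}>1/2$ with the sign preserved. The remaining subcase $\abs{A_{i_0 j_0}}>1-\eps_1^2$ is the heart of the lemma and the step I expect to be the main obstacle: here $\norm{\vec w}$ can be as large as $\approx 1+\alpha$, so bounding $\abs{A'_{i_0\ell}}$ from below by $\abs{A_{i_0\ell}}/\norm{\vec w}$ is useless for large $\alpha$, and one must exploit that the added term is itself of order $\alpha$ and points the right way. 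Concretely, \Cref{cl:consistent-signs} with common index $i_0$ gives $\abs{A_{j_0\ell}}\ge 1-(2\eps_1)^2$ — hence $\abs{A_{j_0\ell}}>1-\eps_1^2$ by inactivity — together with $\sign(A_{i_0 j_0})=\sign(A_{i_0\ell})\sign(A_{j_0\ell})$, which rearranges to $\sign(A_{i_0 j_0}A_{j_0\ell})=\sign(A_{i_0\ell})$. Thus the added term is aligned with $A_{i_0\ell}$, the numerator equals $\abs{A_{i_0\ell}}+\alpha\abs{A_{i_0 j_0}}\abs{A_{j_0\ell}}\ge(1-\eps_1^2)+\alpha(1-\eps_1^2)^2$ and carries the sign of $A_{i_0\ell}$, and dividing by $\norm{\vec w}\le\sqrt{1+\alpha(2+\alpha)}=1+\alpha$ and simplifying yields $\abs{A'_{i_0\ell}}\ge(1-\eps_1^2)^2>1/2$, with $\sign(A'_{i_0\ell})=\sign(A_{i_0\ell})$ — both halves of the second bullet. (Throughout, \Cref{cl:consistent-signs} applies because $\eps_1^2<1/4$.)

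For the ``in particular'' statement, the two bullets, combined with the fact that in an $(\eps_0,\eps_1)$-inactive configuration every correlation is either $<\eps_0$ or $>1-\eps_1^2$, give the equivalence $\abs{A_{ij}}>1/2\iff\abs{A'_{ij}}>1/2$ for every pair $(i,j)$. Since \Cref{def:cluster} refers only to which pairs have absolute correlation exceeding $1/2$, the configurations $\cU$ and $\cU'$ have exactly the same clusters; and $\sign(A'_{ij})=\sign(A_{ij})$ for pairs with $\abs{A_{ij}}>1-\eps_1^2$ was shown in the second case above and is trivial when $i_0\notin\{i,j\}$.
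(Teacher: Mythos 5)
Your proof is correct and follows essentially the same route as the paper's: apply \eqref{eq:new-aij}, case-split using inactivity, and invoke \Cref{cl:consistent-signs} both to rule out two large factors in the small-correlation case and to align the added term with $A_{i_0\ell}$ in the large-correlation case. The only cosmetic difference is that you split the second bullet on $\abs{A_{i_0j_0}}$ alone (bounding the perturbation by $\alpha\eps_0$ rather than $\alpha\eps_0^2$), whereas the paper splits on $\max(\abs{A_{i\ell}},\abs{A_{j\ell}})$; both versions of the estimate go through under the stated hypothesis on $\max(\eps_0,\eps_1^2)$.
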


\Cref{cor:bounded_step} is proved in \Cref{app:bounded-step}.
Let $\epsbase$ be such that all the results stated above hold, i.e., 
$\epsbase=\min\left(\frac{1}{256},\frac{1}{2d(d+1)},
\frac{1}{4(2+\alpha)^2}\right)$. 
Accordingly, \Cref{lem:cluster}, \Cref{cl:consistent-signs}, and \Cref{cor:bounded_step}
apply to all 
$(\epsbase,\epsbase)$-inactive configurations.
Furthermore, whenever we will be discussing
$(\eps_0,\eps_1)$-inactive configurations, we will always
be assuming $\max(\eps_0,\eps_1^2)\le\epsbase$, even if this is 
not stated explicitly.

Note that $\epsbase$ depends on $d$ and $\alpha$.
In the following, all constants,
as well as implicit constants in the big O notation
are allowed to depend on $n,d,\alpha$.

\subsection{Plan of the proof of \Cref{thm:technical-informal}}
\label{sec:plan}

Let $\cU$ be an $(\eps_0,\eps_1)$-inactive
configuration with clusters $S_1,\ldots, S_k$.
We let
\begin{align}
\label{eq:def-delta}
\delta_0(\cU)&=
\max_{\substack{i\in S_a,j\in S_b\\1\le a< b\le k}} |A_{ij}|\;,
&\delta_1(\cU)&=
\max_{\substack{i,j\in S_a\\ 1\le a\le k}} 
\sqrt{1-|A_{ij}|}\;.
\end{align}
Furthermore, let
\begin{align}
\label{eq:def-q}
Q_0(\cU)&=-\log\delta_0(\cU)\;,&
Q_1(\cU)&=-\log\delta_1(\cU)\;.
\end{align}
So, $\cU$ is $(\eps_0,\eps_1)$-inactive if and only if
$\delta_0(\cU)<\eps_0$ and $\delta_1(\cU)<\eps_1$ or,
equivalently, if $Q_0(\cU)>-\log\eps_0$ and $Q_1(\cU)>-\log\eps_1$.
Furthermore, there exist $i,j$ such that 
$0<|A_{ij}|<\eps_0$ if and only if $Q_0(\cU)<\infty$.
On the other hand,
$Q_1(\cU)\in (0,\infty]$, but the fact that it can
be infinite will not cause any problems.
(Intuitively, $Q_1(\cU^t)=\infty$ is good as we want
to show that there exists a time when $Q_0(\cU^t)\le-\log\eps$ and $Q_1(\cU^t)>-\log\eps_1$.)

Given an initial configuration $\cU^0$, we 
define a random process 
$\delta_0(t)=\delta_0(\cU^t)$ and similarly for
$\delta_1,Q_0,Q_1$. We can now restate \Cref{thm:technical-informal}
using the new notation. It should be clear that the following statement
implies \Cref{thm:technical-informal}:

\begin{theorem}
\label{thm:inactive-unstable-restated}
Given $n,d,\alpha$, there exist some
$\eps_1>\eps>0$ and a natural number $T$ 
such that the following holds:

Let $\cU^0$ be an $(\eps,\eps)$-inactive 
configuration, i.e., it satisfies
$Q_0(0)>-\log\eps$ and $Q_1(0)>-\log\eps$.
Furthemore, assume that $Q_0(0)<\infty$.

Then, almost surely there exists the smallest
nonnegative integer $t$ such that
the configuration remains 
$(\epsbase,\epsbase)$-inactive until time $tT$ and either
$Q_0(tT)\le-\log\eps$ or
$Q_1(tT)\le-\log\eps_1$.
Furthermore, with probability
at least 0.7, it holds
$Q_1(tT)>-\log\eps_1$.
\end{theorem}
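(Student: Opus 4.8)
The plan is to follow the two order parameters $\delta_0(t):=\delta_0(\cU^t)$ and $\delta_1(t):=\delta_1(\cU^t)$ (equivalently $Q_0,Q_1$) and to show that $\delta_0$ has a persistent upward drift, so that it reaches the level $\eps$ almost surely, while with good probability $\delta_1$ has not yet reached $\eps_1$ by that time. The first ingredient is a handful of deterministic one-step estimates, valid as long as the configuration remains $(\epsbase,\epsbase)$-inactive (so that, by \Cref{cor:bounded_step}, the clusters $S_1,\dots,S_k$ are frozen): (a) $\delta_0$ never grows by more than a factor $1+\alpha$ in a step, and --- the delicate point, discussed below --- never shrinks by more than a controlled factor either; (b) there is an \emph{amplifying} interaction --- pick a between-cluster pair $(i^\ast,\ell^\ast)$ attaining $\delta_0$ and let $\ell^\ast$ influence $i^\ast$ --- after which $\abs{A_{i^\ast\ell^\ast}}$ is multiplied by $(1+\alpha)/\norm{\vec w}\ge 1+c$ for a constant $c=c(\alpha,\epsbase)>0$, since $\norm{\vec w}^2=1+(2\alpha+\alpha^2)A_{i^\ast\ell^\ast}^2\le 1+O(\epsbase^2)$; (c) a between-cluster interaction raises $\delta_1^2$ by at most $O(\delta_0^2)$, while within-cluster interactions are contracting (using \Cref{cl:consistent-signs} and the fact that, up to higher-order terms, the update acts as a convex combination on the components of the opinions orthogonal to the cluster direction). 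Combined with $\eps$ being chosen small relative to $T,\alpha,\epsbase$, these estimates show that an $(\eps,\eps)$-inactive configuration stays $(\epsbase,\epsbase)$-inactive, with unchanged clusters, at all times up to and including the first block time $tT$ with $\delta_0(tT)\ge\eps$ or $\delta_1(tT)\ge\eps_1$. Hence the theorem reduces to: (i) such a block time exists almost surely, and (ii) with probability at least $0.7$, at the first such time one has $\delta_1(tT)<\eps_1$.

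For (i), I would argue that, on the event that the process has not yet escaped, the block-indexed process $\log\delta_0(tT)$ has increments bounded below and conditional drift bounded below by a positive constant $\mu$: with probability at least $n^{-2T}$ every interaction in a block is the amplifying one of estimate~(b), contributing at least $T\log(1+c)$, while estimate~(a) caps the possible loss over a block; taking $T$ large enough (depending on $n$) relative to the fixed $\epsbase$ makes the net conditional expected change positive. A process with positive drift and increments bounded below leaves any half-line $(-\infty,\log\eps]$ in finite time almost surely --- e.g.\ by applying an exponential-supermartingale (Azuma-type) argument to the process stopped at the escape time --- which gives (i); the "unchanged clusters" and "$(\epsbase,\epsbase)$-inactive at time $tT$" assertions are exactly the deterministic bounds of the previous paragraph.

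For (ii), I would bound $\delta_1$ at the escape time. Iterating estimate~(c), $\delta_1^2(tT)\le\delta_1^2(0)+O\!\big(\sum_{s< tT}\delta_0(s)^2\big)$, the within-cluster contractions only helping. The key point is that until the escape $\delta_0$ stays below $\eps$ and, because $\log\delta_0$ has positive drift, it does not linger near $0$: partitioning $(0,\eps]$ into dyadic scales and invoking the standard fact that a real-valued process with positive drift and bounded increments has $O(1)$ expected occupation time of any interval of fixed width, one obtains $\expec\big[\sum_{s<\tau}\delta_0(s)^2\big]=O(\eps^2)$, where $\tau$ is the escape step. Markov's inequality then gives $\prob[\,\delta_1^2\text{ reaches }\eps_1^2\text{ before }\delta_0\text{ reaches }\eps\,]\le\big(\delta_1^2(0)+O(\eps^2)\big)/\eps_1^2$; fixing $\eps_1$ (say $\eps_1^2=\epsbase/2$) and then choosing $\eps$ sufficiently small makes this at most $0.3$. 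On the complementary event the first escape is caused by $\delta_0\ge\eps$, so $Q_1(tT)>-\log\eps_1$ there, proving~(ii).

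I expect the main obstacle to be turning the one-step estimates~(a)--(b) into a genuine positive drift for $\delta_0$, i.e.\ controlling its downward fluctuations. The quantity $\delta_0$ is a maximum over $\Theta(n^2)$ pairs, and an interaction touching the maximizing pair can a priori almost cancel its correlation, with the error terms in the update --- of size $O(\delta_0)$ and arising from the within-cluster spread, which may itself be much larger than $\delta_0$ --- not automatically negligible; reconciling this with the fact that $\epsbase$ is fixed (independent of $n$) is exactly what forces the block length $T$ to grow with $n$ and what requires tracking a carefully chosen between-cluster quantity --- a fixed well-chosen pair, or the full profile of between-cluster correlations --- rather than the bare maximum, exploiting that the very interactions that inflate the within-cluster spread also inflate between-cluster correlations. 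Carrying out this bookkeeping, together with fixing the chain of dependencies among $\eps,\eps_1,\epsbase$ and $T$, is the technical heart of the proof; the within-cluster contraction statement underlying estimate~(c) is a second, more routine, source of technical work.
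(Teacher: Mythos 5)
Your step (i) — the positive drift for $\log\delta_0$ over a block — does not close as written, and this is precisely the point where the paper has to do its real work. You propose to get the drift by noting that with probability $n^{-2T}$ the whole block consists of amplifying interactions (gain $T\log(1+c)$) while estimate (a) "caps the possible loss over a block." But the only cap available a priori is the one-step bound $\delta_0(t+1)\ge\delta_0(t)/(2(1+\alpha))$ (\cref{lem:q0-bounded}), so the worst-case loss over a block is $T\log(2(1+\alpha))$, linear in $T$, while the good event has probability exponentially small in $T$; increasing $T$ makes the expected change \emph{more} negative, not positive. The obstruction is not merely bookkeeping: with clusters of size $\ge 2$ there are interactions occurring with probability $\Theta(n^{-2})$ per step that shrink the maximizing between-cluster correlation by a constant factor (e.g.\ $i,j\in S_a$ with $A_{i\ell}\approx -A_{j\ell}$ for $\ell$ in another cluster, and $j$ influences $\ell$: the two contributions nearly cancel). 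You flag this yourself as "the main obstacle" and "the technical heart," but the proposal does not supply the mechanism that resolves it. The paper's resolution is the notion of an $(a,b)$-\emph{consistent} configuration (\eqref{eq:23}): one shows (\cref{lem:consistent-reachable}) that from any inactive configuration a fixed sequence of $K=O(1)$ interactions makes the pair of clusters realizing $\delta_0$ consistent with all relevant correlations $\ge\cact\delta_0$; that consistency then persists for $T$ steps with $\deltars$ losing at most a total factor $(1-c)$, because in the consistent regime every decrease is only $O(\delta_0^2)$ (\cref{cl:aij-consistent}, \cref{lem:remain-consistent}); and that each sub-block of length $K$ then multiplies $\deltars$ by $(1+\cadv)$ with probability $\ge n^{-2K}$ (\cref{lem:increase-delta-min}). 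Only after this reduction does the "tiny loss versus occasional constant-factor gain" computation of \cref{lem:q0-expectation} produce~\eqref{eq:22}. Without an argument of this kind, both your (i) and the occupation-time bound underlying your (ii) are unsupported.

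For the record, your step (ii) is a genuinely different (and in principle cleaner) route than the paper's: you would bound $\expec\bigl[\sum_{s<\tau}\delta_0(s)^2\bigr]=O(\eps^2)$ via occupation times of dyadic scales and apply Markov to the deterministic bound $\delta_1^2(t)\le\delta_1^2(0)+O\bigl(\sum_{s<t}\delta_0(s)^2\bigr)$, whereas the paper instead proves a separate upward drift for $Q_1$ in the regime $Q_1\le Q_0-C$ (\cref{lem:q1-expectation}), the lower bound $Q_1(t+1)\ge\min(Q_0(t),Q_1(t))-O(1)$, and then a general two-process theorem (\cref{thm:random-process}) combining occupation-time bounds $N_\ell\le O(\ell)$ with a union bound over "dangerous segments." Your route avoids \cref{lem:q1-expectation} entirely, but it inherits the same unproved drift input from (i), and you would still need to check that the drift estimate applies throughout the region $\{\delta_0<\eps,\ \delta_1<\eps_1\}$ (your choice $\eps_1^2=\epsbase/2$ is likely too large for the inactivity hypotheses of the drift lemmas); the paper's careful chain $\eps\ll\eps_1\le\epsbase$ exists exactly to handle this.
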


Note that, in \Cref{thm:inactive-unstable-restated},
we state that the configuration remains
$(\epsbase,\epsbase)$-inactive over the whole time
from $0$ up to and including $tT$. 
By the previous considerations, that
implies that the clusters remain the same over that
time, and in particular that $Q_0$ and $Q_1$ are always
well-defined with respect to the same set of clusters.

How should we go about proving \Cref{thm:inactive-unstable-restated}?
As a first try, one could hope that there exists some fixed $K$ such that, 
for every $\cU^0$ which is 
$(\eps,\eps)$-inactive, there is a sequence of $K$ interactions such that
$\cU^K$ becomes active. If that holds, then at every step,
independently of the past, we would have a constant positive
probability of becoming active in the following $K$ steps.
That easily implies that the sequence~$(\cU^t)_t$ becomes active
almost surely. 

Perhaps surprisingly, such a property can be proved
in the case of $d=2$. However, for $d\ge 3$ it is false, 
that is, for every $\eps>0$ and $K$,
there exists an $(\eps,\eps)$-inactive configuration that requires more than $K$ steps
to become active. Both of these facts are discussed in more detail in~\cite{ABH24}.

With this optimistic approach having failed, 
it is natural to turn to potential functions.
For example, we can consider
\begin{align}
\delta'(t)&=\sum_{i,j=1}^n \left(A_{ij}^t\right)^2\;.
\end{align}
It is easy to check
that $\delta'(t)\le n^2$ with the equality achieved exactly for polarized
configurations.
While we are not aware of a proof, empirically it appears that
\begin{align}
\label{eq:21}
\expec [\delta'(t+1)\;\vert\;\cU^t]\ge\delta'(t) 
\end{align}
holds for every configuration. 
If that is true, one might hope that
$\lim_{t\to\infty} \delta'(t)=n^2$ holds almost surely, which implies
polarization. However, of course
$|\delta'(t+1)-\delta'(t)|$ can (and will) be arbitrarily small
for inactive configurations. Therefore, even if we proved~\eqref{eq:21},
it is not clear that it would be sufficient for our purposes\footnote{
As an illustration, consider the following simple example. Let $X_0=1/2$
and $X_{t+1}=X_t+B_t\cdot\min(X_t,1-X_t)/2$, where $(B_t)_t$ is i.i.d.~sequence
uniform in $\{-1,1\}$. Since $(X_t)_t$ is a bounded martingale, by a standard application
of martingale theory,
there exists $X$ such that $X=\lim_{t\to\infty} X_t$
holds almost surely, 
and
furthermore $X\sim\Ber(1/2)$.

Now take $Y_t=X_t^2$. By a simple calculation, it holds $\expec[Y_{t+1}\;\vert\;Y_t]>Y_t$.
One might optimistically hope
that almost surely $\liminf_{t}Y_t\ge Y_0=1/4$.
However, this
is contradicted by the fact that
$\Pr[\lim_t Y_t=0]=\Pr[\lim_t X_t=0]=1/2$.
}.

That is the reason for ``taking the logs'' and tracking the quantities $Q_0(t)$
and $Q_1(t)$. Then, we can hope for these random processes to behave
in a comparable way to
random walks with a bias
bounded away from zero. For example, as we want the between-cluster correlations
to increase, ideally we would like $\expec[Q_0(t+1)\;\vert\;\cU^t]\le Q_0(t)-c$
for some constant $c>0$. However, the situation is not so simple,
and it is not hard to find examples where $\expec[Q_0(t+1)\;\vert\;\cU^t]>Q_0(t)$.

A natural workaround to this problem is to hope
that the random process behaves more smoothly
over longer timescales. Accordingly, we can try
showing that
\begin{align}
\label{eq:22}
\expec[Q_0(t+T)\;\vert\;\cU^t]\le Q_0(t)-c
\end{align}
holds for some large (but fixed) value of $T$.
Indeed, with a considerable effort,
we establish such a property.

To understand why~\eqref{eq:22} holds, it is instructive to consider
an inactive configuration where all clusters consist of only one opinion.
In that case, it is possible to show~\eqref{eq:22} by 
implementing the
following sketch:
Elementary calculations show that any interaction where $j$ influences $i$
increases their absolute correlation from $|A_{ij}|$ to at least 
$(1+c)|A_{ij}|$ for some 
fixed $c>0$. Hence, $-\log|A_{ij}|$ decreases by at least
$\log(1+c)$.
On the other hand, as all other opinions are almost orthogonal to $i$
and $j$, it can be established that 
any other correlation $A_{i\ell}$ changes by at most
$O(\delta_0(t)^2)$. For large enough $T$, with high probability, 
the pair of opinions that realizes $\delta_0(t)$ will interact
at least once, and therefore $\delta_0(t+T)\ge (1+c/2)\delta_0(t)$
(where $c/2$ accounts for the 
$O(\delta_0^2)$ factors)
and 
$Q_0(t+T)\le Q_0(t)-\log(1+c/2)$.

However, the situation can be
more complicated for configurations
with larger clusters. For instance, 
if $i,j$ are in one cluster
and $\ell$ in another, with, say, $A_{i\ell}\approx \eps$ and $A_{j\ell}\approx-\eps$,
then the effects of $i$ influencing~$\ell$ and,
subsequently, $j$ influencing $\ell$
may ``cancel out''. Furthermore, interactions between $i$
and~$j$ will bring them closer together, which might have
incidental effect of decreasing $\delta_0$, equivalently
increasing~$Q_0$. A direct analysis of a general situation
seems  complicated.

Instead, we propose the following notion: Consider an inactive configuration
and two of its clusters $S_a,S_b$.
We call the configuration \emph{$(a,b)$-consistent}
if, for every $i,i'\in S_a$ and $j,j'\in S_b$, it holds
\begin{align}
\label{eq:23}
\sign(A_{i'j'})=\sign(A_{ii'})\sign(A_{ij})\sign(A_{jj'})\;.
\end{align}
For example, a configuration where 
$A_{ij}>0$ for every $i,j\in S_a\cup S_b$ is $(a,b)$-consistent.
A consistent configuration has the property that
all interactions between $S_a$ and $S_b$, as well as inside
$S_a$ and $S_b$, tend to increase (or at least not decrease)
the absolute correlations between $S_a$ and $S_b$.
In that sense, the notion of consistency is a useful
generalization of the cluster-size-one scenario.

In \Cref{sec:q0-expectation}, we prove the following useful properties of consistent configurations. At some time 
$t$, let $S_a$ and $S_b$
be the clusters realizing $\delta_0(t)$, i.e.,
$\delta_0(t)=\max_{i\in S_a,j\in S_b}|A_{ij}^t|$.
First, perhaps surprisingly, a careful argument
shows that there is a fixed $K$
such that, for any inactive configuration, there exists
a sequence of $K$ interactions which makes it 
$(a,b)$-consistent. Therefore, an inactive configuration
becomes $(a,b)$-consistent in $O(1)$ steps in expectation.
Second, for some choice of $T=T_0$, \eqref{eq:22} holds for  
every $(a,b)$-consistent configuration.
This is proved using similar ideas as described above
for the cluster-size-one scenario, and using~\eqref{eq:23}. In particular, the only interactions
that can decrease absolute correlations between $S_a$
and $S_b$ must involve a third cluster, and they can
change $\delta_0(t)$ only by $O(\delta_0(t)^2)$.
That leads to the third useful property:
Once an inactive configuration becomes $(a,b)$-consistent,
it must remain so for a long time, essentially
$\Omega(-\log \delta_0(t))$.

These three properties imply~\eqref{eq:22}
for any $(\eps,\eps)$-inactive configuration,
by taking $T$ to be a large multiple of $K+T_0$:
A configuration will have enough time to become
consistent with high probability, 
and once it becomes consistent
it remains
consistent, ``accumulating'' the bias
every $T_0$ steps.

\medskip

The bias property in~\eqref{eq:22},
together with the fact that
$|Q_0(t+1)-Q_0(t)|\le O(1)$,
suffices to conclude that, almost surely,
starting from an $(\eps,\eps)$-inactive configuration,
eventually it holds $\delta_0(t)\ge\eps$
(equivalently $Q_0(t)\le-\log\eps$).
However, recall that we wish to show more than that: We want $Q_0(t)\le-\log\eps$
to occur \emph{before} $Q_1(t)\le-\log\eps_1$,
with constant probability.
Again, we can start from an optimistic hypothesis and try showing
\begin{align}
\label{eq:24}
\expec[Q_1(t+T)\;\vert\;\cU^t]\ge Q_1(t)+c
\end{align}
(or more generally $\expec[Q_1(t+T)-Q_0(t+T)\;\vert\;\cU^t]>Q_1(t)-Q_0(t)+c$).
However, yet again this is false: One can even construct
pathological examples where $Q_1(t)=\infty$ and $Q_1(t+1)<\infty$,
so seemingly there is no control at all over the magnitude
of change of $Q_1(t)$.

Analyzing the problem, it turns out that the cases
with large change in $Q_1(t)$
arise only when $\delta_1(t)\ll \delta_0(t)$,
equivalently $Q_1(t)\gg Q_0(t)$. When $\delta_1(t)$ 
is much smaller than $\delta_0(t)$, then
an interaction between two clusters can induce a (relatively)
large change in $\delta_1(t)$, in other words the change
in $Q_1(t)$ can be determined more by the between-cluster
correlations than the inside-cluster correlations.
To make this observation precise, we 
establish the bound
$Q_1(t+1)\ge\min(Q_0(t),Q_1(t))-O(1)$.

We then show that \eqref{eq:24} holds
in the opposite case when $\delta_1(t)$
is sufficiently larger than $\delta_0(t)$, i.e.,
$Q_1(t)\le Q_0(t)-C$ for a certain fixed $C>0$.
Intuitively, this is because for $\delta_1(t)\gg\delta_0(t)$,
the interactions between clusters can induce only
(relatively) small change in $\delta_1$,
and the interactions inside clusters only
make the clusters ``tighter'', decreasing $\delta_1$
and increasing $Q_1$.

As a result, we can try the following strategy:
Choose $\eps_1> \eps$ such that
$-\log\eps_1\ll -\log\eps-C$. Then, as long as $Q_0(t)\ge-\log\eps$,
whenever $Q_1(t)$ becomes somewhat close to the threshold
$-\log\eps_1$, the condition $Q_1(t)\le Q_0(t)-C$ holds.
Therefore, \eqref{eq:24} applies and we can hope that
the positive bias will tend to prevent $Q_1(t)$ from crossing
$-\log\eps_1$.

\medskip

All in all, our proof is divided into two parts.
First, in \Cref{sec:q0-q1-properties} and \Cref{sec:q0-expectation}
we show that the random processes $Q_0$ and $Q_1$ satisfy the properties explained
above. This is summed up in the following lemma:
\begin{lemma}
\label{lem:p0-p1-properties}
There exist $\eps>0$, $C\ge 1$ and positive integer $T$ such that:

Let $P_0(t)=Q_0(tT)$ and $P_1(t)=Q_1(tT)$. Whenever the configuration
$\cU^{tT}$ is $(\eps,\eps)$-inactive
with $P_0(t)<\infty$, then
$\cU^{t'}$ remains $(\epsbase,\epsbase)$-inactive
for $tT\le t'\le (t+1)T$ and
\begin{align}
|P_0(t+1)-P_0(t)|&\le C\;,
\label{eq:p0-bounded}\\
\expec\left[P_0(t+1)\;\vert\;\mathcal{U}^{tT}\right]
&\le P_0(t)-1/C
\label{eq:p0-martingale}\;.\\
P_1(t+1)&\ge\min(P_0(t),P_1(t))-C\;.\label{eq:p1-lower-bound}
\end{align}
Furthermore, if $P_1(t)\le P_0(t)-C$, then
\begin{align}
|P_1(t+1)-P_1(t)|&\le C\;,
\label{eq:p1-bounded}\\
\expec[ P_1(t+1)\;\vert\;\cU^{tT}]&\ge P_1(t)+1/C\;.
\label{eq:p1-martingale}
\end{align}
\end{lemma}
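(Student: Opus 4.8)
The plan is to establish the six estimates by reducing everything to the behavior of $\delta_0$ and $\delta_1$ under a single interaction, and then aggregating over the $T$ interactions that make up one ``super-step''. First I would record the \emph{one-step} estimates, which are the building blocks: for any interaction in an $(\eps_0,\eps_1)$-inactive configuration (with parameters below $\epsbase$), elementary trigonometry applied to \eqref{eq:06} gives that the absolute correlation $|A_{ij}|$ of the interacting pair either grows multiplicatively (when it is $\le\eps_0$, we get $|A'_{ij}|\ge(1+c)|A_{ij}|$ for a fixed $c=c(\alpha)>0$) or, when $|A_{ij}|\ge 1-\eps_1^2$, the ``gap'' $1-|A_{ij}|$ shrinks multiplicatively; and for any \emph{other} pair, the correlation moves by at most $O(\delta_0)$ in absolute terms, in fact by $O(\delta_0^2)$ when the third opinion involved is almost orthogonal to both interacting opinions. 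Combined with \Cref{cor:bounded_step}, this gives that the cluster structure is frozen for a whole super-step of length $T$ (once $T\cdot(\text{worst-case one-step drift})$ is still within the slack between $\eps$ and $\epsbase$), which is exactly the claim that $\cU^{t'}$ stays $(\epsbase,\epsbase)$-inactive for $tT\le t'\le(t+1)T$; this also makes $Q_0,Q_1$ well-defined throughout, and the multiplicative one-step bounds immediately yield the Lipschitz-type estimates \eqref{eq:p0-bounded} and \eqref{eq:p1-bounded} (with $C$ a large multiple of $T\log(1+c)^{-1}$ etc.), as well as \eqref{eq:p1-lower-bound} after checking the ``$Q_1$ controlled by $\min(Q_0,Q_1)$'' phenomenon: a single between-cluster interaction can only drag $1-|A_{ij}|$ (for an inside-cluster pair) down to roughly $\delta_0^2$, never below, so $Q_1(t+1)\ge\min(Q_0(t),Q_1(t))-O(1)$, and iterating $T$ times costs only another $O(1)$.

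Next I would prove the negative-drift estimate \eqref{eq:p0-martingale}, which is the heart of the lemma and where I expect to spend almost all of the effort. Here I would invoke the three properties of consistent configurations flagged in \Cref{sec:plan}: (i) from any inactive configuration there is a fixed-length ($K$ steps) sequence of interactions producing $(a,b)$-consistency for the clusters $S_a,S_b$ realizing $\delta_0$; (ii) there is $T_0$ such that every $(a,b)$-consistent configuration satisfies $\expec[Q_0(t+T_0)\mid\cU^t]\le Q_0(t)-c'$; and (iii) $(a,b)$-consistency, once attained, persists for $\Omega(-\log\delta_0)$ steps because the only interactions that can \emph{decrease} the between-cluster correlation involve a third cluster and move $\delta_0$ by only $O(\delta_0^2)$, which over any bounded window is negligible compared to the multiplicative gains. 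Granting these (they are the content of \Cref{sec:q0-expectation}), I would choose $T$ to be a large multiple of $K+T_0$: with probability bounded below by a constant the random walk on interactions realizes the $K$-step ``become consistent'' pattern early, then accumulates $\Theta(T/T_0)$ independent negative-drift increments of size $c'$ before consistency can be lost; on the complementary event I use \eqref{eq:p0-bounded} to control the damage by $+C$. Tuning $T$ large enough that the guaranteed negative contribution beats the rare positive excursion gives $\expec[P_0(t+1)\mid\cU^{tT}]\le P_0(t)-1/C$.

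Finally, the positive-drift estimate \eqref{eq:p1-martingale} under the hypothesis $P_1(t)\le P_0(t)-C$: here $\delta_1(t)$ is much larger than $\delta_0(t)$, so I would argue that within one super-step (i) every inside-cluster interaction shrinks some gap $1-|A_{ij}|$ multiplicatively, hence tends to increase $Q_1$; (ii) the pair realizing $\delta_1(t)$ interacts with constant probability within $T$ steps when $T$ is large, giving a fixed multiplicative shrink of $\delta_1$; and (iii) between-cluster interactions perturb $1-|A_{ij}|$ for an inside-cluster pair by only $O(\delta_0)\ll\delta_1^2$ (this is where the gap $C$ between $P_1$ and $P_0$ is used, via $\delta_0\le e^{-C}\delta_1$ so $\delta_0/\delta_1^2$ is tiny), so they cannot undo the gain. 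Averaging gives $\expec[Q_1(t+T)\mid\cU^{tT}]\ge Q_1(t)+c''$, i.e.\ \eqref{eq:p1-martingale}; combined with the already-established \eqref{eq:p1-bounded} this closes the lemma after relabeling constants so that a single $C$ and a single $T$ serve all six inequalities. The main obstacle, as anticipated in the outline, is property (ii)/(iii) for consistent configurations feeding into \eqref{eq:p0-martingale} — controlling the cancellation between between-cluster interactions of opposite sign when clusters have size greater than one — and the delicate choice of $T$ large enough to dominate the (genuinely possible) short-term positive excursions of $Q_0$.
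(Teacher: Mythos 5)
Your plan follows essentially the same route as the paper: one-step multiplicative control of $\delta_0$ and $\delta_1$ (the paper's \Cref{lem:q0-bounded}, \Cref{lem:q1-lower-bound}, \Cref{lem:q1-bounded}, aggregated over a window via \Cref{cor:bounded-q0-q1}), the consistency machinery for the negative drift of $Q_0$ (\Cref{sec:q0-expectation}), and the inside-cluster contraction of \Cref{lem:strictly-convex-increasing} for the positive drift of $Q_1$ (\Cref{lem:q1-expectation}); the lemma itself is then the routine passage from these $T$-step estimates to the processes $P_0,P_1$. Two points in your accounting need repair, though neither changes the architecture.

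First, for \eqref{eq:p0-martingale} your bookkeeping --- ``with probability bounded below by a constant the $K$-step pattern is realized early; on the complementary event control the damage by \eqref{eq:p0-bounded}'' --- does not close as stated: the event that the specific consistency-producing sequence is \emph{not} realized in a given block has probability $1-n^{-2K}$, close to $1$, and its worst-case damage is $O(T)$, which scales with the window, so no choice of $T$ makes the guaranteed gain dominate that product. The paper instead bounds the \emph{expected} waiting time $\expec W\le n^{2K}$ until consistency (geometric trials), so the expected damage accumulated before consistency is $O(1)$ independently of $T$, and then uses that \emph{every} subsequent $K$-block has negative conditional expected drift (inequality \eqref{eq:85}), so the gains accumulate linearly in $T$ in expectation. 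Second, in your argument for \eqref{eq:p1-martingale} the perturbation of an inside-cluster correlation by a between-cluster interaction is $O(\delta_0^2)$ (as you correctly state earlier in your own part (i)), not $O(\delta_0)$, and the hypothesis $P_1(t)\le P_0(t)-C$ gives $\delta_0\le e^{-C}\delta_1$, hence $\delta_0^2/\delta_1^2\le e^{-2C}$ is small; your stated comparison ``$\delta_0/\delta_1^2$ is tiny'' does not follow from the hypothesis (take $\delta_1=\eps$ and $\delta_0=e^{-C}\eps$ with $\eps\ll e^{-C}$). Relatedly, for \eqref{eq:p1-bounded} one must also check that the gap condition $Q_1(t')\le Q_0(t')-C'$ persists at every intermediate step of the window, since otherwise the one-step Lipschitz bound for $Q_1$ is unavailable and $Q_1$ can jump arbitrarily; this is exactly why the paper's proof takes $C=2C'T$ and runs a short induction over the window.
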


Then, in \Cref{sec:many-steps-at-once} and \Cref{sec:stochastic-proof}, we show
a more general result:
Any random process that satisfies the conditions of
\Cref{lem:p0-p1-properties}
also satisfies the conclusion of \Cref{thm:inactive-unstable-restated}, i.e.,
$P_0(t)\le-\log\eps$ will occur before
$P_1(t)\le-\log\eps_1$, with probability at least 0.7.

\begin{theorem}\label{prop:two-chains}
\label{thm:random-process}
Let $C>0$ and $\Cmin\in\mathbb{R}$.
Let $(P_0(t),P_1(t))_t$ be a random process adapted to a filtration
$(\cF_t)_t$, with $P_0(t)\in\mathbb{R}$ and $P_1(t)\in\mathbb{R}\cup\{\infty\}$.
Assume that $P_0$ and $P_1$ satisfy
\eqref{eq:p0-bounded}--\eqref{eq:p1-martingale}
(with the constant $C$)
whenever $P_0(t),P_1(t)>\Cmin$.

Then, there exists $\Ctilde=\Ctilde(C)$ such that
the following holds. Let $\Cstart=\Cmin+\Ctilde$.
Assume that $\min(P_0(0),\allowbreak P_1(0))>\Cstart$
and let $t_0=\min\{ t: P_0(t)\le \Cstart\text{ or }
P_1(t)\le \Cmin\}$.
Then, almost surely, $t_0$ is finite,
and furthermore,
\begin{align}
\label{eq:76}
\Pr[P_1(t_0)\le \Cmin]
&\le 0.3\;.
\end{align}
\end{theorem}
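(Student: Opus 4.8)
The plan is to control $P_1$ through an exponential supermartingale, so that \eqref{eq:76} reduces to a Markov bound, and to absorb the steps on which $P_1$ has no guaranteed positive drift by means of a second, tailor-made potential for $P_0$. First I would settle that $t_0<\infty$ almost surely: for $t<t_0$ both $P_0(t)>\Cstart>\Cmin$ and $P_1(t)>\Cmin$, so \eqref{eq:p0-bounded} and \eqref{eq:p0-martingale} apply, making $P_0(t\wedge t_0)+(t\wedge t_0)/C$ a supermartingale bounded below by $\Cstart-C$. Such a process converges almost surely, which is incompatible with $\{t_0=\infty\}$, on which it would exceed $\Cstart+t/C\to\infty$.

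For the main estimate, fix $\lambda=\lambda(C)>0$ small enough that $\expec[e^{-\lambda\Delta}\mid\cF_t]\le1$ whenever $\Delta$ is $\cF_{t+1}$-measurable with $\expec[\Delta\mid\cF_t]\ge1/C$ and $|\Delta|\le C$ (a routine second-order estimate). Write $\phi(x)=e^{-\lambda(x-\Cmin)}$, $Y_u=e^{-\lambda(P_0(u)-\Cmin)}$, and $X_t=\phi(P_1(t\wedge t_0))$, which is bounded. The key dichotomy, valid for $t<t_0$, is: (i) if $P_1(t)\le P_0(t)-C$, then \eqref{eq:p1-bounded}--\eqref{eq:p1-martingale} and the choice of $\lambda$ give $\expec[\phi(P_1(t+1))\mid\cF_t]\le\phi(P_1(t))$; (ii) if $P_1(t)>P_0(t)-C$, then \eqref{eq:p1-lower-bound}, applied after distinguishing $P_1(t)\ge P_0(t)$ from $P_0(t)-C<P_1(t)<P_0(t)$, forces $P_1(t+1)\ge P_0(t)-2C$ deterministically, hence $\phi(P_1(t+1))\le e^{2\lambda C}Y_t$. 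In the Doob decomposition $X_t=X_0+M_t+A_t$ the compensator $A$ is thus nonincreasing on steps of type (i) and increases by at most $e^{2\lambda C}Y_t$ on steps of type (ii); letting $t\to\infty$ (legitimate since $t_0<\infty$ a.s. and $X$ is bounded) yields $\expec[\phi(P_1(t_0))]\le\phi(P_1(0))+e^{2\lambda C}\,\expec\big[\sum_{u<t_0}Y_u\big]$.

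The hard part — and where I would expect the main work to lie — is bounding $\expec[\sum_{u<t_0}Y_u]$ by a quantity depending only on $C$, not on $P_0(0)$, which may be arbitrarily large; the trivial bound $\sum_{u<t_0}Y_u\le t_0\,e^{-\lambda\Ctilde}$ is useless because $\expec[t_0]$ is of order $P_0(0)$. For this I would introduce $F(x)=-\tfrac{C}{\lambda}e^{-\lambda(x-\Cmin)}$, which is increasing and bounded above by $0$. Using $\expec[e^{-\lambda(P_0(t+1)-\Cmin)}\mid\cF_t]=Y_t\,\expec[e^{-\lambda\Delta_0}\mid\cF_t]$ together with Jensen and $\expec[\Delta_0\mid\cF_t]\le-1/C$, one obtains $\expec[e^{-\lambda\Delta_0}\mid\cF_t]-1\ge e^{\lambda/C}-1\ge\lambda/C$, whence $\expec[F(P_0(t+1))-F(P_0(t))\mid\cF_t]\le-Y_t$ for $t<t_0$. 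Therefore $F(P_0(t\wedge t_0))+\sum_{u<t\wedge t_0}Y_u$ is a supermartingale; since $F$ is increasing and $P_0(t\wedge t_0)\ge\Cstart-C$ always (by \eqref{eq:p0-bounded}), taking expectations and letting $t\to\infty$ gives $\expec[\sum_{u<t_0}Y_u]\le F(P_0(0))-F(\Cstart-C)\le\tfrac{C}{\lambda}e^{-\lambda(\Ctilde-C)}$, uniformly in $P_0(0)$.

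Combining the two displays, $\expec[\phi(P_1(t_0))]\le e^{-\lambda\Ctilde}+\tfrac{C}{\lambda}e^{-\lambda(\Ctilde-3C)}$. Since $\{P_1(t_0)\le\Cmin\}\subseteq\{\phi(P_1(t_0))\ge1\}$, Markov's inequality bounds $\Pr[P_1(t_0)\le\Cmin]$ by the same expression, which is at most $0.3$ once $\Ctilde=\Ctilde(C)$ is chosen large enough (recall $\lambda$ depends only on $C$); with $\Cstart=\Cmin+\Ctilde$ this is exactly \eqref{eq:76}. Apart from the uniform-in-$P_0(0)$ estimate above, the only other slightly delicate point is the deterministic bound in case (ii) coming from \eqref{eq:p1-lower-bound}; the remainder is routine martingale manipulation.
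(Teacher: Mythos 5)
Your argument is correct, and it takes a genuinely different route from the paper's proof. The paper first reduces to the case $P_1(0)>P_0(0)-C$, then counts the number $N_\ell$ of visits of $P_0$ to each unit interval $(\Cstart+\ell,\Cstart+\ell+1]$, shows $N_\ell\le O(1+\ell)$ for all $\ell$ simultaneously with probability $0.9$ via an Azuma-type bound (\Cref{lem:modified-azuma}), and then union-bounds over all possible ``dangerous excursions'' of $P_1$ below $P_0-C$, each of which must last $\Omega(\ell)$ steps and is therefore exponentially unlikely in $\ell$. You instead work entirely in expectation with two exponential potentials: the Lyapunov function $\phi(P_1)$ whose compensator increases only on steps where the drift \eqref{eq:p1-martingale} is unavailable, at which steps the deterministic bound from \eqref{eq:p1-lower-bound} caps the increase by $e^{2\lambda C}Y_t$ with $Y_t=e^{-\lambda(P_0(t)-\Cmin)}$; and the second potential $F(P_0)=-\tfrac{C}{\lambda}Y$, whose negative drift (from \eqref{eq:p0-martingale} plus Jensen) dominates $Y_t$ and hence yields $\expec[\sum_{u<t_0}Y_u]\le\tfrac{C}{\lambda}e^{-\lambda(\Ctilde-C)}$ uniformly in $P_0(0)$ by optional stopping --- this is exactly the point where the naive bound via $\expec[t_0]$ would fail, and you identify and resolve it correctly. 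Markov's inequality then gives \eqref{eq:76} with a bound that in fact tends to $0$ as $\Ctilde\to\infty$. Your approach avoids the initial case split, the high-probability control of the $N_\ell$, and the double union bound, at the cost of requiring the small exponential tilt $\lambda(C)$ and the slightly less transparent compensator bookkeeping; the paper's approach is more pedestrian but yields pathwise information (the occupation-time bounds) that could be reused for quantitative convergence rates. The only points that deserve explicit verification in a write-up are the measurability of the type (i)/(ii) dichotomy with respect to $\cF_t$ (clear, since it is determined by $P_0(t),P_1(t)$), the lower bound $P_0(t\wedge t_0)\ge\Cstart-C$ used to bound $F$ at the stopping time, and the fact that $X_t=\phi(P_1(t\wedge t_0))$ is bounded (by $e^{\lambda C}$, via \eqref{eq:p1-lower-bound} at $t_0-1$), which justifies passing to the limit; all three hold as you indicate.
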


Proving \Cref{thm:random-process} also needs some care.
Whenever the process $P_1(t)$ becomes ``dangerously close'' to its threshold
$\Cmin$, it satisfies $P_1(t)\le P_0(t)-C$
and therefore~\eqref{eq:p1-martingale},
so, in any particular instance, the probability that $P_1$ will
cross the threshold is low. However, this might not be enough
if $P_1$ gets ``too many chances'' to cross the threshold. 

To prove \Cref{thm:random-process},
first, we argue that the problem can be reduced to the case where $P_1(0)>P_0(0)-C$.
If $P_1(0)>P_0(0)-C$, then, choosing sufficiently large $\Ctilde$,
the event $P_1(t)\le P_0(t)-C$ must occur at least once
before $P_1(t)$ crosses $\Cmin$.
For a nonnegative integer $\ell$,
let $N_\ell$ be the number of time steps
such that $\Cstart+\ell<P_0(t)\le \Cstart+\ell+1$.
Using~\eqref{eq:p0-bounded} and~\eqref{eq:p0-martingale},
we can apply Azuma's inequality and the union bound
to show that, with good probability, 
the bound $N_\ell\le O(\ell)$ holds for all $\ell$ simultaneously.

As mentioned, before crossing $\Cmin$, the process $P_1(t)$ must satisfy
$P_1(t)\le P_0(t)-C$ for some 
directly preceding contiguous time segment. 
If this time segment starts when
$\Cstart+\ell< P_0(t)\le\Cstart+\ell+1$, then its
duration must be
at least $\Omega(\ell)$ steps. During each of those steps, the condition
$P_1(t)\le P_0(t)-C$ is satisfied, and therefore
~\eqref{eq:p1-bounded}
and~\eqref{eq:p1-martingale} hold. By another application of Azuma,
any such specific segment has a probability of occurring which is
exponentially small in $\ell$. This allows to conclude the proof
by the union bound.
We develop this argument precisely in
\Cref{sec:stochastic-proof}.

\subsection{\Cref{thm:technical-informal} implies
\Cref{thm:main-polarization}}
\label{sec:implication}

Having proved \Cref{thm:technical-informal}, let
us sketch how to deduce \Cref{thm:main-polarization}.
Let $\eps$
and $\eps_1$ be as in \Cref{thm:technical-informal}.
First, if a configuration is $(\eps,\eps)$-inactive
with one cluster, then all opinions are
close to each other, up to minus signs. It is not hard
to deduce that such a configuration polarizes almost surely.

By \Cref{thm:main-polarization},
a configuration which is $(\eps,\eps)$-inactive with
at least two clusters 
(and not separable), eventually becomes active again.
On the other hand,
whenever a configuration is \emph{not} $(\eps,\eps)$-inactive,
then there exists a sequence of $K$ interactions (for some fixed $K$)
that make it $(\eps,\eps)$-inactive. This is because as long as
there exist two opinions with $\eps\le |A_{ij}|\le 1-\eps$,
they can become $\eps$-close in $O(1)$ number of interactions
between them. 

Therefore, we can divide the time into ``epochs'' where in
each epoch the configuration remains inactive with the same clusters.
Let $\NC(\ell)$ be the number of clusters in the $\ell$-th epoch.
From \Cref{lem:cluster}, it holds $1\le\NC(\ell)\le d$, and we want
to show that almost surely $\NC(\ell)=1$ happens for some $\ell$.
However, this can be proved using the second part
of \Cref{thm:technical-informal}: It can be shown that there
exists a fixed $p>0$ such that
$\Pr[\NC(\ell+1)<\NC(\ell)\;\vert\;\NC(\ell)]\ge p$ as long as $\NC(\ell)>1$.

Essentially, this plan has been executed, and the implication from
\Cref{thm:technical-informal} to \Cref{thm:main-polarization} proved,
in~\cite[Theorem 1.7]{ABH24}. The version needed here differs only in details\footnote{\cite{ABH24} considers only $(\eps,\eps)$-inactive configurations, while
in \Cref{thm:technical-informal} the configuration stops being
$(\eps,\eps_1)$-inactive for $\eps_1>\eps$. Furthermore, \cite{ABH24} considers
the first time $t$ such that the configuration is active, while we take
the first time $tT$ for some fixed $T$.}.
The modifications required to handle
these differences are not significant, however, for the sake of correctness, we include a (mostly) self-contained proof in
\Cref{app:implication}.

\medskip

The rest of the paper is dedicated to proving
\Cref{thm:technical-informal}, following the outline
described in \Cref{sec:plan}

\section{Properties of $Q_0$ and $Q_1$}
\label{sec:q0-q1-properties}

Several times, we will be using the following formula which is
easy to check directly.
If $\cU'$ is obtained from $\cU$ by agent $\ell$ influencing~$i$, 
then, for any agent $j$, its new correlation with $i$ is given by
\begin{align}\label{eq:new-aij}
A'_{ij}&=\frac{A_{ij}+\alpha A_{i\ell}A_{j\ell}}{\sqrt{1+(2\alpha+\alpha^2)A_{i\ell}^2}},
&\text{in particular}&
&A'_{i\ell}&=\frac{(1+\alpha)A_{i\ell}}{\sqrt{1+(2\alpha+\alpha^2)A_{i\ell}^2}}
\;.
\end{align}

Recall our plan from \Cref{sec:plan}. For an $(\epsbase,\epsbase)$-inactive configuration,
recall the values $\delta_0,\delta_1,\allowbreak Q_0,Q_1$ defined 
in~\eqref{eq:def-delta} and~\eqref{eq:def-q}. Our objective
is to prove the properties stated in
\eqref{eq:p0-bounded}--\eqref{eq:p1-martingale}.
To do that, we first establish analogous properties
for one step of $Q_0(t)$ and $Q_1(t)$.
We proceed to do so in this section,
with the exception of~\eqref{eq:22},
which is deferred to \Cref{sec:q0-expectation}.

\begin{lemma}
\label{lem:q0-bounded}
If $\cU^t$ is $(\epsbase,\epsbase)$-inactive with $Q_0(t)<\infty$, then
$\frac{\delta_0(t)}{2(1+\alpha)}\le
\delta_0(t+1)\le (1+\alpha)\delta_0(t)$.
In particular, $|Q_0(t+1)-Q_0(t)|\le O(1)$.
\end{lemma}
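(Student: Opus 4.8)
The plan is to track how $\delta_0$ changes after a single interaction, using the explicit update formula~\eqref{eq:new-aij}, and to show that every between-cluster correlation changes by at most a multiplicative factor $(1+\alpha)$ in one direction and $2(1+\alpha)$ in the other. Since $\delta_0(t)$ is a maximum of absolute values of between-cluster correlations $|A_{ij}^t|$ over a \emph{fixed} set of clusters (recall from \Cref{cor:bounded_step} that, as $\cU^t$ is $(\epsbase,\epsbase)$-inactive, the clusters do not change in one step, so $\delta_0(t+1)$ refers to the same partition), it suffices to bound the one-step change of each such correlation and then take maxima. Concretely, fix the interaction in which agent $\ell$ influences agent $i$; only correlations of the form $A_{ij}$ for $j\neq i$ change, and for these we have the formula $A_{ij}' = \bigl(A_{ij}+\alpha A_{i\ell}A_{j\ell}\bigr)/\sqrt{1+(2\alpha+\alpha^2)A_{i\ell}^2}$.

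The first step is the upper bound $\delta_0(t+1)\le(1+\alpha)\delta_0(t)$. Here I would distinguish whether the pair $(i,j)$ is a between-cluster pair or not, but only between-cluster pairs matter for $\delta_0$. For such a pair, at least one of $i,j$ lies outside the cluster of the other; I would bound $|A_{ij}'|$ by bounding the numerator $|A_{ij}+\alpha A_{i\ell}A_{j\ell}| \le |A_{ij}| + \alpha|A_{i\ell}|\,|A_{j\ell}|$ and using $|A_{i\ell}|,|A_{j\ell}|\le 1$ together with the denominator being $\ge 1$. The subtlety is that $A_{i\ell}A_{j\ell}$ is itself a product of two correlations at least one of which is between-cluster (since $i,j$ are in different clusters, $\ell$ cannot be in the cluster of both), hence $|A_{i\ell}A_{j\ell}|\le\delta_0(t)$; likewise $|A_{ij}|\le\delta_0(t)$, giving $|A_{ij}'|\le(1+\alpha)\delta_0(t)$, and taking the maximum over between-cluster pairs yields the claim.

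The second step is the lower bound $\delta_0(t+1)\ge\delta_0(t)/(2(1+\alpha))$. Let $(i_0,j_0)$ be a between-cluster pair attaining $\delta_0(t)=|A_{i_0 j_0}^t|$. If this pair is untouched by the interaction (i.e.\ $i\notin\{i_0,j_0\}$, or $i\in\{i_0,j_0\}$ but $\ell$ does not affect it — actually any $j\ne i$ is affected, so the relevant case is $i\notin\{i_0,j_0\}$), then $A_{i_0j_0}$ is unchanged and $\delta_0(t+1)\ge\delta_0(t)$. Otherwise $i\in\{i_0,j_0\}$, say $i=i_0$, and I use the formula for $A_{i_0 j_0}'$: the numerator is $\ge |A_{i_0j_0}| - \alpha|A_{i_0\ell}A_{j_0\ell}| \ge \delta_0(t) - \alpha\delta_0(t) $ when $\alpha<1$, which is not good enough in general; instead I bound the denominator by $\sqrt{1+2\alpha+\alpha^2}=1+\alpha$ (using $|A_{i_0\ell}|\le 1$) and handle the numerator more carefully. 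The cleanest route: since $\ell$ is in a different cluster from at least one of $i_0,j_0$, the correction term $\alpha A_{i_0\ell}A_{j_0\ell}$ has absolute value at most $\alpha\delta_0(t)$, so $|A_{i_0j_0}'|\ge \frac{\delta_0(t)-\alpha\delta_0(t)}{1+\alpha}$ only works for $\alpha<1$. To get the stated constant $2(1+\alpha)$ uniformly, I would instead note that after the single interaction the pair realizing $\delta_0(t)$ either survives with its value nearly intact, or — the real point — combine with the fact that $|A_{i_0j_0}|+\alpha|A_{i_0\ell}A_{j_0\ell}|$ and $|A_{i_0j_0}|-\alpha|A_{i_0\ell}A_{j_0\ell}|$ cannot both be small, so some between-cluster correlation among $\{(i_0,j_0),(i_0,\ell),(j_0,\ell)\}$ stays $\Omega(\delta_0(t))$; I expect the bookkeeping here, getting the clean factor $2(1+\alpha)$, to be the main obstacle, though it is elementary.

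Finally, taking logarithms in $\delta_0(t)/(2(1+\alpha))\le\delta_0(t+1)\le(1+\alpha)\delta_0(t)$ gives $|Q_0(t+1)-Q_0(t)|\le\log\bigl(2(1+\alpha)\bigr)=O(1)$, completing the proof; note this is finite precisely because $Q_0(t)<\infty$ guarantees $\delta_0(t)>0$, so the logarithm is well-defined and the multiplicative bounds transfer to additive bounds on $Q_0$.
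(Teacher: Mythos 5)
Your overall approach is the same as the paper's, and your upper bound argument is complete and correct: for a between-cluster pair $(i,j)$, the agent $\ell$ cannot lie in both clusters, so $|A_{i\ell}A_{j\ell}|\le\delta_0(t)$, and together with $|A_{ij}|\le\delta_0(t)$ and the denominator in~\eqref{eq:new-aij} being at least $1$ this gives $|A'_{ij}|\le(1+\alpha)\delta_0(t)$.

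The lower bound, however, is left unfinished at exactly the point you flag as "the main obstacle," and this is the only real content of the lemma beyond the upper bound. You correctly identify the right dichotomy but do not carry it out; here is how the paper closes it, with the threshold chosen so that the constant $2(1+\alpha)$ falls out cleanly. Write $\delta_0(t)=|A_{ij}|$ for the pair attaining the maximum (the case $i\notin\{i_0,j_0\}$ being trivial, as you note). Split on whether $\min\{|A_{i\ell}|,|A_{j\ell}|\}\le\delta_0(t)/(2\alpha)$. If yes, the correction term satisfies $\alpha|A_{i\ell}A_{j\ell}|\le\delta_0(t)/2$, so the numerator of~\eqref{eq:new-aij} is at least $\delta_0(t)/2$ while the denominator is at most $1+\alpha$, giving $|A'_{ij}|\ge\delta_0(t)/(2(1+\alpha))$ with no restriction on $\alpha$. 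If no, then both $|A_{i\ell}|$ and $|A_{j\ell}|$ exceed $\delta_0(t)/(2\alpha)\ge\delta_0(t)/(2(1+\alpha))$, and since $\ell$ cannot be in the same cluster as both $i$ and $j$, at least one of $(i,\ell)$, $(j,\ell)$ is a between-cluster pair. The missing ingredient in your sketch is that these correlations \emph{persist} to time $t+1$: $A_{j\ell}$ is unchanged because neither $j$ nor $\ell$ moved, and $|A'_{i\ell}|\ge|A_{i\ell}|$ follows from the second formula in~\eqref{eq:new-aij} since $(1+\alpha)^2\ge 1+(2\alpha+\alpha^2)A_{i\ell}^2$. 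Hence $\delta_0(t+1)\ge\min\{|A_{i\ell}|,|A_{j\ell}|\}\ge\delta_0(t)/(2\alpha)\ge\delta_0(t)/(2(1+\alpha))$. Your concluding logarithm step is fine once these bounds are in place.
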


\begin{proof}
Let $\cU=\cU^t$ and $\cU'=\cU^{t+1}$. By \Cref{cor:bounded_step}, the clusters
of $\cU'$ remain the same.
Let $(i,\ell)$ denote the chosen pair at time $t$, i.e., agent $\ell$ influences agent $i$.
For the upper bound, let $j$ denote an arbitrary agent which is
in a different cluster than agent $i$.
Then $|A_{ij}|\le \delta_0(t)$ and
$\min(|A_{i\ell}|,|A_{j\ell}|)\le\delta_0(t)$.
By \eqref{eq:new-aij}, it follows $|A'_{ij}|\le (1+\alpha)\delta_0(t)$.
Since this holds for every $j$ in a different cluster,
and since $i$ is the only agent whose opinion changed,
it follows
$\delta_0(t+1)\le(1+\alpha)\delta_0(t)$.

For the lower bound, if $\delta_0(t)=|A_{i'j}|$ such 
that $i\notin \{i',j\}$, then $\delta_0(t+1)\ge \delta_0(t)$.
Otherwise, assume $\delta_0(t)=|A_{ij}|$, $i\in S_a$,
and $j\in S_b$. We proceed by two cases.

If $\min\{|A_{i\ell}|, |A_{j\ell}|\}\leq \frac{\delta_0(t)}{2\alpha}$, then $\delta_0(t+1)\ge|A_{ij}'|\ge \frac{|A_{ij}|}{2(1+\alpha)}=\frac{\delta_0(t)}{2(1+\alpha)}$.
Otherwise, since agent $\ell$ cannot be both in the same cluster as agent $i$ and in the same cluster as agent $j$, we have
$\delta_0(t+1)\ge \min \{|A'_{j\ell}|, |A'_{i\ell}|\} \geq \min \{|A_{j\ell}|, |A_{i\ell}|\} \geq  \frac{\delta_0(t)}{2\alpha}\ge\frac{\delta_0(t)}{2(1+\alpha)}$.
In that step we used that
$|A'_{i\ell}|\ge |A_{i\ell}|$,
which follows since $\ell$ influenced
$i$: It can be checked directly 
from~\eqref{eq:new-aij}, but it also
follows since the new opinion lies on the
arc between $\vec u_i$ and $\vec u_j$
(if $A_{ij}>0$) or between
$\vec u_i$ and $-\vec u_j$ (otherwise).
Furthermore, we used
$|A'_{j\ell}|\ge |A_{j\ell}|$.
This holds since either $j=i$
and the previous argument applies,
or $j\ne i$, in which case
$A'_{j\ell}=A_{j\ell}$.
\end{proof}

\begin{lemma}
\label{lem:q1-lower-bound}
There exists $\eps>0$ such that
if $\cU^t$ is $(\eps,\eps)$-inactive
with $Q_0(t)<\infty$, 
then either it holds $\delta_1(t+1)\le O(\delta_0(t))$
or $\delta_1(t+1)\le O(\delta_1(t))$. In particular,
$Q_1(t+1)\ge\min(Q_1(t),Q_0(t))-O(1)$.
\end{lemma}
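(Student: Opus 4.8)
\emph{Proof idea.} The plan is to analyze a single interaction directly via~\eqref{eq:new-aij} and split into two cases. Fix $\eps>0$ small enough that \Cref{cor:bounded_step} applies, so $\cU^{t+1}$ has the same clusters as $\cU^t$ and $\delta_1(t+1)$ is a maximum over the \emph{same} partition. Say agent $\ell$ influences agent $i$ at time $t$. Since only $\vu_i$ moves, every correlation $A'_{pq}$ with $i\notin\{p,q\}$ is unchanged, so such pairs contribute at most $\delta_1(t)$ to $\delta_1(t+1)$; hence it suffices to bound $1-|A'_{ij'}|$ for $j'$ ranging over the cluster $S_a$ containing $i$, using
\[
A'_{ij'}=\frac{A_{ij'}+\alpha A_{i\ell}A_{\ell j'}}{\sqrt{1+(2\alpha+\alpha^2)A_{i\ell}^2}}\,.
\]

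I would first handle the case $\ell\notin S_a$. Then $i,\ell$ lie in different clusters and so do $\ell,j'$, whence $|A_{i\ell}|,|A_{\ell j'}|\le\delta_0(t)$, while $|A_{ij'}|\ge 1-\delta_1(t)^2$ since $i,j'\in S_a$. Bounding the numerator of~\eqref{eq:new-aij} from below by $1-\delta_1(t)^2-\alpha\delta_0(t)^2$ and the denominator from above by $1+O(\delta_0(t)^2)$ should give $1-|A'_{ij'}|\le\delta_1(t)^2+O(\delta_0(t)^2)$, hence $\sqrt{1-|A'_{ij'}|}\le\delta_1(t)+O(\delta_0(t))$. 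In the remaining case $\ell\in S_a$, all three of $i,\ell,j'$ lie in $S_a$, so \Cref{cl:consistent-signs} gives $\sign(A_{\ell j'})=\sign(A_{i\ell})\sign(A_{ij'})$, and consequently the numerator of~\eqref{eq:new-aij} equals $\sign(A_{ij'})\bigl(|A_{ij'}|+\alpha|A_{i\ell}||A_{\ell j'}|\bigr)$. Using $|A_{ij'}|,|A_{i\ell}|,|A_{\ell j'}|\ge 1-\delta_1(t)^2$ in the numerator and $A_{i\ell}^2\le 1$ in the denominator (which is then at most $1+\alpha$), one gets $|A'_{ij'}|\ge 1-\tfrac{1+2\alpha}{1+\alpha}\delta_1(t)^2$, i.e. $\sqrt{1-|A'_{ij'}|}=O(\delta_1(t))$. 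Combining the two cases gives $\delta_1(t+1)\le O\bigl(\max(\delta_0(t),\delta_1(t))\bigr)$, which is exactly the stated dichotomy, and taking $-\log$ yields $Q_1(t+1)\ge\min(Q_0(t),Q_1(t))-O(1)$ (the case $\delta_1(t+1)=0$ being trivial).

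The only delicate point I anticipate is the sign bookkeeping in the case $\ell\in S_a$: a priori the perturbation $\alpha A_{i\ell}A_{\ell j'}$ has magnitude of order $\alpha$ and could have the opposite sign to $A_{ij'}$, which would push $|A'_{ij'}|$ away from $1$ by a constant and ruin the estimate. \Cref{cl:consistent-signs} is precisely what rules this out: inside a cluster the three nearly-aligned vectors $\vu_i,\vu_\ell,\vu_{j'}$ must "cooperate", so moving $\vu_i$ towards $\pm\vu_\ell$ keeps it nearly aligned with $\vu_{j'}$. The case $\ell\notin S_a$ is essentially routine, the point being only that the contribution from the foreign cluster is \emph{quadratically} small in $\delta_0(t)$ and is therefore absorbed into the $O(\delta_0(t))$ term rather than dominating.
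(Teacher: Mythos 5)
Your proof is correct and follows essentially the same route as the paper: the same case split on whether the influencing agent $\ell$ lies in the cluster of the influenced agent $i$, the same use of \Cref{cl:consistent-signs} to control the sign of the perturbation term in the in-cluster case, and the same quadratic-in-$\delta_0$ bound in the cross-cluster case. The only (harmless) difference is that you make explicit the observations that unaffected pairs contribute nothing new and that the clusters are preserved via \Cref{cor:bounded_step}, which the paper leaves implicit.
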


\begin{proof}
Let $\cU=\cU^t$, $\cU'=\cU^{t+1}$, and assume that
agent $\ell$ influenced agent $i\in S_b$.
By definition, $|A_{ij}|\ge 1-\delta^2_1(t)$
for every $j\in S_b$.

If $\ell\in S_b$, then for any agent $j\in S_b$, 
since $\sign(A_{ij})=
\sign(A_{i\ell})\sign(A_{j\ell})$
by \Cref{cl:consistent-signs},
we have
\begin{align}
|A'_{ij}|= \frac{|A_{ij}|+\alpha|A_{i\ell}A_{j\ell}|}{\sqrt{1+(2\alpha+\alpha^2)A^2_{i\ell}}}
\ge \frac{1-\delta^2_1(t)+\alpha(1-2\delta^2_1(t))}{1+\alpha}
= 1-\frac{1+2\alpha}{1+\alpha}\delta^2_1(t)\;.
\end{align}
Hence, $\delta_1(t+1)\le O(\delta_1(t))$. On the other hand,
if $\ell\notin S_b$, then, for any $j\in S_b$, it holds
\begin{align}
|A'_{ij}|\ge\frac{1-\delta^2_1(t)-\alpha\delta^2_0(t)}
{\sqrt{1+(2\alpha+\alpha^2)\delta_0^2(t)}}
\ge 1-O\left(\delta^2_1(t)+\delta^2_0(t)\right)\;,
\end{align}
which implies
$\delta_1(t+1)\le O\left(\sqrt{\delta^2_1(t)+\delta^2_0(t)}\right)
\le O(\max(\delta_0(t), \delta_1(t)))$.
\end{proof}

\begin{lemma}
\label{lem:q1-bounded}
There exist $\eps>0$ and $C>0$, such that if 
$\cU^t$ is $(\eps,\eps)$-inactive with
$\delta_1(t)\ge C\delta_0(t)>0$, then
$\Omega(\delta_1(t))\le \delta_1(t+1)\le O(\delta_1(t))$.
In particular, if $Q_1(t)\le Q_0(t)-\log C$, then
$|Q_1(t+1)-Q_1(t)|\le O(1)$.
\end{lemma}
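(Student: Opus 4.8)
The plan is to track a pair $i,j\in S_b$ realizing $\delta_1(t)=\sqrt{1-|A_{ij}^t|}$ and control how any single interaction can perturb the relevant within-cluster correlations. As in the proofs of \Cref{lem:q0-bounded} and \Cref{lem:q1-lower-bound}, let $\cU=\cU^t$, $\cU'=\cU^{t+1}$, and let agent $\ell$ influence agent $i'$. The only correlation that changes is $A_{i'j}$ for $j\ne i'$ (and trivially $A_{i'i'}=1$ stays fixed), so to bound $\delta_1(t+1)$ from above it suffices to show that for each cluster $S_a$ and each pair $j\in S_a$, the quantity $1-|A'_{i'j}|$ is $O(\delta_1(t)^2)$ whenever $i'\in S_a$, and to bound it from below we must exhibit one within-cluster pair whose absolute correlation is at most $1-\Omega(\delta_1(t)^2)$.

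For the upper bound I would split on whether $\ell\in S_a$ or not, exactly as in \Cref{lem:q1-lower-bound}, but now I must use the hypothesis $\delta_1(t)\ge C\delta_0(t)$ to absorb the between-cluster error terms. If $\ell\in S_a$, the sign-consistency from \Cref{cl:consistent-signs} gives $|A'_{i'j}|\ge 1-\frac{1+2\alpha}{1+\alpha}\delta_1(t)^2$ as before, so $1-|A'_{i'j}|=O(\delta_1(t)^2)$. If $\ell\notin S_a$, then $|A_{i'\ell}|,|A_{j\ell}|\le\delta_0(t)\le\delta_1(t)/C$, and from \eqref{eq:new-aij} we get $1-|A'_{i'j}|\le O(\delta_1(t)^2+\delta_0(t)^2)=O(\delta_1(t)^2)$, using $\delta_0(t)\le\delta_1(t)/C$. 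Any correlation not involving $i'$ is unchanged, so $\delta_1(t+1)\le O(\delta_1(t))$; this is the routine direction.

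The main obstacle is the lower bound: we must prevent the single interaction from collapsing \emph{all} within-cluster non-correlations at once, i.e.\ from making $\delta_1(t+1)\ll\delta_1(t)$. The key observation is that at most one opinion moves, so at most the correlations $\{A_{i'j}:j\in S_a\}$ for the single cluster $S_a\ni i'$ are affected; every within-cluster pair not touching $i'$ retains its old absolute correlation. Thus if $\delta_1(t)$ is realized by some pair $j_1,j_2$ with $i'\notin\{j_1,j_2\}$, we are immediately done with $\delta_1(t+1)\ge\delta_1(t)$. The remaining case is that $\delta_1(t)=\sqrt{1-|A_{i'j}^t|}$ for some $j\in S_a$, $i'\in S_a$; here I would argue that moving $\vec u_{i'}$ by one step of \eqref{eq:06} changes the angle between $\vec u_{i'}$ and $\vec u_j$ by only a bounded factor. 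Concretely, from \eqref{eq:new-aij}, when $\ell\in S_a$ the map $A_{i'j}\mapsto A'_{i'j}$ has $1-|A'_{i'j}|\ge c(\alpha)(1-|A_{i'j}|)$ for a fixed $c(\alpha)>0$ (one checks $(1+\alpha)\sqrt{1+(2\alpha+\alpha^2)x^2}\le$ suitable bound, so the numerator cannot reach the denominator too fast); when $\ell\notin S_a$ the correction from $A_{i'\ell}A_{j\ell}$ is $O(\delta_0(t)^2)\le O(\delta_1(t)^2/C^2)$, which for $C$ large enough is at most half of $c(\alpha)\delta_1(t)^2$, so again $1-|A'_{i'j}|\ge\Omega(\delta_1(t)^2)$. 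Hence in all cases $\delta_1(t+1)\ge\Omega(\delta_1(t))$, and taking logs yields $|Q_1(t+1)-Q_1(t)|\le O(1)$ whenever $Q_1(t)\le Q_0(t)-\log C$. The constant $C$ is chosen at the end to make the between-cluster error strictly dominated; $\eps$ is taken small enough that \Cref{cl:consistent-signs} and \Cref{cor:bounded_step} apply and clusters are well-defined throughout.
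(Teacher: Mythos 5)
Your overall structure matches the paper's: reduce to the maximizing pair, note that pairs not touching the updated agent are unchanged, and split on whether the influencer $\ell$ lies in the same cluster. The upper bound and the subcase $\ell\notin S_a$ of the lower bound are handled exactly as in the paper. The one place you diverge is the hardest case, $\ell\in S_a$ with the maximizing pair $(i',j)$ involving the updated agent $i'$, and there your justification has a gap. You assert that ``the map $A_{i'j}\mapsto A'_{i'j}$ has $1-|A'_{i'j}|\ge c(\alpha)(1-|A_{i'j}|)$'' because ``the numerator cannot reach the denominator too fast.'' But this is not a property of a univariate map: by \eqref{eq:new-aij} the new correlation depends on $A_{i'\ell}$ and $A_{j\ell}$, and the claim is \emph{false} in general. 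If $1-|A_{i'\ell}|$ is allowed to exceed $\frac{(1+\alpha)^2}{\alpha^2}(1-|A_{i'j}|)$, the updated opinion can land (almost) exactly on $\vec u_j$ and $1-|A'_{i'j}|$ collapses to nearly $0$. The reason the claim survives here is precisely that $(i',j)$ realizes the maximum, so $1-|A_{i'\ell}|\le 1-|A_{i'j}|=\delta_1^2(t)$ — a constraint you never invoke. With it the claim is correct: writing (after sign normalization via \Cref{cl:consistent-signs}) $A_{i'j}=1-\delta_1^2$, $A_{i'\ell}=1-a$ with $a\le\delta_1^2$, and using convexity of $a\mapsto\sqrt{1+(2\alpha+\alpha^2)(1-a)^2}$ together with $\alpha A_{i'\ell}A_{j\ell}\le\alpha(1-a)$, one gets
\begin{align}
\sqrt{1+(2\alpha+\alpha^2)(1-a)^2}-A_{i'j}-\alpha A_{i'\ell}A_{j\ell}
\ge \delta_1^2-\frac{\alpha a}{1+\alpha}\ge\frac{\delta_1^2}{1+\alpha}\;,
\end{align}
hence $1-A'_{i'j}\ge\delta_1^2/(1+\alpha)^2$.

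It is worth noting that, once repaired, your route is actually slightly cleaner than the paper's. The paper does not claim that the maximizing pair always survives; instead it splits on whether $|A_{i\ell}|>1-\frac{1+\alpha}{4(2\alpha+\alpha^2)}\delta_1^2(t)$, using the pair $(i,j)$ as witness when $\vec u_\ell$ is extremely close to $\pm\vec u_i$ (so the step is too short to close the gap) and switching to the pair $(i,\ell)$ as an alternative witness otherwise (since the normalization map keeps $1-|A'_{i\ell}|\ge\Omega(1-|A_{i\ell}|)$). Your single-case argument avoids this switch, but only because of the hidden constraint $1-|A_{i'\ell}|\le\delta_1^2(t)$; as written, the decisive inequality is asserted for the wrong reason.
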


\begin{proof}
For any fixed $C\ge 1$,
by \Cref{lem:q1-lower-bound}, if $\delta_1(t)\ge C\delta_0(t)$,
then it holds $\delta_1(t+1)\le O(\max(\delta_0(t),\delta_1(t)))\le O(\delta_1(t))$.

For the lower bound, let $\cU=\cU^t$ and $\cU'=\cU^{t+1}$
and assume that agent $\ell$ influenced $i$.
If $|A_{i'j}|=1-\delta^2_1(t)$
such that
$i\notin\{i',j\}$, then $\delta_1(t+1)\ge\delta_1(t)$.

On the other hand,
if $|A_{ij}|=1-\delta^2_1(t)$
for some $i,j\in S_b$,
again we proceed by cases.
If $\ell\notin S_b$ then 
$1-\delta^2_1(t+1)\le |A'_{ij}|\le |A_{ij}|+\alpha\delta^2_0(t)
=1-\delta^2_1(t)+\alpha\delta_0^2(t)\le 1-\delta^2_1(t)/2$,
where the last inequality holds if $\delta^2_1(t)\ge 2\alpha\delta^2_0(t)$, which is true if
$C\ge \sqrt{2\alpha}$. That implies $\delta_1(t+1)\ge\delta_1(t)/\sqrt{2}$.
If $\ell\in S_b$ and $|A_{i\ell}|>1-\frac{1+\alpha}{4(2\alpha+\alpha^2)}\delta^2_1(t)$, then
$A^2_{i\ell}>1-\frac{1+\alpha}{2(2\alpha+\alpha^2)}\delta^2_1(t)$, 
and
\begin{align}
|A'_{ij}|\le\frac{1-\delta_1^2(t)+\alpha}
{\sqrt{1+(2\alpha+\alpha^2)\left(
1-\frac{1+\alpha}{2(2\alpha+\alpha^2)}\delta^2_1(t)
\right)}}
=\frac{1-\frac{\delta_1^2(t)}{1+\alpha}}
{\sqrt{1-\frac{1}{2(1+\alpha)}\delta^2_1(t)}}
\le\frac{1-\frac{\delta_1^2(t)}{1+\alpha}}
{1-\frac{1}{2(1+\alpha)}\delta^2_1(t)}
\le 1-\frac{\delta^2_1(t)}{4(1+\alpha)}\;,
\end{align}
which implies $\delta_1(t+1)\ge\frac{\delta_1(t)}{2\sqrt{1+\alpha}}$.
If $\ell\in S_b$ and $|A_{i\ell}|\le1-\frac{1+\alpha}{4(2\alpha+\alpha^2)}\delta^2_1(t)$,
then, letting $|A_{i\ell}|=1-\delta^2$
\begin{align}
|A'_{i\ell}|=\frac{1-\delta^2+\alpha(1-\delta^2)}
{\sqrt{1+(2\alpha+\alpha^2)(1-\delta^2)^2}}
&\le\frac{1-\delta^2}{\sqrt{1-\frac{2\alpha+\alpha^2}{(1+\alpha)^2}2\delta^2}}
\le 1-\delta^2+\frac{0.5+2\alpha+\alpha^2}{(1+\alpha)^2}\delta^2\\
&\le 1-\Omega(\delta^2)\le 1-\Omega(\delta_1^2(t))\;,
\end{align}
which again gives $\delta_1(t+1)\ge \Omega(\delta_1(t))$.
\end{proof}

We now turn to proving the two properties
of expectation, namely
\eqref{eq:p0-martingale} 
and~\eqref{eq:p1-martingale}.
As these properties are stated for a larger
number of steps $T$,
the following corollary
will be useful to control $Q_0$ and $Q_1$ over several time steps:

\begin{corollary}
\label{cor:bounded-q0-q1}
There exists $\Cstep\ge 1$ such that for all $\eps'>0$ and $T$,
there exists $\eps(\eps', T)$ with the following property.
If $\cU^{t}$ is $(\eps,\eps)$-inactive and $Q_0(t)<\infty$, then for every time step
$t\le t'\le t+T$, configuration $\cU^{t'}$ remains
$(\eps',\eps')$-inactive and furthermore:
\begin{enumerate}
    \item $\delta_0(t')/\Cstep\le\delta_0(t'+1)\le
    \Cstep\delta_0(t')$.
    \item $\delta_1(t'+1)\le\Cstep\max(\delta_0(t'),\delta_1(t'))$.
    \item If $\delta_1(t')\ge\Cstep\delta_0(t')$, then
    $\delta_1(t')/\Cstep\le\delta_1(t'+1)\le
    \Cstep\delta_1(t')$.
\end{enumerate}
\end{corollary}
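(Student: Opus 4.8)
The plan is to obtain \Cref{cor:bounded-q0-q1} as a direct consequence of the single-step lemmas already proved in this section, by iterating them $T$ times and choosing $\eps$ small enough at the end. The constant $\Cstep$ will be the maximum of the implicit constants appearing in \Cref{lem:q0-bounded}, \Cref{lem:q1-lower-bound}, and \Cref{lem:q1-bounded}; concretely one can take $\Cstep = \max(2(1+\alpha), C_0)$ where $C_0$ absorbs the various $O(\cdot)$ constants, together with the constant $C$ from \Cref{lem:q1-bounded}. The point is that $\Cstep$ depends only on $n,d,\alpha$ and in particular \emph{not} on $\eps'$ or $T$, which is what the statement demands (the threshold $\eps$ is allowed to depend on both).

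First I would handle the claim that $\cU^{t'}$ stays $(\eps',\eps')$-inactive for all $t\le t'\le t+T$. By \Cref{lem:q0-bounded}, one step changes $\delta_0$ by at most a factor $(1+\alpha)$, and by \Cref{lem:q1-lower-bound} (applicable once $\eps$ is below the threshold there) one step gives $\delta_1(t'+1)\le O(\max(\delta_0(t'),\delta_1(t')))$; iterating these $T$ times, and using that $\delta_0,\delta_1\le\Cstep^{T}\eps$ (very crudely) at every intermediate step, we see that if we pick $\eps = \eps(\eps',T)$ small enough that $\Cstep^{T}\eps < \min(\eps', (\epsbase)^{1/2}, \text{threshold of \Cref{lem:q1-lower-bound}})$, then every intermediate configuration is $(\eps',\eps')$-inactive — and a fortiori $(\epsbase,\epsbase)$-inactive, so that \Cref{cor:bounded_step} applies and the clusters (hence the definitions of $\delta_0,\delta_1$) are unchanged throughout the window. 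One has to be slightly careful to run this as an induction on $t'$: assuming $\cU^{t'}$ is sufficiently inactive, the single-step lemmas apply at time $t'$ and bound $\delta_0(t'+1),\delta_1(t'+1)$, which keeps $\cU^{t'+1}$ inactive, closing the loop.

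With inactivity of the whole window established, items (1)--(3) are then immediate restatements of \Cref{lem:q0-bounded}, \Cref{lem:q1-lower-bound}, and \Cref{lem:q1-bounded} respectively, applied at the single time step $t'$ (each of which is now legitimate since $\cU^{t'}$ is $(\eps',\eps')$-inactive with $\eps'$ below all the relevant thresholds, and $Q_0(t')<\infty$ because $\delta_0$ stays strictly positive — it only shrinks by bounded factors from $\delta_0(t)>0$). Item (3) uses \Cref{lem:q1-bounded} with its constant $C$, which is why we folded $C$ into $\Cstep$.

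The only mildly delicate point — and the one I would write out carefully — is the bookkeeping of the threshold: we need a \emph{single} $\eps$ that simultaneously (i) forces the entire length-$T$ trajectory to remain inside the region where all three single-step lemmas are valid, and (ii) makes the endpoints $(\eps',\eps')$-inactive. Since each step can only inflate $\delta_0,\delta_1$ by a fixed factor $\Cstep$, after $T$ steps the inflation is at most $\Cstep^{T}$, so $\eps := \Cstep^{-T}\cdot\min(\eps',\eps_{\text{thresh}})$ works, where $\eps_{\text{thresh}}$ is the smallest threshold demanded by \Cref{lem:q0-bounded}, \Cref{lem:q1-lower-bound}, \Cref{lem:q1-bounded}. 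There is no real obstacle here beyond tracking constants; the substantive work was already done in the preceding lemmas.
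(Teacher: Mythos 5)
Your proposal is correct and follows essentially the same route as the paper's proof: take $\Cstep$ and a threshold $\eps''$ from \Cref{lem:q0-bounded}, \Cref{lem:q1-lower-bound}, and \Cref{lem:q1-bounded}, set $\eps=\min(\eps',\eps'')/\Cstep^{T}$, and show by induction that $\cU^{t'}$ is $(\Cstep^{t'-t}\eps,\Cstep^{t'-t}\eps)$-inactive for all $t\le t'\le t+T$, so the single-step lemmas apply throughout the window. The details you flag (folding the constant $C$ of \Cref{lem:q1-bounded} into $\Cstep$, and $\Cstep$ being independent of $\eps'$ and $T$) match the paper's argument.
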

\begin{proof}
From \Cref{lem:q0-bounded},
\Cref{lem:q1-lower-bound}, and \Cref{lem:q1-bounded},
there exist $\eps''>0$ and $\Cstep\ge 1$ such that
the properties 1--3 simultaneously hold whenever the configuration
$\cU^{t'}$ is $(\eps'',\eps'')$-inactive.
Let us take $\eps=\min(\eps',\eps'')/\Cstep^{T}$.

Assume that $\cU^t$ is $(\eps,\eps)$-inactive.
By induction, applying \Cref{lem:q0-bounded}
and \Cref{lem:q1-lower-bound},
it follows that $\cU^{t'}$ is
$(\Cstep^{t'-t}\eps,\Cstep^{t'-t}\eps)$-inactive
for every $t\le t'\le t+T$.
In particular,
$\cU^{t'}$ is both
$(\eps',\eps')$-inactive and $(\eps'',\eps'')$-inactive,
which implies that it satisfies properties 1-3.
\end{proof}

We turn to proving~\eqref{eq:p1-martingale}.
In the proof, we will apply a result proved
in~\cite{ABH24}. This result reflects the fact
that inside-cluster interactions can only increase
absolute correlations between opinions in a cluster.

\begin{lemma}[Claim~3.12 and Claim~3.13 in~\cite{ABH24}]
\label{lem:strictly-convex-increasing}
Let $n\ge 2$ and $\cU$ be a configuration that
satisfies $|A_{ij}|>\sqrt{2}/2$ for every $i,j\in[n]$.
Let $\cU'$ be obtained from $\cU$ by agent $\ell$ influencing $i$.
Then, for every $j$, it holds
$|A'_{ij}|\ge\min(|A_{ij}|,|A_{j \ell}|)$.

Furthermore, there exists a sequence of $K=\binom{n}{2}$ 
interactions and a constant $c=c(\alpha)<1$ such that
\begin{align}
\label{eq:13}
\max_{1\le i,j\le n}(1-|A_{ij}^{K}|)
\le c\max_{1\le i,j\le n}(1-|A_{ij}|)\;.
\end{align}
\end{lemma}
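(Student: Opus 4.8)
The plan is to prove the two assertions separately: the first by a short algebraic estimate after normalizing signs, and the second by exhibiting an explicit recursive ``round‑robin'' sequence of interactions together with a monotone invariant. For the first assertion, I would begin by reducing to the case $A_{ij}>0$ for all $i\ne j$. Sign flips $\vu_k\mapsto-\vu_k$ commute with the dynamics and leave every $|A_{ij}|$ (and every $|A_{ij}'|$) unchanged, so this is without loss of generality; it is achievable because, fixing $\vu_1$ and flipping each $\vu_k$ with $k\ge 2$ so that $A_{1k}>0$, the hypothesis $|A_{1j}|,|A_{1k}|>\sqrt2/2$ forces the geodesic (angular) distance between $\vu_j$ and $\vu_k$ to be strictly less than $\pi/2$ by the triangle inequality, hence $A_{jk}>0$. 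With all correlations positive, \eqref{eq:new-aij} reads $A_{ij}'=(A_{ij}+\alpha A_{i\ell}A_{j\ell})/\sqrt{1+(2\alpha+\alpha^2)A_{i\ell}^2}$. Writing $M=\min(A_{ij},A_{j\ell})$ and $a=A_{i\ell}\in(0,1]$, we get $A_{ij}+\alpha A_{i\ell}A_{j\ell}\ge M+\alpha a M=M(1+\alpha a)$, while $(1+\alpha a)^2=1+2\alpha a+\alpha^2a^2\ge 1+2\alpha a^2+\alpha^2 a^2=1+(2\alpha+\alpha^2)a^2$ since $a\ge a^2$; dividing gives $A_{ij}'\ge M$. (The degenerate cases $j=i$ and $j=\ell$ are covered by the same inequality.)

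For the second assertion, put $\mu_0=\min_{i,j}|A_{ij}|\in(\sqrt2/2,1]$ and again normalize signs to be positive. A single interaction changes only the correlations $A_{ik}$ with $i$ the influenced agent, and each new value is $\ge\min(A_{ik},A_{\ell k})\ge\mu_0$ by the first assertion; hence every correlation stays positive and $\ge\mu_0$ throughout any sequence of interactions, so the hypothesis $|A_{ij}|>\sqrt2/2$ persists and the first assertion can be reapplied at every step. Let $f(x)=(1+\alpha)x/\sqrt{1+(2\alpha+\alpha^2)x^2}$, so that the interaction in which $\ell$ influences $i$ sends $A_{i\ell}$ to $f(A_{i\ell})$. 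Using $(1+\alpha)^2=1+(2\alpha+\alpha^2)$ one checks the identity $1-f(x)^2=(1-x^2)/(1+(2\alpha+\alpha^2)x^2)$, which gives $f(x)\ge x$ and, for $x\in[\sqrt2/2,1)$, $\tfrac{1-f(x)}{1-x}=\tfrac{1+x}{(1+f(x))(1+(2\alpha+\alpha^2)x^2)}\le\tfrac{1}{1+(2\alpha+\alpha^2)x^2}\le\tfrac{2}{2+2\alpha+\alpha^2}=:c(\alpha)<1$. Now define the sequence $\sigma_n$ recursively: first perform, in order, the interaction $(i,n)$ (agent $n$ influences agent $i$) for $i=1,\dots,n-1$, then perform $\sigma_{n-1}$ on the agents $\{1,\dots,n-1\}$ (with $\sigma_1$ empty); its length is $(n-1)+\binom{n-1}{2}=\binom n2=K$. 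I claim that after running $\sigma_n$ one has $|A_{ij}|\ge f(\mu_0)$ for every pair; granting this, $\max_{i,j}(1-|A_{ij}^{K}|)=1-\min_{i,j}|A_{ij}^{K}|\le 1-f(\mu_0)\le c(\alpha)(1-\mu_0)=c(\alpha)\max_{i,j}(1-|A_{ij}|)$, which is exactly \eqref{eq:13}.

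The claim is proved by induction on $n$, the base $n\le 2$ being immediate. In the opening block of $\sigma_n$, the correlation $A_{in}$ is updated exactly once, namely to $f(A_{in})\ge f(\mu_0)$, because the remaining interactions of that block target $i+1,\dots,n-1$ and thus touch neither $i$ nor $n$; in particular, after the opening block $A_{kn}\ge f(\mu_0)$ for all $k<n$. Then $\sigma_{n-1}$ never targets $n$, so it can alter $A_{in}$ only through an interaction in which some $\ell<n$ influences $i$, and there the first assertion gives $A_{in}'\ge\min(A_{in},A_{\ell n})\ge f(\mu_0)$; since only the target's correlations change, the property ``$A_{kn}\ge f(\mu_0)$ for all $k<n$'' is preserved by every such interaction. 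Meanwhile the correlations among $\{1,\dots,n-1\}$ never drop below $\mu_0$, so the induction hypothesis applied to $\sigma_{n-1}$ raises all of them to $\ge f(\mu_0)$ as well, completing the induction.

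The delicate point — and the reason for choosing precisely this ordering — is that a naive ``boost every pair once'' sequence fails: after $A_{jk}$ has been boosted, a later interaction targeting $j$ or $k$ can drag it back down toward $\mu_0$, because the first assertion only guarantees the value $\min(\cdot,\cdot)$ rather than a genuine increase. The round‑robin‑with‑pivot structure is arranged so that once $A_{in}$ (a correlation with the current pivot $n$) has been boosted, every subsequent interaction that touches it does so via the first assertion with the other argument already at least $f(\mu_0)$, so it can never be pulled down. Making this non‑interference bookkeeping in the induction fully precise is the one step that needs care, but it is routine once the first assertion is in hand.
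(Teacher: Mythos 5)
This lemma is imported from \cite{ABH24} (Claims 3.12 and 3.13) and is not reproved in the paper, so there is no in-paper argument to compare against; I have therefore checked your proof on its own terms, and it is correct and complete. The first part is right: sign flips commute with the dynamics and preserve all $|A_{ij}|$, the condition $|A_{ij}|>\sqrt2/2$ forces consistent signs via the spherical triangle inequality, and with all correlations positive the bound follows from $A_{ij}+\alpha A_{i\ell}A_{j\ell}\ge M(1+\alpha a)$ together with $(1+\alpha a)^2\ge 1+(2\alpha+\alpha^2)a^2$ for $a\in[0,1]$ --- the same elementary inequality the paper uses elsewhere (cf.\ the proof of \Cref{cl:aij-consistent}, part 3). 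The second part is also sound: the identity $1-f(x)^2=(1-x^2)/(1+(2\alpha+\alpha^2)x^2)$ yields the contraction factor $c(\alpha)=2/(2+2\alpha+\alpha^2)$ on $[\sqrt2/2,1)$, the length of your recursive sequence telescopes to $(n-1)+\binom{n-1}{2}=\binom n2$, and the non-interference bookkeeping holds up under scrutiny: in the opening block each $A_{in}$ is modified exactly once (the other interactions of that block target neither $i$ nor $n$), and during $\sigma_{n-1}$ any change to $A_{in}$ arises from some $\ell<n$ influencing $i$, where part one gives $A'_{in}\ge\min(A_{in},A_{\ell n})\ge f(\mu_0)$; the global invariant $\min_{i,j}|A_{ij}|\ge\mu_0>\sqrt2/2$ justifies reapplying part one at every step. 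Your closing observation --- that a naive ``touch every pair once'' schedule fails because part one only guarantees a $\min$, not an increase --- correctly identifies why the pivot ordering is needed.
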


\begin{lemma}
\label{lem:q1-expectation}
There exists $T_0$ such that, for every
$T\ge T_0$, there exist positive constants $\eps=\eps(T)$, $C=C(T)$, $c=c(T)$
with the following property:

If $\cU^t$ is $(\eps,\eps)$-inactive and satisfies
$\delta_1(t)\ge C\delta_0(t)>0$, then
$\cU^{t'}$ remains $(\epsbase,\epsbase)$-inactive
for $t\le t'\le t+T$
and
$\expec[Q_1(t+T)\;\vert\;\cU^t]\ge Q_1(t)+c$.
\end{lemma}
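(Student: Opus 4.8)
The plan is to control the drift of $Q_1=-\log\delta_1$ over a window of $T$ consecutive steps. When $\delta_1(t)$ is much larger than $\delta_0(t)$, a single interaction moves $Q_1$ only by a tiny amount in the worst case --- and never downward when the interaction is internal to a cluster --- while a short ``favourable burst'' of interactions produces a fixed positive gain; since a burst of bounded length occurs with constant probability, the expectation of $Q_1$ over the window comes out strictly larger than at the start. Concretely, I would fix $T_0=\binom{n}{2}$ and, for $T\ge T_0$, invoke \Cref{cor:bounded-q0-q1} with $\eps'=\epsbase$ to choose $\eps=\eps(T)$ so small that, starting from an $(\eps,\eps)$-inactive $\cU^t$ with $\delta_0(t)>0$, every $\cU^{t'}$ with $t\le t'\le t+T$ is $(\epsbase,\epsbase)$-inactive --- hence keeps the clusters $S_1,\dots,S_k$ of $\cU^t$, has all within-cluster correlations above $\sqrt{2}/2$, and has $\delta_0,\delta_1>0$ so that $Q_0,Q_1<\infty$ --- and the step bounds of that corollary hold throughout. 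Those bounds force $\delta_1(t')/\delta_0(t')\ge(\delta_1(t)/\delta_0(t))/\Cstep^{2(t'-t)}\ge C/\Cstep^{2T}$ over the window, so I would take $C=C(T)$ with $C/\Cstep^{2T}\ge\Cstep$, keeping us in the regime $\delta_1\ge\Cstep\delta_0$ at every single step.

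The next step is two per-step estimates, valid whenever $\cU^{t'}$ is $(\epsbase,\epsbase)$-inactive with $M:=\delta_1(t')/\delta_0(t')\ge\Cstep$ and $\ell$ influences $i\in S_b$. If $\ell\in S_b$, then \Cref{lem:strictly-convex-increasing} applied to the restriction of $\cU^{t'}$ to $S_b$ gives $|A'_{ij}|\ge\min(|A_{ij}|,|A_{j\ell}|)\ge 1-\delta_1^2(t')$ for every $j\in S_b$, while correlations inside the other clusters do not change; hence $\delta_1(t'+1)\le\delta_1(t')$, i.e.\ $Q_1(t'+1)\ge Q_1(t')$. If $\ell\notin S_b$, then the only between-cluster factors entering \eqref{eq:new-aij} for $j\in S_b$ are $|A_{i\ell}|,|A_{j\ell}|\le\delta_0(t')$, and the computation from the proof of \Cref{lem:q1-lower-bound}, kept quantitative, gives $1-|A'_{ij}|\le\delta_1^2(t')+O(\delta_0^2(t'))=\delta_1^2(t')\bigl(1+O(1/M^2)\bigr)$, so $Q_1(t'+1)\ge Q_1(t')-O(1/M^2)\ge Q_1(t')-\eta_0$ with $\eta_0:=O(\Cstep^{4T}/C^2)$ (using $M\ge C/\Cstep^{2T}$). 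In particular, over the whole window $Q_1(t+T)\ge Q_1(t)-T\eta_0$ on every outcome.

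For the favourable burst, \Cref{lem:strictly-convex-increasing} provides, for each cluster $S_a$, a sequence $\sigma_a$ of $\binom{|S_a|}{2}$ interactions internal to $S_a$ after which $\max_{i,j\in S_a}(1-|A_{ij}|)$ has shrunk by the factor $c(\alpha)<1$; since $\sigma_a$ moves only opinions in $S_a$, the concatenation $\sigma=\sigma_1\cdots\sigma_k$, of length $K'=\sum_a\binom{|S_a|}{2}\le\binom{n}{2}\le T$, shrinks $\max_{i,j\in S_a}(1-|A_{ij}|)$ by $c(\alpha)$ for all $a$ simultaneously. Thus on the event $E$ that the interactions at times $t,\dots,t+K'-1$ are exactly $\sigma$ --- which satisfies $\Pr[E\mid\cU^t]=n^{-2K'}\ge n^{-2\binom{n}{2}}=:p_{\min}>0$ --- one has $\delta_1(t+K')\le\sqrt{c(\alpha)}\,\delta_1(t)$, i.e.\ $Q_1(t+K')\ge Q_1(t)+\tfrac12\log(1/c(\alpha))$, and then the per-step estimate on the remaining $T-K'$ steps gives $Q_1(t+T)\ge Q_1(t)+\tfrac12\log(1/c(\alpha))-T\eta_0$ on $E$. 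Combining with the deterministic bound $Q_1(t+T)\ge Q_1(t)-T\eta_0$, we get $\expec[Q_1(t+T)\mid\cU^t]\ge Q_1(t)+\tfrac12 p_{\min}\log(1/c(\alpha))-T\eta_0$, and choosing $C=C(T)$ large enough that $T\eta_0=O(T\Cstep^{4T}/C^2)\le\tfrac14 p_{\min}\log(1/c(\alpha))$ finishes the proof with $c:=\tfrac14 p_{\min}\log(1/c(\alpha))>0$.

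The hard part will be the between-cluster per-step estimate: it is essential that a between-cluster interaction perturbs $\delta_1$ by a \emph{multiplicative} factor $1+O((\delta_0/\delta_1)^2)$ rather than merely a bounded factor, because otherwise the accumulated adversarial drift $T\eta_0$ could not be driven below the fixed burst gain $\tfrac12 p_{\min}\log(1/c(\alpha))$ by enlarging $C$ (recall $T$ is fixed but arbitrary). Two further points to get right: the favourable burst must be placed at the very start of the window, so that its gain cannot be eroded afterwards below zero, and the per-cluster contracting sequences compose correctly precisely because they act on disjoint sets of opinions, which is what lets a single burst shrink the maximum over all clusters at once.
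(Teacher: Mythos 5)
Your proposal is correct and follows essentially the same route as the paper's proof: the same $T_0=\binom{n}{2}$, the same contracting burst from \Cref{lem:strictly-convex-increasing} occurring with probability $n^{-O(1)}$, the same per-step bound $\delta_1^2(t'+1)\le\delta_1^2(t')+O(\delta_0^2(t'))$ for between-cluster interactions, and the same choice of $C=C(T)$ large enough to suppress the accumulated loss below the burst gain. The only difference is bookkeeping — you convert the additive perturbation of $\delta_1^2$ into a per-step multiplicative loss via a maintained lower bound on the ratio $\delta_1/\delta_0$, while the paper sums the additive errors over the window and compares to $\delta_1^2(t)$ directly — which is immaterial.
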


\begin{proof}
While the details require some care,
the idea of the proof is simpler. First, 
we show that there exists a sequence of at most $\binom{n}{2}$
interactions which decrease $\delta_1$ by
a constant factor. On the other hand,
we will see that it holds
$\delta_1^2(t+1)\le \delta_1^2(t)+O(\delta_0^2(t))$. Choosing
$C$ sufficiently large, the $\delta_0^2$
terms become sufficiently small to conclude
that, over $T$ steps, 
$\delta_1$ can grow only by an
arbitrarily small amount.
Since, as mentioned, $\delta_1$ decreases by a noticeable
amount with noticeable probability, the bound
on the expectation follows.

Let $T_0=\binom{n}{2}$ and $T\ge T_0$.
Take $\eps=\eps(\epsbase,T)$ from
\Cref{cor:bounded-q0-q1}.
Consider an $(\eps,\eps)$-inactive configuration $\cU^t$.
By \Cref{cor:bounded-q0-q1}, configuration
$\cU^{t'}$ remains $(\epsbase,\epsbase)$-inactive for
$t\le t'\le t+T$.

By the second part of \Cref{lem:strictly-convex-increasing},
for each cluster $S_a$, there exists
a sequence of $\binom{|S_a|}{2}$ interactions inside that cluster after 
which~\eqref{eq:13} is satisfied.
It follows that after the total sequence
of $\hat{T}=\sum_{a=1}^k \binom{|S_a|}{2}$
interactions
the configuration
$\cU^{t+\hat{T}}$ satisfies 
\begin{align}
\max_{\substack{i,j\in S_a\\ 1\le a\le k}}
1-\left|A^{t+\hat{T}}_{ij}\right|
&\le c_\alpha^2
\max_{\substack{i,j\in S_a\\ 1\le a\le k}}
1-\left|A_{ij}^t\right|
\end{align}
for some $c_\alpha<1$. From this it
follows $\delta_1(t+\hat{T})\le c_\alpha\delta_1(t)$
and $Q_1(t+\hat{T})\ge Q_1(t)-\log c_\alpha$.
Furthermore, as $\sum_{a=1}^k\binom{|S_a|}{2}\le\binom{n}{2}=T_0$
and as by the first part of \Cref{lem:strictly-convex-increasing}
applying additional interactions inside clusters
does not increase $\delta_1$, 
there also exists a sequence
of $T_0$ interactions satisfying
$Q_1(t+T_0)\ge Q_1(t)-\log c_\alpha$.
Since such a sequence occurs with probability
$p=n^{-2T_0}$, it follows
\begin{align}
\label{eq:79}
\Pr\left[Q_1(t+T_0)\ge Q_1(t)-\log c_\alpha
\;\big\vert\;\cU^t\right]\ge p\;.
\end{align}

On the other hand, for an
$(\epsbase,\epsbase)$-inactive configuration $\cU$ at some time,
let agent $\ell\in S_a$ influence agent $i\in S_b$
and call the new configuration $\cU'$.
If $a=b$, then it follows from \Cref{lem:strictly-convex-increasing} that $\delta_1(\cU')\le\delta_1(\cU)$.
If $a\ne b$, then for every $j\in S_b$, 
by~\eqref{eq:new-aij} we have
$|A'_{ij}|\ge
\frac{|A_{ij}|-\alpha\delta_0(t)^2}
{\sqrt{1+(2\alpha+\alpha^2)\delta_0(t)^2}}$,
which implies
$|A'_{ij}|\ge |A_{ij}| -(3\alpha+\alpha^2)\delta^2_0(\cU)$.
Putting the two cases together, there exists some $C_\alpha>0$ such
that $|A'_{ij}|\ge |A_{ij}|-C_\alpha \delta^2_0(\cU)$,
consequently 
\begin{align}
\label{eq:14}
\delta^2_1(\cU')\le \delta^2_1(\cU)+C_\alpha \delta^2_0(\cU)\;.
\end{align}
Now, given $T\ge T_0$,
we set $C=\Cstep^{2T_0}\cdot \sqrt{\frac{2TC_\alpha\Cstep^{2T}}{p\log c_\alpha^{-1}}}$. Assume that $\cU^t$
is $(\eps,\eps)$-inactive and satisfies
$\delta_1(t)\ge C\delta_0(t)$. Clearly,
that implies
\begin{align}
\label{eq:15}
\delta_0^2(t)\le
C^{-2}\cdot \delta_1^2(t)\le \frac{p\log c_\alpha^{-1}}{2TC_\alpha\Cstep^{2T}}\cdot \delta_1^2(t)
\;.
\end{align}
Furthermore, from
\Cref{cor:bounded-q0-q1}, both
$\delta_0(t)$ and $\delta_1(t)$
can change by at most factor $\Cstep$
in one step. Hence, $\Cstep^{2T_0}\delta_1(t+T_0)\ge C\delta_0(t+T_0)$,
and
\begin{align}
\label{eq:16}
\delta_0^2(t+T_0)\le
\frac{p\log c_\alpha^{-1}}{2TC_\alpha\Cstep^{2T}}\cdot \delta_1^2(t+T_0)
\;.
\end{align}
Applying~\eqref{eq:14},
\Cref{cor:bounded-q0-q1},
and~\eqref{eq:15},
\begin{align}
\label{eq:80}
\delta^2_1(t+T)\le \delta^2_1(t)+
C_\alpha\sum_{t'=0}^{T-1}\delta^2_0(t+t')
\le \delta^2_1(t)+
TC_\alpha\Cstep^{2T}\delta^2_0(t)
\le \delta_1^2(t)\left(1+\frac{p}{2}\log c_\alpha^{-1}\right)\;,
\end{align}
hence
\begin{align}
\label{eq:83}
Q_1(t+T)&\ge Q_1(t)-\frac{1}{2}\log\left(
1+\frac{p}{2}\log c_\alpha^{-1}
\right)\ge Q_1(t)-\frac{p}{4}\log c_\alpha^{-1}\;.
\end{align}
On the other hand, due to~\eqref{eq:79}, with probability at least
$p$ it holds $Q_1(t+T_0)\ge Q_1(t)+\log c_\alpha^{-1}$.
Redoing the calculation in~\eqref{eq:80},
but replacing~\eqref{eq:15} 
with~\eqref{eq:16}, it follows
\begin{align}
\delta_1^2(t+T)\le 
\delta_1^2(t+T_0)+C_\alpha\sum_{t'=0}^{T-T_0-1}\delta_0^2(t+T_0+t')\le
\delta_1^2(t+T_0)
\left(1+\frac{p}{2}\log c_\alpha^{-1}\right)\;,
\end{align}
which implies $Q_1(t+T)\ge Q_1(t+T_0)-\frac{p}{4}\log c_{\alpha}^{-1}$.
Hence, with probability at least $p$
it holds
\begin{align}
\label{eq:81}
Q_1(t+T)&\ge Q_1(t)+\log c_\alpha^{-1}-\frac{p}{4}\log c_{\alpha}^{-1}
\ge Q_1(t)+\frac{3}{4}\log c_\alpha^{-1}\;.
\end{align}
Putting~\eqref{eq:83} and~\eqref{eq:81} together, we conclude
\begin{align}
\expec[Q_1(t+T)\;\vert\;\cU^t]
&\ge Q_1(t)+\frac{3p}{4}\log c_\alpha^{-1}
-\frac{p}{4}\log c_\alpha^{-1}
\ge Q_1(t)+\frac{p}{2}\log c_\alpha^{-1}\;.\qedhere
\end{align}
\end{proof}

\section{Consistent configurations and expectation of $Q_0$}
\label{sec:q0-expectation}

We turn to establishing the expectation inequality
$\expec [Q_0(t+T)\;\vert\;\cU^t]\ge Q_0(t)+c$.
We will proceed according to the outline
explained in \Cref{sec:plan}.
Accordingly, we will use a concept of a consistent configuration. In fact, we need a slightly
more general definition compared to
the one given by~\eqref{eq:23}.

\begin{definition}
Let $\cU$ be an $(\epsbase,\epsbase)$-inactive configuration, $a\ne b$ be indices
of two clusters of $\cU$, and $m\ge 0$. We say that
$\cU$ is \emph{$(a,b,m)$-consistent} 
if
for all $i,i'\in S_a,j,j'\in S_b$ it holds:
\begin{enumerate}
\item 
$\sign A_{i'j'}=\sign A_{ii'}\sign A_{ij}\sign A_{jj'}\ne 0$.
\item $|A_{i'j'}|\ge m\delta_0(\cU)$. 
\end{enumerate}

We also say that $\cU$ is $(a,b)$-consistent if it is $(a,b,0)$-consistent.
If $\cU$ is $(a,b)$-consistent and
$\cU'$ is reachable in one step from $\cU$, we say that
$\cU'$ \emph{remains
consistent} if it is also $(a,b)$-consistent with
$\sign A'_{ij}=\sign A_{ij}$ for every $i\in S_a,j\in S_b$.
For $\cU$ an $(a,b)$-consistent configuration we define
\begin{align}
\label{eq:x01}
\deltars(\cU):=\min_{i\in S_a,j\in S_b}|A_{ij}|\;.
\end{align}
\end{definition}

\begin{claim}
\label{cl:new-aij}
Let $\cU$ be a configuration with $|A_{i\ell}|\ge 1/2$ 
and $\cU'$ be reachable in one step from $\cU$ by $\ell$
influencing $i$. Then, for every agent $j$:
\begin{enumerate}
\item If $A_{ij}<0$ and $A_{i\ell}A_{j\ell}\ge 0$, 
then $A'_{ij}\ge A_{ij}+\frac{\alpha}{2(1+\alpha)}|A_{j\ell}|$.
\item If $A_{ij}\ge 0$ and $A_{i\ell}A_{j\ell}\ge 0$, then $A'_{ij}\ge\frac{\alpha}{2(1+\alpha)}|A_{j\ell}|$.
\end{enumerate}
\end{claim}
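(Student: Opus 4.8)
The plan is to work directly from the explicit update formula~\eqref{eq:new-aij}, which gives $A'_{ij} = \bigl(A_{ij} + \alpha A_{i\ell}A_{j\ell}\bigr)/D$ with $D := \sqrt{1+(2\alpha+\alpha^2)A_{i\ell}^2}$. The only facts I need about the denominator are the elementary bounds $1 \le D \le 1+\alpha$: the lower bound is immediate since $\alpha>0$ and $A_{i\ell}^2\ge 0$, and the upper bound follows from $A_{i\ell}^2\le 1$ together with $1+(2\alpha+\alpha^2) = (1+\alpha)^2$. Note that the hypothesis $|A_{i\ell}|\ge 1/2$ is not needed for these.

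Next I would estimate the ``drift'' term $\alpha A_{i\ell}A_{j\ell}/D$. The hypothesis $A_{i\ell}A_{j\ell}\ge 0$ makes this term nonnegative and equal to $\alpha|A_{i\ell}|\,|A_{j\ell}|/D$; now using $|A_{i\ell}|\ge 1/2$ and $D \le 1+\alpha$ gives $\alpha A_{i\ell}A_{j\ell}/D \ge \frac{\alpha}{2(1+\alpha)}|A_{j\ell}|$. This single inequality is the heart of both parts.

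For part~(2), where $A_{ij}\ge 0$, I simply drop the nonnegative term $A_{ij}/D$ from $A'_{ij}$ and invoke the drift bound. For part~(1), where $A_{ij}<0$, I split $A'_{ij} = \frac{A_{ij}}{D} + \frac{\alpha A_{i\ell}A_{j\ell}}{D}$ and observe that, since $A_{ij}<0$ and $D\ge 1$, we have $\frac{A_{ij}}{D}\ge A_{ij}$; combined with the drift bound this yields $A'_{ij}\ge A_{ij} + \frac{\alpha}{2(1+\alpha)}|A_{j\ell}|$.

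There is no real obstacle here; the only point requiring care is the direction of monotonicity when dividing by $D$: one uses $D\ge 1$ for the $A_{ij}<0$ manipulation in part~(1) and $D\le 1+\alpha$ for the drift lower bound, so both elementary bounds on $D$ are genuinely used. The hypothesis $|A_{i\ell}|\ge 1/2$ enters only to express the drift with the explicit constant $\frac{\alpha}{2(1+\alpha)}$; any fixed lower bound on $|A_{i\ell}|$ would give the analogous statement with a correspondingly adjusted constant.
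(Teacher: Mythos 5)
Your proof is correct and follows exactly the paper's (much terser) argument: split off the drift term $\alpha A_{i\ell}A_{j\ell}/D \ge \frac{\alpha}{2(1+\alpha)}|A_{j\ell}|$ via $|A_{i\ell}|\ge 1/2$ and $D\le 1+\alpha$, then use $A_{ij}/D\ge A_{ij}$ when $A_{ij}<0$ and $A_{ij}/D\ge 0$ when $A_{ij}\ge 0$. Nothing to add.
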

\begin{proof}
    \begin{enumerate}
        \item
        Using $A_{ij}<0$, it holds $A_{ij}'
        = \frac{A_{ij} + \alpha A_{i\ell} A_{j\ell}}{\sqrt{1+(2\alpha+\alpha^2)A_{i\ell}^2}}
        \ge A_{ij}+\frac{\alpha}{2(1+\alpha)}|A_{j\ell}|$.
        \item Similarly, but this time using $A_{ij}\ge 0$,
        it holds
        $A'_{ij}\ge \frac{\alpha}{2(1+\alpha)}|A_{j\ell}|$.
        \qedhere
    \end{enumerate}
\end{proof}

\begin{lemma}
\label{lem:consistent-reachable}
There exist $\eps>0$, $\cact>0$, and $K$ such that the following holds:
Let $\cU^t$ be $(\eps,\eps)$-inactive with
$Q_0(t)<\infty$, in particular $\cU^t$
has at
least two clusters.
Let $S_a$,
$S_b$ be the clusters realizing 
$\delta_0(t)=\max_{i\in S_a,j\in S_b}|A_{ij}^t|$.
Then, there
exists a sequence of $K$ interactions such that
$\cU^{t+K}$ is $(a,b,\cact)$-consistent.
\end{lemma}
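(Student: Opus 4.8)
The plan is to first reduce $(a,b,\cact)$-consistency to a single ``star'' condition anchored at a maximizing pair, and then to enforce that condition by propagating correlations outward from the anchor through a bounded number of influences. Fix $i^*\in S_a$ and $j^*\in S_b$ with $|A^t_{i^*j^*}|=\delta_0(t)>0$ (these exist because $Q_0(t)<\infty$). Using the within-cluster sign identity of \Cref{cl:consistent-signs}, one checks that an $(\epsbase,\epsbase)$-inactive configuration is $(a,b,m)$-consistent as soon as
\[
\sign A_{ij}=\sign A_{ii^*}\,\sign A_{i^*j^*}\,\sign A_{j^*j}\neq 0
\quad\text{and}\quad |A_{ij}|\ge m\,\delta_0
\]
hold for \emph{every} $i\in S_a$, $j\in S_b$: for arbitrary $i,i'\in S_a$ and $j,j'\in S_b$ one expands $\sign A_{ii'}$ and $\sign A_{jj'}$ by \Cref{cl:consistent-signs} and $\sign A_{ij},\sign A_{i'j'}$ by the star condition, whereupon the anchor signs occur squared and cancel, leaving $\sign A_{i'j'}=\sign A_{ii'}\sign A_{ij}\sign A_{jj'}$. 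So it suffices to reach, in $K$ steps, a configuration obeying the star condition with every between-cluster correlation of absolute value $\Omega(\delta_0(t))$.

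The engine for the propagation is an elementary consequence of~\eqref{eq:new-aij}: if $\ell$ influences $i$ and $j\ne i$, then, writing $d=\alpha A_{i\ell}A_{j\ell}$ for the induced ``drift'', one step sends the ratio $r=A_{ij}/d$ to $\tfrac{r+1}{1+\alpha}$, while $d$ keeps its sign and $|d|$ does not decrease. This affine map contracts to its positive fixed point $1/\alpha$ at rate $1+\alpha$, so once the initial $|r|$ is $O(1)$, after $O(1)$ repetitions of ``$\ell$ influences $i$'' one gets $r\ge\tfrac{1}{2\alpha}$, i.e.\ $A_{ij}$ acquires $\sign d$ and $|A_{ij}|\ge\tfrac{1}{2\alpha}|d|$. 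I would apply this in three phases, in each phase choosing $\ell$ so that $\sign d$ equals the target sign of $A_{ij}$ and $|d|=\Omega(\delta_0(t))$: (i) for each $i\in S_a\setminus\{i^*\}$ let $i^*$ influence $i$ repeatedly ($d=\alpha A_{ii^*}A_{i^*j^*}$, so $|d|\ge\tfrac{\alpha}{2}\delta_0(t)$ and $\sign d$ is correct), fixing all spokes $A_{ij^*}$, $i\in S_a$, with $|A_{ij^*}|\ge\tfrac14\delta_0(t)$; (ii) let $j^*$ influence $i^*$ repeatedly ($d=\alpha A_{i^*j^*}A_{jj^*}$), which simultaneously fixes all spokes $A_{i^*j}$, $j\in S_b$, without touching the spokes from (i); (iii) for each $i\in S_a\setminus\{i^*\}$ let $j^*$ influence $i$ repeatedly ($d=\alpha A_{ij^*}A_{jj^*}$, with $|A_{ij^*}|\ge\tfrac14\delta_0(t)$ by (i)), which fixes all interior pairs $A_{ij}$, $i\ne i^*$, $j\in S_b$, while only magnifying the already-fixed spokes. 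The phases are ordered so that each correlation is corrected from a value it had before it was ever touched, or after only $O(1)$ modifications, each of size $O(\delta_0(t))$; this keeps every initial $|r|$ at $O(1)$, so $O(1)$ repetitions per agent suffice and $K=O(n)$ is an absolute constant.

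Finally, since $K$ is fixed, I would invoke \Cref{cor:bounded-q0-q1} with $\eps=\eps(\epsbase,K)$ to guarantee that $\cU^{t'}$ stays $(\epsbase,\epsbase)$-inactive for all $t\le t'\le t+K$ — so the clusters never change and \Cref{cl:consistent-signs} applies throughout — and that $\delta_0(t')=\Theta(\delta_0(t))$, which is exactly what keeps the ratios bounded and lets the $\Omega(\delta_0(t))$ magnitude bounds translate into $|A^{t+K}_{ij}|\ge\cact\,\delta_0(\cU^{t+K})$ at the end. Thus $\cU^{t+K}$ is $(a,b,\cact)$-consistent.

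The main obstacle is controlling a single influence step: a priori one step can almost cancel, or overshoot past $0$, the correlation it is supposed to ``boost'', and a static estimate like \Cref{cl:new-aij} only covers the non-cancelling sign pattern $A_{i\ell}A_{j\ell}\ge 0$. The key is that \emph{iterating} the influence realizes the contracting recursion $r\mapsto\tfrac{r+1}{1+\alpha}$ with attracting positive fixed point $1/\alpha$, which pins down both the sign and the $\Omega(\delta_0)$ lower bound after $O(1)$ steps. The remaining work — which I expect to be the most delicate part of the write-up — is the bookkeeping: choosing $\ell$ to aim the drift, ordering the three phases so they do not undo one another, and verifying that $\delta_0$ stays within a constant factor so that the number of repetitions, and hence $K$, is truly absolute and does not depend circularly on itself.
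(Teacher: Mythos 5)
Your proposal is correct, and its architecture matches the paper's: anchor at a maximizing pair $(i^*,j^*)$, reduce full $(a,b)$-consistency to a ``star'' condition via \Cref{cl:consistent-signs}, enforce the star condition by a bounded number of repeated influences arranged in phases whose iteration counts depend only on earlier phases, and invoke \Cref{cor:bounded-q0-q1} to keep the clusters fixed and to convert the $\Omega(\delta_0(t))$ lower bounds into $\cact\,\delta_0(t+K)$. The genuine difference is the engine and, consequently, the phase layout. The paper's \Cref{cl:new-aij} is an additive increment bound that requires $|A_{i\ell}|\ge 1/2$, so both of its phases have an anchor influencing agents of its \emph{own} cluster ($i_0$ influences $S_a$, then $j_0$ influences $S_b$), and it needs a separate ``one more step after crossing zero'' argument to get the final magnitude. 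Your observation that the ratio $r=A_{ij}/(\alpha A_{i\ell}A_{j\ell})$ obeys the \emph{exact} affine recursion $r\mapsto \frac{r+1}{1+\alpha}$ (the normalizing square roots cancel between $A'_{ij}$ and $A'_{i\ell}$) needs no lower bound on $|A_{i\ell}|$; this is what licenses your phases (ii) and (iii), where $j^*$ influences agents of $S_a$ across the cluster gap, and it packages sign acquisition and the $|A_{ij}|\ge \frac{1}{2\alpha}|d|$ bound into a single contraction statement. The point you flag but do not fully execute — non-circularity of the repetition counts — resolves exactly as in the paper, where $K_1$ is allowed to depend on $\Cstep^{K_0}$: your phase-(iii) count may depend on $\Cstep$ raised to the (already fixed) total length of phases (i)–(ii). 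So the write-up would go through; it is a somewhat cleaner quantitative core wrapped around the same proof.
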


\begin{proof}
Let $K_0=\lceil\frac{2(1+\alpha)}{\alpha}\rceil+1$,
$K_1=\lceil \frac{4(1+\alpha)^2\Cstep^{K_0}}{\alpha^2}\rceil+1$
and $K=n\cdot (K_0+K_1)$.
Then, let $\eps=\eps(\epsbase, K)$ from
\Cref{cor:bounded-q0-q1}.

Let $\cU=\cU^t$ and let $i_0\in S_a,j_0\in S_b$ such that
$|A_{i_0j_0}|=\delta_0(t)$. 
We propose the following sequence of interactions:
First, let $i_0$ influence every agent $i\in S_a$
for $K_0$ times.
Let us call the new intermediate configuration $\Utilde$.
Then, for every $j\in S_b$, let $j_0$ influence $j$
at least $K_1$ times, such that the total
number of interactions is $K$. 
Let us call the final configuration $\Uhat$.

Due to symmetry we can assume w.l.o.g.~that $A_{i_0j_0}>0$, and $A_{ii_0}>0,A_{jj_0}>0$
for every $i\in S_a,j\in S_b$. Accordingly, to show that $\Uhat$ is 
$(a,b,\cact)$-consistent
it is sufficient to prove
$\Ahat_{ij}>0$ and 
$\Ahat_{ij}\ge \cact \delta_0(\Uhat)$ for every
$i,i'\in S_a,j,j'\in S_b$.

First, let $i\in S_a$ and let us analyze
$\Atilde_{ij_0}$ in the intermediate configuration
$\Utilde$.
Applying \Cref{cl:new-aij} for $\ell=i_0$ and $j=j_0$,
and observing that by assumption $|A_{ij_0}|\le A_{i_0j_0}$,
it follows that
$\Atilde_{ij_0}\ge\frac{\alpha}{2(1+\alpha)}A_{i_0j_0}$.
By \Cref{cor:bounded-q0-q1}, 
it also holds
$\Atilde_{ij_0}\le \Cstep^{K_0}A_{i_0j_0}$.

Let us move on to the configuration $\Uhat$.
Let $i\in S_a$ and $j\in S_b$.
By \Cref{cor:bounded-q0-q1} and the preceding calculation,
\begin{align}
|\Atilde_{ij}|
\le \Cstep^{K_0}A_{i_0j_0}\le\Cstep^{K_0}\frac{2(1+\alpha)}
{\alpha}\Atilde_{ij_0}
\end{align}
Applying \Cref{cl:new-aij} for $\ell=j_0$, $i=j$ and $j=i$,
if $\Atilde_{ij}<0$, then
after (at least) $K_1-1$ interactions of $j_0$ influencing $j$,
it holds 
\begin{align}
\Ahat_{ij}\ge\min\left(0, \Atilde_{ij}+(K_1-1)\frac{\alpha}{2(1+\alpha)}\Atilde_{ij_0}\right)
\ge \min\left(0, \Atilde_{ij}+(K_1-1)\frac{\alpha^2}{4(1+\alpha)^2\Cstep^{K_0}}|\Atilde_{ij}|\right)
\ge 0\;.
\end{align}
Therefore, regardless of the sign of $\Atilde_{ij}$,
after $K_1$ interactions it holds
$\Ahat_{ij}\ge \frac{\alpha}{2(1+\alpha)}\Atilde_{ij_0}
\ge\frac{\alpha^2}{4(1+\alpha)^2}A_{i_0j_0}
\ge\frac{\alpha^2}{4(1+\alpha)^2\Cstep^K}\delta_0(t+K)$,
where the last inequality follows
by a crude application of \Cref{cor:bounded-q0-q1}.
Indeed, that implies that $\cU^{t+K}$ is
$(a,b,\cact)$-consistent for
$\cact=\frac{\alpha^2}{4(1+\alpha)^2\Cstep^{K}}$.
\end{proof}

\begin{claim}
\label{cl:aij-consistent}
There exists $\eps'>0$ such that:
Let $\cU$ be an
$(\eps',\eps')$-inactive and $(a,b)$-consistent configuration, $i\in S_a,j\in S_b$
and $\cU'$ a configuration obtained in one step from $\cU$ by $\ell$ influencing $i$.
\begin{enumerate}
    \item If $\ell\notin S_a\cup S_b$, then
    $|A'_{ij}-A_{ij}|\le (3\alpha+\alpha^2)\delta_0^2(\cU)$.
    \item If $\ell\in S_b$, then $\sign A'_{ij}=\sign A_{ij}$
    and $|A'_{ij}|\ge|A_{ij}|$.
    \item If $\ell\in S_a$, then $\sign A'_{ij}=\sign A_{ij}$
    and $|A'_{ij}|\ge\min(|A_{ij}|,|A_{j\ell}|)$.
\end{enumerate}
\end{claim}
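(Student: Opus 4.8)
The plan is to treat all three cases by substituting into the explicit formula~\eqref{eq:new-aij}: when $\ell$ influences $i$, one has $A'_{ij}=(A_{ij}+\alpha A_{i\ell}A_{j\ell})/s$ with $s=\sqrt{1+(2\alpha+\alpha^2)A_{i\ell}^2}\ge 1$. The structural fact underlying cases~2 and~3 is that $(a,b)$-consistency forces the cross-term $\alpha A_{i\ell}A_{j\ell}$ to have the same sign as $A_{ij}$, so it can only reinforce (never cancel) $A_{ij}$; in case~1 both the cross-term and the renormalization correction are $O(\delta_0^2(\cU))$. Concretely for case~1: since $\ell\notin S_a\cup S_b$, agent $\ell$ lies in a cluster different from the clusters of $i$ and of $j$, so $|A_{i\ell}|\le\delta_0(\cU)$ and $|A_{j\ell}|\le\delta_0(\cU)$ directly from the definition of $\delta_0$. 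Then $1\le s\le 1+\tfrac{2\alpha+\alpha^2}{2}\delta_0^2(\cU)$, and writing $A'_{ij}-A_{ij}=\bigl(\alpha A_{i\ell}A_{j\ell}-A_{ij}(s-1)\bigr)/s$ and bounding $s\ge 1$, $|A_{ij}|\le 1$ gives $|A'_{ij}-A_{ij}|\le\alpha\delta_0^2(\cU)+\tfrac{2\alpha+\alpha^2}{2}\delta_0^2(\cU)\le(3\alpha+\alpha^2)\delta_0^2(\cU)$; no smallness of $\eps'$ is needed for this part.

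For cases~2 and~3 I would first fix the sign of $A'_{ij}$. Instantiating the universally quantified indices in the consistency identity $\sign A_{u'v'}=\sign A_{uu'}\sign A_{uv}\sign A_{vv'}$ (with $u,u'\in S_a$, $v,v'\in S_b$): take $(u,u',v,v')=(i,i,\ell,j)$ when $\ell\in S_b$, and $(u,u',v,v')=(i,\ell,j,j)$ when $\ell\in S_a$. Using that diagonal correlations equal $1$, and that $A_{i\ell}\ne 0$, both instantiations collapse to $\sign A_{ij}=\sign A_{i\ell}\sign A_{j\ell}$. Here $A_{i\ell}\ne 0$ because in case~3 the agents $i,\ell$ lie in one cluster (so $|A_{i\ell}|>1/2$), while in case~2 the correlation $A_{i\ell}$ is between $S_a$ and $S_b$ and consistency makes every such correlation nonzero. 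Hence $A_{ij}$ and $\alpha A_{i\ell}A_{j\ell}$ share a sign, the numerator of~\eqref{eq:new-aij} has sign $\sign A_{ij}$ and absolute value $|A_{ij}|+\alpha|A_{i\ell}||A_{j\ell}|$, and since $s>0$ we obtain $\sign A'_{ij}=\sign A_{ij}$ and $|A'_{ij}|=(|A_{ij}|+\alpha|A_{i\ell}||A_{j\ell}|)/s$ in both cases.

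It remains to lower-bound the magnitude. For case~3, write $r=|A_{i\ell}|\le 1$, $p=|A_{ij}|$, $q=|A_{j\ell}|$; then $p+\alpha r q\ge\min(p,q)(1+\alpha r)$, so $|A'_{ij}|\ge\min(p,q)\,(1+\alpha r)/s\ge\min(p,q)$, where the last step uses $(1+\alpha r)^2-s^2=2\alpha r(1-r)\ge 0$; this is item~3. For case~2, $\ell\in S_b$ forces $|A_{j\ell}|>1-\eps'^2\ge 1/2$ while $|A_{i\ell}|,|A_{ij}|\le\delta_0(\cU)\le\eps'$, and the desired $|A'_{ij}|\ge|A_{ij}|$ is equivalent to $\alpha|A_{i\ell}||A_{j\ell}|\ge|A_{ij}|(s-1)$; since $s-1\le\tfrac{2\alpha+\alpha^2}{2}|A_{i\ell}|^2$, the right-hand side is at most $\tfrac{2\alpha+\alpha^2}{2}\eps'^2|A_{i\ell}|$ whereas the left-hand side is at least $\tfrac{\alpha}{2}|A_{i\ell}|$, so item~2 holds as soon as $\eps'^2\le\tfrac{1}{2+\alpha}$ (and, as always, $\eps'\le\epsbase$).

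The only delicate part of the argument is the sign bookkeeping in cases~2 and~3: one must pick the correct instantiation of the consistency identity and verify $A_{i\ell}\ne 0$, so that the cross-term provably reinforces $A_{ij}$. Everything else reduces to the elementary inequality $(1+\alpha r)^2\ge 1+(2\alpha+\alpha^2)r^2$ for $r\in[0,1]$ and the crude estimate $s-1\le\tfrac{2\alpha+\alpha^2}{2}A_{i\ell}^2$.
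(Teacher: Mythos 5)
Your proof is correct and takes essentially the same route as the paper: case~1 is the same $O(\delta_0^2)$ estimate on the cross-term and the renormalization (the paper uses $1/\sqrt{1+x}\ge 1-x$ where you bound $s-1$, a cosmetic difference), and cases~2 and~3 rest on the same observation that consistency forces $\sign A_{ij}=\sign A_{i\ell}\sign A_{j\ell}$ so the numerator's terms reinforce, followed by the same inequality $(1+\alpha r)^2\ge 1+(2\alpha+\alpha^2)r^2$ in case~3 and the same small-$\eps'$ comparison in case~2. Your explicit instantiation of the consistency identity and the check that $A_{i\ell}\ne 0$ merely spell out what the paper leaves implicit.
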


\begin{proof}\leavevmode
\begin{enumerate}
    \item 
    Let $\delta=\delta_0(\cU)$.
    Assume that $A_{ij}\ge 0$.
    By~\eqref{eq:new-aij} and using
$\frac{1}{\sqrt{1+x}}\ge 1-x$, it holds
$A'_{ij}\ge \frac{A_{ij}}{\sqrt{1+(2\alpha+\alpha^2)\delta^2}}-\alpha\delta^2
\ge A_{ij}-(3\alpha+\alpha^2)\delta^2$.
Similarly,
$A'_{ij}\le A_{ij}+\alpha\delta^2$.
Therefore,
$|A'_{ij}-A_{ij}|\le
(3\alpha+\alpha^2)\delta^2$.
A similar calculation obtains for $A_{ij}<0$.
\item
First $\sign A'_{ij}=\sign A_{ij}$ follows 
from~\eqref{eq:new-aij}
as
$\sign A_{ij}=\sign A_{i\ell}\cdot \sign A_{j\ell}$ by consistency.
Furthermore, we have
\begin{align}
|A'_{ij}|\ge\left(|A_{ij}|+\frac{\alpha}{2}|A_{i\ell}|\right)(1-(2\alpha+\alpha^2)A_{i\ell}^2)
\ge|A_{ij}|\;,
\end{align}
where in the last step we used that $|A_{i\ell}|\le \eps'$
for a sufficiently small fixed $\eps'$.
\item
Again, $\sign A'_{ij}=\sign A_{ij}$
follows by consistency from $\sign A_{ij}=\sign A_{i\ell}\cdot\sign A_{j\ell}$,
and then we have
\begin{align}
|A_{ij}|\ge\min(|A_{ij}|,|A_{j\ell}|)\cdot
\frac{1+\alpha|A_{i\ell}|}{\sqrt{1+(2\alpha+\alpha^2)A_{i\ell}^2}}
\ge\min(|A_{ij}|,|A_{j\ell}|)\;,
\end{align}
where the last inequality holds since
$(1+\alpha x)^2=1+2\alpha x+\alpha^2 x^2
\ge 1+(2\alpha+\alpha^2)x^2$ for $0\le x\le 1$.\qedhere
\end{enumerate}
\end{proof}

Recall that for an $(a,b)$-consistent
configuration, we defined
$\deltars(\cU)=\min_{i\in S_a,j\in S_b}|A_{ij}|$
In the next two lemmas we study this quantity.
First, we show that there exists 
a fixed length sequence of interactions
that increases $\deltars$ noticeably. Then,
we show that over any constant number of interactions,
the configuration remains consistent and furthermore
$\deltars$ cannot decrease by more than
a negligible amount.

\begin{lemma}
\label{lem:increase-delta-min}
There exist $\eps>0$, $\cadv>0$, and 
$K$ such that the following
holds. Let $\cU^t$ be an $(\eps,\eps)$-inactive
configuration that remains
$(a,b)$-consistent for any sequence of $K$ interactions.
Then, there exists a sequence of $K$ interactions
such that
$\deltars(t+K)\ge (1+\cadv)\cdot\deltars(t)$.
\end{lemma}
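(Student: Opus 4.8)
The plan is to exhibit an explicit sequence of $K$ interactions, each of which consists of some agent of $S_b$ influencing some agent of $S_a$, that raises \emph{every} correlation between $S_a$ and $S_b$ above $(1+\cadv)\deltars(\cU^t)$. As a preliminary step I would normalize signs exactly as in the proof of \Cref{lem:consistent-reachable}: flipping the signs of opinions preserves the dynamics, $(\eps,\eps)$-inactivity, $(a,b)$-consistency, and the quantity $\deltars$, so I may assume without loss of generality that $A_{ij}>0$ for all $i\in S_a$ and $j\in S_b$, as well as for all $i,i'\in S_a$ and all $j,j'\in S_b$. Taking $\eps$ small enough and invoking \Cref{cor:bounded-q0-q1} (note that $Q_0(t)<\infty$, since $(a,b)$-consistency forces every correlation between $S_a$ and $S_b$ to be nonzero), the configuration stays $(\eps',\eps')$-inactive throughout the $K$ steps for a constant $\eps'$ we fix in advance, small enough for the estimates below and for \Cref{cl:aij-consistent} to apply; and by hypothesis it stays $(a,b)$-consistent throughout.

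Two observations drive the argument. First, monotonicity: under any interaction in which $\ell\in S_b$ influences $i\in S_a$, \Cref{cl:aij-consistent}(2) shows every correlation $A_{i'j'}$ with $i'\in S_a$, $j'\in S_b$ is non-decreasing and keeps its positive sign (indeed only correlations incident to $i$ change, and those among them of the form $A_{ij}$, $j\in S_b$, do not decrease); hence, along our whole sequence, $\deltars$ is non-decreasing and $A_{i\ell}\ge\deltars(\cU^t)$ for all $i\in S_a$, $\ell\in S_b$. Second, a per-step gain: when $\ell\in S_b$ influences $i\in S_a$, then for every $j\in S_b$ with $A_{ij}\le 2\deltars(\cU^t)$ one has $A'_{ij}\ge A_{ij}+\tfrac{\alpha}{2}\deltars(\cU^t)$. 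Indeed, in \eqref{eq:new-aij} the perturbation $\alpha A_{i\ell}A_{j\ell}$ is positive (after normalization) with $A_{j\ell}\ge 1-\eps^2$, so it contributes at least $\alpha A_{i\ell}(1-\eps^2)\ge\tfrac{\alpha}{2}A_{i\ell}\ge\tfrac{\alpha}{2}\deltars(\cU^t)$ for $\eps$ small, while the denominator correction together with the slack in $A_{j\ell}$ costs only $O(\eps^2)\bigl(A_{i\ell}+\deltars(\cU^t)\bigr)$, which is negligible for $\eps$ small because $A_{i\ell}$ is the very quantity generating the gain and $A_{ij}\le 2\deltars(\cU^t)$.

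Now fix any $\ell_0\in S_b$, put $K_0:=\lceil 2/\alpha\rceil+1$, $K:=nK_0$, and $\cadv:=\min(\alpha/2,1)$, and take the sequence that processes the agents of $S_a$ one at a time, devoting to each $i\in S_a$ a block of $K_0$ interactions in which $\ell_0$ influences $i$. Within the block for $i$, every correlation $A_{ij}$ ($j\in S_b$) either rises above $2\deltars(\cU^t)$ at some step — and then stays above it by monotonicity — or receives $K_0$ full increments of $\tfrac{\alpha}{2}\deltars(\cU^t)$, ending at $\ge K_0\cdot\tfrac{\alpha}{2}\deltars(\cU^t)\ge(1+\tfrac{\alpha}{2})\deltars(\cU^t)$; in either case it finishes at $\ge(1+\cadv)\deltars(\cU^t)$. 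Since a block devoted to $i$ changes only correlations incident to $i$, later blocks do not disturb these bounds, so after all $|S_a|\le n$ blocks every correlation between $S_a$ and $S_b$ is $\ge(1+\cadv)\deltars(\cU^t)$, i.e., $\deltars(\cU^{t+K})\ge(1+\cadv)\deltars(\cU^t)$, as required.

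The step I expect to demand the most care is the per-step gain estimate and its surrounding bookkeeping: one must verify that the gain $\tfrac{\alpha}{2}\deltars(\cU^t)$ dominates the error \emph{uniformly in the possibly minuscule value of $\deltars(\cU^t)$}, which hinges on the dominant part of the error carrying a factor of the between-cluster correlation $A_{i\ell}$ (so that its ratio to the gain is $\Theta(1/\eps^2)$) and on restricting attention to correlations that are at most $2\deltars(\cU^t)$; one must also check that monotonicity of $\deltars$, preservation of signs, and persistence of $(\eps',\eps')$-inactivity hold at every one of the $K$ steps, which follows from \Cref{cl:aij-consistent}, \Cref{cl:consistent-signs}, \Cref{cor:bounded-q0-q1}, and the hypothesis that $\cU^t$ remains $(a,b)$-consistent under any sequence of $K$ interactions.
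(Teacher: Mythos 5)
Your proof is correct, but it takes a genuinely different route from the paper's. The paper lets every ordered pair interact directly: it chooses a sequence of $n^2$ interactions in which each $i\in S_a$ influences each $j\in S_b$ at least once, and the gain comes from the self-amplification in \eqref{eq:new-aij}, namely $|A'_{ij}|\ge\frac{(1+\alpha)|A_{ij}|}{\sqrt{1+(2\alpha+\alpha^2)A_{ij}^2}}\ge(1+\cadv)|A_{ij}|$ because $|A_{ij}|\le\eps'$ keeps the denominator near $1$; combined with the monotonicity from \Cref{cl:aij-consistent}, each cross-correlation is multiplied by at least $1+\cadv$ at its designated step and never decreases otherwise. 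You instead fix a single anchor $\ell_0\in S_b$ and let it influence each $i\in S_a$ in blocks, extracting an \emph{additive} gain of $\tfrac{\alpha}{2}\deltars$ per step from the cross term $\alpha A_{i\ell_0}A_{j\ell_0}$ with $A_{j\ell_0}\approx 1$ — a mechanism much closer to the one the paper uses in \Cref{lem:consistent-reachable} via \Cref{cl:new-aij} than to its proof of this lemma. Your error bookkeeping checks out: the dominant error term carries a factor $A_{i\ell_0}^2\le\eps' A_{i\ell_0}$ against a gain of order $\alpha A_{i\ell_0}$, and the restriction to correlations below $2\deltars(\cU^t)$ together with monotonicity makes the case split airtight; the use of $(a,b)$-consistency to guarantee $Q_0(t)<\infty$ before invoking \Cref{cor:bounded-q0-q1} is also a point the paper leaves implicit. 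The trade-off: the paper's multiplicative argument needs no threshold or case analysis and is shorter, while yours uses only $O(n)$ interactions with a single influencer and yields an explicit $\cadv=\min(\alpha/2,1)$ independent of $\eps'$.
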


\begin{proof}
Let $K=n^2$ and $\eps=\eps(\eps',K)$, where
$\eps'$ comes from \Cref{cl:aij-consistent}
and $\eps(\eps',K)$ from \Cref{cor:bounded-q0-q1}.
In particular, the configuration remains
$(\eps',\eps')$-inactive for $t\le t'\le t+K$.

Let us take any sequence of $K$ interactions 
where all interactions are between $S_a$
and $S_b$ and, furthermore,
for every $i\in S_a$ and $j\in S_b$,
agent $i$ influences $j$ at least once (and perhaps multiple times
so that the total number of interactions is $K$).

By \Cref{cl:aij-consistent}, for every $i\in S_a,j\in S_b$,
the sequence $|A^{t'}_{ij}|$ is nondecreasing for $t\le t'\le t+K$.
Furthermore,
there exists at least one time $t'$ where,
applying~\eqref{eq:new-aij},
\begin{align}
\left|A^{t'+1}_{ij}\right|
&\ge\frac{(1+\alpha)|A^{t'}_{ij}|}{
\sqrt{1+(2\alpha+\alpha^2)(A^{t'}_{ij})^2}}
\ge \frac{1+\alpha}{\sqrt{1+(2\alpha+\alpha^2)(\eps')^2}}\left|A_{ij}^{t'}\right|\;.
\end{align}
Accordingly, it holds $|A^{t+K}_{ij}|\ge (1+\cadv)|A^t_{ij}|$
and $\deltars(t+K)\ge (1+\cadv)\cdot\deltars(t)$
for $\cadv=\frac{1+\alpha}{\sqrt{1+(2\alpha+\alpha^2)(\eps')^2}}-1$.
\end{proof}

\begin{lemma}
\label{lem:remain-consistent}
Let $\cact>0$ be the constant from \Cref{lem:consistent-reachable}.
For every $0<c<1$ and $T$, there exists $\eps=\eps(c,T)>0$ such that
if $\cU^{t}$ is $(\eps,\eps)$-inactive and
$(a,b,\cact)$-consistent,
then the configuration
$\cU^{t+t'}$ remains $(a,b)$-consistent
for $t\le t'\le t+T$.
Furthermore, for every $t\le t'\le t''\le t+T$, it holds
$\deltars(t'')\ge (1-c)\cdot\deltars(t')$.
\end{lemma}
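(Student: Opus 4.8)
The plan is to prove both assertions at once by induction on the time $s=t,t+1,\dots,t+T$, carrying a lower bound on $\deltars$ alongside the consistency. Let $\eps'$ be the constant from \Cref{cl:aij-consistent} (which we may assume is at most $\epsbase$) and $\Cstep$ the constant from \Cref{cor:bounded-q0-q1}, and choose
\[
\eps=\eps(c,T):=\min\!\left(\eps(\eps',T),\ \frac{c\,\cact^{2}}{2T(3\alpha+\alpha^{2})\Cstep^{2T}}\right),
\]
where $\eps(\eps',T)$ is the threshold produced by \Cref{cor:bounded-q0-q1}. First I would record what this buys. Since $\cU^{t}$ is $(a,b,\cact)$-consistent it has a nonzero between-cluster correlation, so $Q_0(t)<\infty$ and \Cref{cor:bounded-q0-q1} applies: for all $t\le s\le t+T$ the configuration $\cU^{s}$ is $(\eps',\eps')$-inactive with the same clusters as $\cU^{t}$, and $\delta_{0}(s)\le\Cstep^{\,s-t}\delta_{0}(t)\le\Cstep^{T}\delta_{0}(t)$. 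Moreover $(a,b,\cact)$-consistency gives $\delta_{0}(t)\le\deltars(t)/\cact$, so for every such $s$,
\[
(3\alpha+\alpha^{2})\,T\,\Cstep^{2T}\,\delta_{0}^{2}(s)\ \le\ (3\alpha+\alpha^{2})\,T\,\Cstep^{2T}\,\deltars(t)^{2}/\cact^{2}\ <\ \tfrac{c}{2}\,\deltars(t),
\]
using $\deltars(t)\le\delta_{0}(t)<\eps$ and the definition of $\eps$; call this the \emph{smallness estimate}.

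The induction hypothesis $H(s)$ will assert: $\cU^{s}$ is $(a,b)$-consistent; $\sign A^{s}_{ij}=\sign A^{t}_{ij}$ for all $i\in S_a,\ j\in S_b$; and $\deltars(s)\ge\deltars(t)-(s-t)(3\alpha+\alpha^{2})\Cstep^{2T}\delta_{0}^{2}(t)$, where the last clause together with the smallness estimate forces $\deltars(s)\ge(1-c/2)\deltars(t)\ge\deltars(t)/2$. The base case is trivial. For the inductive step, assuming $H(s)$ and writing the interaction at time $s$ as ``$\ell$ influences $i_{0}$'', I would fix $i\in S_a$, $j\in S_b$. If $i_{0}\notin\{i,j\}$ then $A_{ij}$ is unchanged. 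Otherwise $\cU^{s}$, being $(\eps',\eps')$-inactive and $(a,b)$-consistent, lets me apply \Cref{cl:aij-consistent} to $i_{0}$ as the influenced agent (using that $(a,b)$-consistency is symmetric under swapping $S_a$ and $S_b$ when $i_0\in S_b$): when $\ell\in S_a$ or $\ell\in S_b$ the claim directly yields $\sign A^{s+1}_{ij}=\sign A^{s}_{ij}$ and $|A^{s+1}_{ij}|\ge\min(|A^{s}_{ij}|,\deltars(s))=\deltars(s)$, since any correlation $A^{s}_{\cdot\ell}$ appearing in the bound has one index in $S_a$ and one in $S_b$ and hence absolute value at least $\deltars(s)$; when $\ell\notin S_a\cup S_b$ the claim gives $|A^{s+1}_{ij}-A^{s}_{ij}|\le(3\alpha+\alpha^{2})\delta_{0}^{2}(s)$, which by the smallness estimate is strictly below $\deltars(s)\le|A^{s}_{ij}|$, so $\sign A^{s+1}_{ij}=\sign A^{s}_{ij}$ again and $|A^{s+1}_{ij}|\ge\deltars(s)-(3\alpha+\alpha^{2})\delta_{0}^{2}(s)$. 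In all cases $\sign A^{s+1}_{ij}=\sign A^{s}_{ij}=\sign A^{t}_{ij}$ and $|A^{s+1}_{ij}|\ge\deltars(s)-(3\alpha+\alpha^{2})\delta_{0}^{2}(s)$.

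Next I would note that within-cluster signs are stable: for $i,i'\in S_a$ (respectively $j,j'\in S_b$) we have $|A^{s}_{ii'}|>1-(\eps')^{2}>1/2$, so \Cref{cor:bounded_step} preserves $\sign A^{s+1}_{ii'}$. Hence the identity $\sign A_{i'j'}=\sign A_{ii'}\sign A_{ij}\sign A_{jj'}$ of $(a,b)$-consistency carries from $\cU^{s}$ to $\cU^{s+1}$, all between-cluster correlations remain nonzero (their absolute values are at least $\deltars(s+1)\ge\deltars(t)/2>0$), and $\cU^{s+1}$ therefore remains $(a,b)$-consistent relative to $\cU^{s}$; taking the minimum over the pairs, $\deltars(s+1)\ge\deltars(s)-(3\alpha+\alpha^{2})\Cstep^{2T}\delta_{0}^{2}(t)$, which advances the third clause of $H$. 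This completes the induction, showing in particular that $\cU^{s}$ is $(a,b)$-consistent for all $t\le s\le t+T$. For the last assertion, given $t\le t'\le t''\le t+T$, summing the one-step bound over $s=t',\dots,t''-1$ gives
\[
\deltars(t'')\ \ge\ \deltars(t')-T(3\alpha+\alpha^{2})\Cstep^{2T}\delta_{0}^{2}(t)\ \ge\ \deltars(t')-\tfrac{c}{2}\deltars(t)\ \ge\ \deltars(t')-c\,\deltars(t')\ =\ (1-c)\deltars(t'),
\]
using the smallness estimate and $\deltars(t)\le 2\deltars(t')$, itself a consequence of $H(t')$.

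The step I expect to be delicate is precisely this coupling in the induction: invoking \Cref{cl:aij-consistent} at time $s$ presupposes that $\cU^{s}$ is still consistent, while ruling out a sign flip in the case $\ell\notin S_a\cup S_b$ relies on the lower bound $\deltars(s)\ge\deltars(t)/2$, so ``stays consistent'' and ``$\deltars$ barely decreases'' cannot be proved in sequence --- they must be propagated together, with the quantitative smallness estimate fixed up front so that the cumulative per-step loss in $\deltars$ never catches up to its current value.
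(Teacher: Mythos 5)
Your proof is correct and follows essentially the same route as the paper's: a per-step case analysis via \Cref{cl:aij-consistent}, with $\eps$ chosen up front so that the cumulative loss $T(3\alpha+\alpha^2)\Cstep^{2T}\delta_0^2(t)$ stays below $\tfrac{c}{2}\deltars(t)$, propagated together with consistency (the paper does the same induction, just phrased less explicitly). The only blemish is the intermediate step of your ``smallness estimate,'' where bounding $\delta_0^2(s)$ directly by $\deltars(t)^2/\cact^2$ implicitly assumes $\delta_0(s)\le\delta_0(t)$; this is harmless since the quantity you actually use later, $T(3\alpha+\alpha^2)\Cstep^{2T}\delta_0^2(t)$, is correctly controlled by your choice of $\eps$.
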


\begin{proof}
Let $\eps'$ come from \Cref{cl:aij-consistent}
and take $\eps''=\eps(\eps',T)$ from \Cref{cor:bounded-q0-q1}.
Then, let us take
\begin{align}
\eps=\min\left(\eps'',\frac{c\cdot \cact}{2T(3\alpha+\alpha^2)\Cstep^{2T}}\right)\;.
\end{align}

Assume that the configuration is $(a,b)$-consistent
at time $t'$ and that agent $\ell$
influences agent $i_0$ at that time.
If $i_0\notin S_a\cup S_b$, then no relevant correlations change and 
$A_{ij}^{t'+1}=A_{ij}^{t'}$ for every
$i\in S_a,j\in S_b$.
On the other hand, assume that $i_0\in S_a\cup S_b$.
By \Cref{cl:aij-consistent}, if $\ell\notin S_a\cup S_b$, then for every $i\in S_a$ and $j\in S_b$ it holds
$|A^{t'+1}_{ij}-A^{t'}_{ij}|\le 
(3\alpha+\alpha^2)\delta_0^2(t')$.
If $\ell\in S_a\cup S_b$, then by
\Cref{cl:aij-consistent}, it follows
for every $i\in S_a,j\in S_b$ that
$\sign(A_{ij}^{t'+1})=\sign(A_{ij}^{t'})$ and
$|A^{t'+1}_{ij}|\ge \min(|A_{ij}^{t'}|,
|A_{j\ell}^{t'}|,|A_{i\ell}^{t'}|)$.

Assume that $\cU^t$ is $(\eps,\eps,\cact)$-consistent
at time $t$.
By symmetry, let us assume w.l.o.g.~that
$A_{ij}^t>0$ for every $i\in S_a$, $j\in S_b$.
Let $t\le t'\le t+T$.
By applying the reasoning above,
as well as \Cref{cor:bounded-q0-q1}
inductively,
it holds
\begin{align}
\label{eq:84}
\min_{i\in S_a,j\in S_b}A_{ij}^{t'}\ge 
\min_{i\in S_a,j\in S_b}A_{ij}^t-\sum_{s=t}^{t'-1}(3\alpha+\alpha^2)
\delta_0^2(s)
\ge \cact \delta_0(t)-T(3\alpha+\alpha^2)\Cstep^{2T}
\delta_0^2(t)
\ge(1-c/2)\cact\delta_0(t)\;.
\end{align}
In particular $\sign A_{ij}^t=\sign A_{ij}^{t'}$
and the configuration remains consistent.

Similarly, let $t\le t'\le t''\le t+T$, $i\in S_a$ and
$j\in S_b$. 
From~\eqref{eq:84}, note that
$\delta_0(t)\le \frac{\deltars(t')}{\cact\cdot (1-c/2)}
\le \deltars(t')\cdot\frac{1+c}{\cact}$. Hence,
\begin{align}
\deltars(t'')&\ge \deltars(t')-T(3\alpha+\alpha^2)\Cstep^{2T}
\delta_0^2(t)
\ge\deltars(t')-\frac{c\cdot\cact}{2}\delta_0(t)\\
&\ge\deltars(t')-\frac{c(1+c)}{2}\deltars(t')
\ge (1-c)\deltars(t')\;,
\end{align}
which concludes the proof.
\end{proof}

\begin{lemma}
\label{lem:q0-expectation}
There exists $\eps>0$ and $T$ such that,
if $\cU^t$ is $(\eps,\eps)$-inactive and $Q_0(t)<\infty$, then
$\cU^{t'}$ remains $(\epsbase,\epsbase)$-inactive for $t\le t'\le t+T$ and
$\expec[Q_0(t+T)\;\vert\;\cU^t]\le Q_0(t)-\Omega(1)$.
\end{lemma}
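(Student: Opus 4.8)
The plan is to combine the three structural facts about consistent configurations --- \Cref{lem:consistent-reachable}, \Cref{lem:remain-consistent} and \Cref{lem:increase-delta-min} --- with the one-step control of \Cref{cor:bounded-q0-q1}, taking $T$ to be a large fixed multiple of the constant $K$ appearing in those lemmas. I would split the window $[t,t+T]$ into a \emph{consistency phase} of length $T_1=mK$ and an \emph{accumulation phase} of length $T_2=NK$, both large multiples of $K$. The hypothesis $Q_0(t)<\infty$ ensures $\cU^t$ has at least two clusters, so $\delta_0$ and the cluster pair realizing it are well defined. By first picking $T$ and then applying \Cref{cor:bounded-q0-q1} with a sufficiently small $\eps'$, I can make the starting threshold $\eps$ small enough that $\cU^{t'}$ stays $(\eps',\eps')$-inactive (hence $(\epsbase,\epsbase)$-inactive, with constant clusters and $\delta_0(t')>0$) for all $t\le t'\le t+T$, and $Q_0$ changes by at most $\log\Cstep$ per step; this $\eps'$ is also taken below the thresholds in \Cref{lem:consistent-reachable}, \Cref{lem:increase-delta-min} and \Cref{lem:remain-consistent} (with parameters $c,T$), so all invocations below are legitimate.

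In the consistency phase I cut $[t,t+T_1]$ into $m$ blocks of $K$ steps. By \Cref{lem:consistent-reachable}, whatever the configuration is at the start of a block, some specific sequence of $K$ interactions makes it $(a,b,\cact)$-consistent for the pair $S_a,S_b$ realizing $\delta_0$ at that moment, and this sequence occurs with probability at least $p:=n^{-2K}$. Letting $\tau^\ast$ be the first time in $[t,t+T_1]$ at which the configuration is $(a',b',\cact)$-consistent for \emph{some} cluster pair, the event $E_1=\{\tau^\ast\le T_1\}$ has $\Pr[E_1\mid\cU^t]\ge 1-(1-p)^m$. On $E_1$, \Cref{lem:remain-consistent} (applied once, from time $t+\tau^\ast$, with parameters $c$ and $T$, where $c\in(0,1)$ is a fixed constant chosen so that $(1-c)(1+\cadv)\ge 1+\cadv/2$) guarantees that the configuration stays $(a',b')$-consistent --- and hence remains so under any $K$ further interactions --- throughout $[t+\tau^\ast,t+T]$, and that $\deltars$ never drops below $(1-c)$ times any of its earlier values in this range. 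Moreover $\deltars(t+\tau^\ast)\ge\cact\,\delta_0(t+\tau^\ast)\ge\cact\,\delta_0(t)/\Cstep^{T_1}$.

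In the accumulation phase I cut $[t+T_1,t+T]$ into $N$ blocks of $K$ steps. On $E_1$, at the start of each such block the configuration is $(\eps',\eps')$-inactive and remains $(a',b')$-consistent under any $K$ further interactions, so \Cref{lem:increase-delta-min} applies: a specific $K$-step sequence, of probability at least $p$, multiplies $\deltars$ by $1+\cadv$. Let $E_2$ be the event that at least $Np/2$ of the $N$ blocks realize such a sequence; since the number of such blocks stochastically dominates a $\mathrm{Bin}(N,p)$ variable, a Chernoff bound gives $\Pr[E_2\mid E_1]\ge 1-e^{-Np/8}$. On $E_1\cap E_2$, chaining the $\ge Np/2$ multiplicative jumps by $1+\cadv$ against the at-most-$(1-c)$ decays between them (using the choice of $c$) yields
\begin{align*}
\delta_0(t+T)\ \ge\ \deltars(t+T)\ \ge\ (1+\cadv/2)^{Np/2}\,(1-c)\,\cact\,\delta_0(t)/\Cstep^{T_1}\;,
\end{align*}
so choosing $N$ large enough (a fixed constant once $m$ is fixed) that $(1+\cadv/2)^{Np/2}(1-c)\cact/\Cstep^{T_1}\ge e^{2}$ gives $Q_0(t+T)\le Q_0(t)-2$ on $E_1\cap E_2$.

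It remains to balance the two regimes. Deterministically, iterating \Cref{cor:bounded-q0-q1} gives $Q_0(t+T)-Q_0(t)\le T\log\Cstep$, while on $E_1\cap E_2$ it is at most $-2$, and $\Pr[(E_1\cap E_2)^c]\le (1-p)^m+e^{-Np/8}$. Since both $N$ and $T=(m+N)K$ can be taken to be $O(m)$ while $(1-p)^m$ and $e^{-Np/8}$ decay exponentially in $m$, taking $m$ large enough makes $\Pr[(E_1\cap E_2)^c]\le \tfrac{1}{2(2+T\log\Cstep)}$, whence
\begin{align*}
\expec[Q_0(t+T)\mid\cU^t]-Q_0(t)\ \le\ -2\,\Pr[E_1\cap E_2]\ +\ T\log\Cstep\cdot\Pr[(E_1\cap E_2)^c]\ \le\ -2+\tfrac12+\tfrac12\ =\ -1\;,
\end{align*}
which is the desired $-\Omega(1)$ bound. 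The delicate point --- and the reason for this quantitative bookkeeping --- is precisely that the rare ``bad'' event of failing to become, or stay, consistent carries a loss proportional to $T$; the argument goes through only because the failure probability is exponentially small in the number $m$ of consistency attempts while both $T$ and the worst-case loss grow only linearly in $m$, and because the ``good''-event gain can be pushed above any prescribed fixed constant simply by enlarging the accumulation phase. Everything else is the elementary one-step estimates already in hand.
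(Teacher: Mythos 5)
Your proof is correct, and it rests on exactly the same three structural pillars as the paper's argument (\Cref{lem:consistent-reachable}, \Cref{lem:remain-consistent}, \Cref{lem:increase-delta-min}, glued together with \Cref{cor:bounded-q0-q1}); the difference lies entirely in the final stochastic assembly. The paper works with the random hitting time $W$ of the first consistent block, bounds $\expec W\le 1/p$, proves a per-block conditional drift inequality $\expec[-\log\deltars(t+(m+1)K)\mid\cU^{t+mK}]\le-\log\deltars(t+mK)-\frac{p}{2}\log(1+\cadv)$ (absorbing the worst-case decay of \Cref{lem:remain-consistent} into the drift by choosing $c$ as a function of $p$ and $\cadv$), and then telescopes, paying $mK\log\Cstep-\log\cact$ for the pre-consistency segment weighted by $\Pr[W=m]$. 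You instead fix a deterministic two-phase schedule, bound the failure probability of each phase ($(1-p)^m$ for reaching consistency, $e^{-Np/8}$ via Chernoff for accumulating enough good blocks), and charge the bad event its worst-case cost $T\log\Cstep$; the argument closes because that cost is linear in $m$ while the failure probability is exponentially small. Both routes require the same two non-obvious observations --- that once consistent the configuration stays consistent for the whole window with only an $O(1)$ total multiplicative loss in $\deltars$, and that the gain on the good event must be calibrated against the $\Cstep^{T_1}$ loss incurred before consistency is reached --- and you handle both correctly, including the dependence $N=\Theta(m)$ forced by the $\Cstep^{-mK}$ factor and the absence of circularity in choosing $\eps$ last. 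The only points worth tightening in a write-up are cosmetic: the two lemmas use different values of $K$ (the paper takes the maximum and pads the shorter sequence with trivial self-interactions, cf.\ the footnote in the paper's proof), and the stochastic domination of the good-block count by $\mathrm{Bin}(N,p)$ should be phrased conditionally on the filtration at time $t+T_1$, since the target pair $(a',b')$, and hence the set of good sequences, is random. Neither affects correctness. The trade-off: your version avoids the expected-hitting-time telescoping at the price of an explicit Chernoff bound and the $T=O(m)$ versus $e^{-\Omega(m)}$ bookkeeping, which you correctly identify as the delicate point.
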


\begin{proof}
Take $K$ which is the maximum\footnote{
Note that the sequence of $K$ interactions 
that exists by \Cref{lem:consistent-reachable}
can be extended to a longer sequence by
adding interactions of the form
$(i,i)$ that do not change the configurations.
The same goes for \Cref{lem:increase-delta-min}.
Therefore, the relevant sequences both exist
and have the claimed properties for 
$K$ taken to be the maximum.
}
of $K$
from \Cref{lem:consistent-reachable}
and \Cref{lem:increase-delta-min}. 
Then, take $T=M\cdot K$ for sufficiently large $M=M(d,\alpha,n)$
(as will be seen below). Then, take $\eps'>0$ to be the minimum of epsilons for which \Cref{lem:consistent-reachable} and
\Cref{lem:increase-delta-min} hold. 
Recall the constants $\cact$ and $\cadv$
from those lemmas.

Let $p=n^{-2K}$, $c=1-(1+\cadv)^{-p/2}$ and
let $\eps''=\eps(c,T)$ from \Cref{lem:remain-consistent}.
Finally, take $\eps=\eps(\min(\eps'\eps''),T)$ from \Cref{cor:bounded-q0-q1}. In particular, if
at any time $\cU^{t'}$ becomes $(a,b,\cact)$-consistent,
then it remains $(a,b)$-consistent until time $t+T$.

Let
\begin{align}
W&=\min\left\{1\le m\le M: \cU^{t+mK}\text{ is $(a,b,\cact)$-consistent for some $a,b$}
\right\}\;,
\end{align}
and $W=M$ if the configuration does not become $(a,b,\cact)$-consistent
for any of $1\le m\le M$.
By \Cref{lem:consistent-reachable}, at every time
step $t'$, it holds
$\Pr[\cU^{t'+K}\text{is $(a,b,\cact)$-consistent}\;\vert\;\cU^{t'}]\ge p$.
That implies, conditioned on $\cU^t$,
\begin{align}
\expec W
&=\sum_{m=1}^M \Pr[W\ge m]
\le\sum_{m=1}^{\infty}(1-p)^{m-1}=\frac{1}{p}\;.
\end{align}
Furthermore, by \Cref{lem:remain-consistent}, if $\cU^{t+mK}$ is
$(a,b,\cact)$-consistent, then it remains consistent for all
$t+mK\le t'\le t+T$. Now, condition on some $\cU^{t+mK}$
for $W\le m<M$.
By \Cref{lem:increase-delta-min},
with probability at least $p$ it holds
$-\log\deltars(t+(m+1)K)\le-\log\deltars(t+mK)-\log(1+\cadv)$.
On the other hand, by \Cref{lem:remain-consistent}, it always
holds
\begin{align}
-\log\deltars(t+(m+1)K)\le-\log\deltars(t+mK)-\log(1-c)
=-\log\deltars(t+mK)+\frac{p}{2}\log(1+\cadv)\;.
\end{align}
Putting it together,
\begin{align}
\label{eq:85}
\expec\left[ -\log\deltars(t+(m+1)K)\;\vert\;\cU^{t+mK}\right]
\le -\log\deltars(t+mK)-\frac{p}{2}\log(1+\cadv)\;.
\end{align}
Therefore, applying \eqref{eq:85},
\Cref{cor:bounded-q0-q1}, and the fact
that $\deltars(t+WK)\ge\cact \delta_0(t+WK)$
(for $W<M$)
and $\deltars(t+T)\le\delta_0(t+T)$,
\begin{align}
\expec\left[Q_0(t+T)-Q_0(t)\;\vert\;\cU^t\right]
&\le
\sum_{m=1}^M\Pr[W=m]\cdot\left(
mK\log\Cstep+\expec\left[Q_0(t+T)-Q_0(t+mK)\;\vert\;W=m\right]
\right)\\
&\le \sum_{m=1}^M\Pr[W=m]\cdot\Big(
mK\log\Cstep-\log\cact\nonumber\\
&\qquad\qquad+\expec\left[-\log\deltars(t+T)+\log\deltars(t+mK)\;\vert\;W=m\right]
\Big)\\
&\le\sum_{m=1}^M
\Pr[W=m]\cdot\left(mK\log\Cstep-m\log\cact-(M-m)
\frac{p}{2}\log(1+\cadv)
\right)\\
&\le -M\frac{p}{2}\log(1+\cadv)
+(\expec W)\cdot\left(K\log\Cstep-\log\cact+
\frac{p}{2}\log(1+\cadv)
\right)\label{eq:82}\\
&\le -\frac{p}{4}\log(1+\cadv)\;,
\end{align}
where the last step follows after choosing
sufficiently large $M=M(d,\alpha,n)$,
as all other constants in~\eqref{eq:82} depend
only on $d$, $n$ and $\alpha$.
\end{proof}

\section{Taking $T$ steps at once
and martingale concentration}
\label{sec:many-steps-at-once}

Let us sum up what we proved so far.
The following statement follows from
\Cref{lem:q0-expectation},
\Cref{lem:q1-expectation}
and \Cref{cor:bounded-q0-q1}.

\begin{corollary}
\label{cor:q-final}
There exist $\eps>0$, $C'\ge 1$ and $T$ such that:
Let $\cU^t$ be an $(\eps,\eps)$-inactive configuration
with $Q_0(t)<\infty$.
Then, $\cU^{t'}$ remains $(\epsbase,\epsbase)$-inactive for $t\le t'\le t+T$.
Furthermore, it holds:
\begin{align}
|Q_0(t+1)-Q_0(t)|&\le C'\;,
\label{eq:q0-bounded-repeat}\\
\expec[Q_0(t+T)\;\vert\;\cU^{t}]&\le Q_0(t)-1/C'\;,
\label{eq:q0-expectation-repeat}\\
Q_1(t+1)&\ge \min(Q_0(t),Q_1(t))-C'\;.
\label{eq:q1-lower-bound-repeat}
\end{align}
Furthermore, if $Q_1(t)\le Q_0(t)-C'$, then:
\begin{align}
|Q_1(t+1)-Q_1(t)|&\le C'\;,
\label{eq:q1-bounded-repeat}\\
\expec[Q_1(t+T)\;\vert\;\cU^{t}]&\ge Q_1(t)+1/C'\;.
\label{eq:q1-expectation-repeat}
\end{align}
Finally, \eqref{eq:q0-bounded-repeat} and
\eqref{eq:q1-lower-bound-repeat} also hold
for every $t\le t'\le t+T$,
and \eqref{eq:q1-bounded-repeat} holds in the sense
that if $Q_1(t')\le Q_0(t')-C'$, then
$|Q_1(t'+1)-Q_1(t')|\le C'$.
\end{corollary}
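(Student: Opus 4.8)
The plan is to derive \Cref{cor:q-final} as a consolidation of \Cref{cor:bounded-q0-q1}, \Cref{lem:q0-expectation}, and \Cref{lem:q1-expectation}, simply rephrasing the multiplicative control those results give on $\delta_0$ and $\delta_1$ as additive control on $Q_0=-\log\delta_0$ and $Q_1=-\log\delta_1$. Under this dictionary each displayed inequality of \Cref{cor:q-final} matches an earlier input: \eqref{eq:q0-bounded-repeat} is item~1 of \Cref{cor:bounded-q0-q1}; \eqref{eq:q1-lower-bound-repeat} is item~2, since $-\log(\Cstep\max(\delta_0,\delta_1))=\min(Q_0,Q_1)-\log\Cstep$; \eqref{eq:q1-bounded-repeat} is item~3; \eqref{eq:q0-expectation-repeat} is \Cref{lem:q0-expectation}; and \eqref{eq:q1-expectation-repeat} is \Cref{lem:q1-expectation}. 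All the work lies in choosing one triple $(\eps,C',T)$ for which all these results are simultaneously applicable, and in checking the hypothesis translations go through.

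The one genuine subtlety is the choice of $T$: \Cref{lem:q1-expectation} is flexible (it applies to every $T\ge\binom{n}{2}$), whereas \Cref{lem:q0-expectation} as stated produces a single horizon $T^{(0)}$, which need not be that large. I would reconcile this by taking $T=mT^{(0)}$ with $m$ chosen so that $mT^{(0)}\ge\binom{n}{2}$, and upgrading \Cref{lem:q0-expectation} from one block of $T^{(0)}$ steps to $m$ consecutive blocks via the Markov property of $(\cU^t)_t$: conditioning successively on $\cU^{t+jT^{(0)}}$ and applying the tower rule gives $\expec[Q_0(t+mT^{(0)})\mid\cU^t]\le Q_0(t)-mc^{(0)}$, where $c^{(0)}>0$ is the one-block drift, provided each intermediate configuration $\cU^{t+jT^{(0)}}$ for $0\le j<m$ is inactive enough to reapply \Cref{lem:q0-expectation} and has $Q_0<\infty$. (The latter is ensured in the next step; finiteness of $Q_0$ on $[t,t+T]$ is immediate from item~1 of \Cref{cor:bounded-q0-q1}. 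Alternatively, one may observe that the proof of \Cref{lem:q0-expectation} goes through verbatim for every sufficiently large multiple of $K$, since the decisive estimate~\eqref{eq:82} only improves as $M$ grows.) Having fixed $T$, let $\eps_\star=\min(\epsbase,\eps^{(0)},\eps^{(1)}(T))$, where $\eps^{(0)}$ and $\eps^{(1)}(T)$ are the thresholds of \Cref{lem:q0-expectation} and \Cref{lem:q1-expectation}; then apply \Cref{cor:bounded-q0-q1} with target $\eps_\star$ and horizon $T$ to obtain $\eps=\eps(\eps_\star,T)$ such that, whenever $\cU^t$ is $(\eps,\eps)$-inactive with $Q_0(t)<\infty$, the configuration $\cU^{t'}$ is $(\eps_\star,\eps_\star)$-inactive for all $t\le t'\le t+T$ and items~1--3 of \Cref{cor:bounded-q0-q1} hold at every such $t'$. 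This makes $\cU^{t'}$ in particular $(\epsbase,\epsbase)$-inactive (the first assertion of \Cref{cor:q-final}), inactive enough to invoke \Cref{lem:q0-expectation} at every block boundary, and inactive enough to invoke \Cref{lem:q1-expectation} at time $t$.

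Finally, I would set $C'=\max(\log\Cstep,\ 1/c^{(0)},\ \log C^{(1)}(T),\ 1/c^{(1)}(T))\ge 1$, where $C^{(1)}(T)$ and $c^{(1)}(T)$ are the constant and drift from \Cref{lem:q1-expectation}. Then \eqref{eq:q0-bounded-repeat} and \eqref{eq:q1-lower-bound-repeat} follow from items~1 and~2 of \Cref{cor:bounded-q0-q1} at every $t\le t'\le t+T$ because $\log\Cstep\le C'$, and \eqref{eq:q0-expectation-repeat} follows from the iterated drift since $1/C'\le c^{(0)}\le mc^{(0)}$. For the two conditional statements, the hypothesis $Q_1(t')\le Q_0(t')-C'$ is exactly $\delta_1(t')\ge e^{C'}\delta_0(t')$; since $e^{C'}\ge\Cstep$ this is the hypothesis of item~3 of \Cref{cor:bounded-q0-q1}, yielding $|Q_1(t'+1)-Q_1(t')|\le\log\Cstep\le C'$, which is \eqref{eq:q1-bounded-repeat} together with its ``in the sense that'' variant for general $t'$; and since $e^{C'}\ge C^{(1)}(T)$ and $\delta_0(t)>0$ (as $Q_0(t)<\infty$), \Cref{lem:q1-expectation} applies at time $t$ and gives $\expec[Q_1(t+T)\mid\cU^t]\ge Q_1(t)+c^{(1)}(T)\ge Q_1(t)+1/C'$, which is \eqref{eq:q1-expectation-repeat}. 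I expect the block-iteration reconciliation of $T$ in the second paragraph to be the only step requiring real care; the rest is mechanical, the main thing to watch being that the conditional expectations in \Cref{lem:q0-expectation} and \Cref{lem:q1-expectation} are conditioned on the configuration, which is exactly what the Markov structure of the process lets us chain together.
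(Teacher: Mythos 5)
Your proposal is correct and matches the paper's (implicit) argument, which simply combines \Cref{cor:bounded-q0-q1}, \Cref{lem:q0-expectation}, and \Cref{lem:q1-expectation} exactly as you do. The one subtlety you flag (reconciling the two horizons $T$) is even milder than you fear: the $T$ produced by \Cref{lem:q0-expectation} is $M\cdot K$ with $K\ge n^2\ge\binom{n}{2}=T_0$, so \Cref{lem:q1-expectation} applies to it directly and no block iteration is needed, though your iteration argument is also valid.
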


\Cref{cor:q-final} allows to deduce
\Cref{lem:p0-p1-properties}:

\begin{proof}[Proof of \Cref{lem:p0-p1-properties}]
Take $\eps$ and $T$ from \Cref{cor:q-final}
and $C=2C'T$.
From~\eqref{eq:q0-bounded-repeat} it follows
$|P_0(t+1)-P_0(t)|\le C'T\le C$
and from \eqref{eq:q0-expectation-repeat} we have
$\expec[P_0(t+1)\;\vert\;\cU^{tT}]\le P_0(t)-1/C'\le P_0(t)-1/C$.
Applying~\eqref{eq:q0-bounded-repeat}
and~\eqref{eq:q1-bounded-repeat} inductively, we have
$Q_1(tT+k)\ge \min(Q_0(tT),Q_1(tT))-kC'$
for $k<T$,
hence it holds $P_1(t+1)\ge \min(P_0(t),P_1(t))-TC'$.

Finally, let $P_1(t)\le P_0(t)-2C'T$. 
Applying \eqref{eq:q0-bounded-repeat}
and \eqref{eq:q1-bounded-repeat} by induction, it holds
$Q_1(tT+k)\le Q_0(tT+k)-2C'T+2C'k<Q_0(tT+k)-C'$ for $k<T$.
Hence, by \eqref{eq:q1-bounded-repeat}, it holds
$|P_1(t+1)-P_1(t)|\le C'T$.
Finally, $\expec[P_1(t+1)\;\vert\;\cU^{tT}]\ge
P_1(t)+1/C'$ follows immediately from~\eqref{eq:q1-expectation-repeat}.
\end{proof}

As explained in \Cref{sec:plan},
in the rest of the proof we proceed more generally
and prove \Cref{thm:random-process}.
In that proof
we will need a simple consequence of the Azuma's inequality:

\begin{lemma}
\label{lem:modified-azuma}
For all $c_1,c_2>0$ there exist $c_3>0$ such that the following holds.

Let $X(t)$ be a random process adapted to a filtration $(\cF_t)_t$.
Assume that for all times $t$ almost surely
\begin{align}
|X(t)-X(t+1)|&\le c_1\;,\\
\label{eq:martingale_lemma_exp}
\expec\left[X(t+1)\;\vert\;\cF_t\right]&\le X(t)-c_2\;.
\end{align}
Then, for every integer $t\ge 0$ it holds
\begin{align}
\Pr[X(t)\ge X(0)-c_2t/2]\le\exp(-c_3t)\;.
\end{align}
\end{lemma}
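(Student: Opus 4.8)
The plan is to reduce the statement to a single application of Azuma's inequality in its supermartingale form. First I would pass to the drift-corrected process $Y(s) := X(s) + c_2 s$. Using the hypothesis \eqref{eq:martingale_lemma_exp},
\[
\expec[Y(s+1)\mid\cF_s] = \expec[X(s+1)\mid\cF_s] + c_2(s+1) \le X(s) - c_2 + c_2(s+1) = Y(s)\;,
\]
so $(Y(s))_s$ is a supermartingale adapted to $(\cF_s)_s$, and its increments are bounded almost surely by $|Y(s+1)-Y(s)| = |X(s+1)-X(s)+c_2| \le c_1+c_2$.

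Next I would invoke the one-sided Azuma--Hoeffding bound: for a supermartingale $(Y(s))_s$ whose increments are bounded in absolute value by $c$, one has $\Pr[Y(t)-Y(0)\ge\lambda]\le \exp\!\bigl(-\lambda^2/(2tc^2)\bigr)$ for every $\lambda>0$. Since $Y(t)-Y(0) = X(t)-X(0)+c_2 t$, the event $\{X(t)\ge X(0)-c_2t/2\}$ coincides with $\{Y(t)-Y(0)\ge c_2 t/2\}$. Applying the bound with $c = c_1+c_2$ and $\lambda = c_2 t/2$ yields
\[
\Pr[X(t)\ge X(0)-c_2t/2] \le \exp\!\left(-\frac{(c_2 t/2)^2}{2t(c_1+c_2)^2}\right) = \exp\!\left(-\frac{c_2^2}{8(c_1+c_2)^2}\,t\right)\;,
\]
so the claim holds with $c_3 := c_2^2/\bigl(8(c_1+c_2)^2\bigr)$, which depends only on $c_1$ and $c_2$. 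The case $t=0$ is trivial, as the right-hand side equals $1$.

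Since the whole argument is a direct reduction, I do not expect a genuine obstacle. The only points that require a little care are to use the one-sided supermartingale version of Azuma's inequality (rather than the two-sided martingale version), and to keep the direction of the inequalities straight when translating between the event for $X$ and the event for $Y$.
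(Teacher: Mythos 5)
Your proposal is correct and follows essentially the same strategy as the paper: compensate the downward drift and apply the Azuma--Hoeffding inequality. The only (cosmetic) difference is that the paper uses the predictable compensator $Y(t)=X(0)+\sum_{i\le t}\bigl(X(i)-\expec[X(i)\mid\cF_{i-1}]\bigr)$ to get a genuine martingale with increments bounded by $2c_1$ and then shows $X(t)\le Y(t)-c_2t$ by induction, whereas you add the deterministic term $c_2 s$ and invoke the one-sided supermartingale form of Azuma directly; both yield the stated bound with $c_3$ depending only on $c_1,c_2$.
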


\begin{proof}
We will apply the Azuma-Hoeffding inequality:
If a random process $Y(t)$ satisfies $|Y(t+1)-Y(t)|\le C$ and
$\expec[Y(t+1)\;\vert\;\cF_t]=Y(t)$ almost surely for every $t$,
then $\Pr[Y(t)\ge Y(0)+\eps]\le \exp\left(-\frac{\eps^2}{2tC^2}\right)$.

To that end, let $Y(t):=X(0)+\sum_{i=1}^{t} X(i)-\expec[X(i)\;\vert\;\cF_{i-1}]$.
Clearly, $Y(t)$ is adapted to $\cF_t$ and 
$\expec[Y(t+1)\;\vert\;\cF_t]=Y(t)$. Furthermore, we also have
\begin{align}
|Y(t+1)-Y(t)|&=\Big|X(t+1)-\expec[X(t+1)\;\vert\;\cF_t]\Big|
\le|X(t+1)-X(t)|+\Big|X(t)-\expec[X(t+1)\;\vert\;\cF_t]\Big|\le 2c_1\;.
\end{align}
Therefore, by Azuma, $\Pr[Y(t)\ge Y(0)+\eps]\le\exp\left(-\frac{\eps^2}{8tc_1^2}\right)$.

At the same time, let us see by induction that almost surely 
$X(t)\le Y(t)-c_2 t$ for every time $t$. Indeed $X(0)=Y(0)$ and then
\begin{align}
Y(t+1)&=Y(t)+X(t+1)-\expec[X(t+1)\;\vert\;\cF_t]
\overset{\text{ind.~hyp.~and}\eqref{eq:martingale_lemma_exp}}{\ge} X(t)+c_2t+X(t+1)-X(t)+c_2\\
&=X(t+1)+c_2(t+1)\;.
\end{align}
Therefore,
\begin{align}
\Pr[X(t)\ge X(0)-c_2t/2]
&\le\Pr[Y(t)-c_2t\ge Y(0)-c_2t/2]
\le\exp\left(-\frac{c_2^2}{32c_1^2}t\right)\;.\qedhere
\end{align}
\end{proof}

Before we turn to the proof of \Cref{thm:random-process},
let us quickly note that,
together with \Cref{lem:p0-p1-properties}, it implies
\Cref{thm:inactive-unstable-restated}:

\begin{proof}[Proof that \Cref{lem:p0-p1-properties} and
\Cref{thm:random-process} imply
\Cref{thm:inactive-unstable-restated}]
We set $\eps_1$ to be $\eps$ from \Cref{lem:p0-p1-properties}.
With that choice, $P_0(t)$ and $P_1(t)$ satisfy
\eqref{eq:p0-bounded}--\eqref{eq:p1-martingale}
if $P_0(t),P_1(t)>\Cmin=-\log\eps_1$.
Take $T$ from \Cref{lem:p0-p1-properties}
and
choose $\eps=\exp(-\Cstart)$ where $\Cstart$ is from
\Cref{thm:random-process}.

Let $\cU^{0}$ be $(\eps,\eps)$-inactive with
$P_0(0)<\infty$.
Then, $\min(P_0(0),P_1(0))>\Cstart$. Applying
\Cref{thm:random-process}, almost surely there exists
finite first time $t_0$ such that
$Q_0(t_0 T)=P_0(t_0)\le -\log\eps$
or $Q_1(t_0 T)=P_1(t_0)\le -\log\eps_1$.
Furthermore, by \Cref{lem:p0-p1-properties},
the configuration remains $(\epsbase,\epsbase)$-inactive until time
$t_0$. Finally, by~\eqref{eq:76},
with probability at least 0.7 it holds
$Q_1(t_0T)=P_1(t_0)>\Cmin$.
\end{proof}

\section{Proof of \Cref{prop:two-chains}}
\label{sec:stochastic-proof}
As a preliminary point, our assumption is that
\eqref{eq:p0-bounded}--\eqref{eq:p1-martingale} hold
whenever $P_0(t),P_1(t)>\Cmin$. In fact, let us assume
that these properties always hold. For example,
whenever the event 
$P_0(t)\le\Cmin$ or $P_1(t)\le\Cmin$ occurs,
we can redefine the random processes and
set them as $P_0(t+1)=P_0(t)-C$ and
$P_1(t+1)=P_1(t)+C$.
It should be clear that such a change does not affect the distributions
of $t_0$ and $P_1(t_0)$, so our
modification of $P_0$ and $P_1$ is without loss of generality.

First, we use a standard argument with Azuma inequality
to show that the stopping time $t_0$ is almost surely
finite, for any choice of $\Ctilde(C)\ge 0$.
Recall that $P_0(t)$ satisfies~\eqref{eq:p0-bounded}
and~\eqref{eq:p0-martingale}.
Therefore, applying \Cref{lem:modified-azuma},
it holds
$\Pr[P_0(t)\ge\Cstart-t/(2C)]\le \exp(-ct)$
for every $t$ and some fixed $c>0$.
However, if $P_0(t)>\Cstart$, then of course
$P_0(t)\ge \Cstart-t/(2C)$ for every $t$.
Hence,
\begin{align}
\Pr[t_0=\infty]
&\le \Pr[\forall t: P_0(t)> \Cstart]
\le \Pr[P_0(t)\ge \Cstart-t/(2C)\text{ infinitely often}]\\
&=\lim_{T\to\infty}
\Pr[\exists t\ge T:P_0(t)\ge \Cstart-t/(2C)]\\
&\le\lim_{T\to\infty} \sum_{t=T}^\infty
\exp(-ct)
=\lim_{T\to\infty} \frac{\exp(-cT)}{1-\exp(-c)}
=0\;.
\end{align}

\medskip

It remains to show that
$\Pr[P_1(t_0)\le\Cmin]\le 0.3$.
First, let us prove this statement with the assumption
$P_1(0)>\Cstart$ replaced with $P_1(0)>P_0(0)-C$.
For $\ell\ge 0$, let 
\begin{align}
    N_\ell&=\Big|\{t: \Cstart+\ell<P_0(t)\le \Cstart+\ell+1\}\Big|\;.
\end{align}
We are going to establish tail bounds on the values of $N_\ell$. Let 
$s_\ell$ be the first time such that $P_0(s_\ell)\le \Cstart+\ell+1$.
By \Cref{lem:modified-azuma} 
(which is applicable since $s_\ell$
is a stopping time, so $(P_0(s_\ell+t))_t$
is a random process satisfying
\eqref{eq:p0-bounded} and \eqref{eq:p0-martingale}),
for every $t\ge 0$
it holds
\begin{align}
\Pr[P_0(s_\ell+\lceil 2C\rceil+t)\ge \Cstart+\ell]
&\le 
\Pr\left[P_0(s_\ell+\lceil 2C\rceil+t)\ge P_0(s_\ell)-1\right]\\
&\le
\Pr\left[P_0(s_\ell+\lceil 2C\rceil+t)\ge P_0(s_\ell)-\frac{t+\lceil 2C\rceil}{2C}\right]\le\exp(-ct)\;.
\end{align}
That implies for a fixed $T\ge 0$
\begin{align}
\Pr\left[|N_\ell|>\lceil 2C\rceil +T\right]
&\le \Pr\left[\exists t\ge T:P_0(s_\ell+\lceil 2C\rceil+t)\ge\Cstart+\ell
\right]
\le\frac{\exp(-cT)}{1-\exp(-c)}\;.
\end{align}
For sufficiently large constant
$K'=K'(C)$, let us take
$T=K'\cdot(1+\ell)$.
Then, for every $\ell\ge 0$ it holds $\frac{\exp(-c T)}{1-\exp(-c)}\le \frac{0.1}{2^{\ell+1}}$.
Let $K=K'+\lceil 2C\rceil$. Then, by union bound,
\begin{align}
\Pr\left[\exists \ell\ge 0:|N_\ell|> K\cdot (1+\ell)\right]
&\le\Pr\left[\exists \ell\ge 0:|N_\ell|> \lceil 2C\rceil +K'\cdot (1+\ell)\right]
\le 0.1\;.
\end{align}
Hence, except with probability at most 0.1,
it holds $|N_\ell|\le K\cdot(1+\ell)$ for every $\ell\ge 0$.

\medskip

Assume that the event $P_1(t_0)\le\Cmin$ occurs. That is, there exists
some $t_0$ such that $P_1(t_0)\le \Cmin$
and $P_1(0),\ldots,P_1(t_0-1)>\Cmin$, and
$P_0(0),\ldots,P_0(t_0-1)>\Cstart$.
Then, by~\eqref{eq:p0-bounded}, it holds $P_0(t_0)>\Cstart-C$.
Consequently, $P_1(t_0)-P_0(t_0)<\Cmin-\Cstart+C=-\Ctilde+C\le -C$
if $\Ctilde$ satisfies $\Ctilde\ge 2C$.
Recall that we assumed $P_1(0)>P_0(0)-C$. 
Hence there exists the latest time $t'\le t_0$
such that $P_1(t'-1)>P_0(t'-1)-C$.
In particular, due to~\eqref{eq:p0-bounded}
and~\eqref{eq:p1-lower-bound} it holds 
$P_1(t')> P_0(t')-3C$.
Furthermore, by definition, 
$P_1(t'')\le P_0(t'')-C$ is satisfied for all times
$t'\le t''\le t_0$.

In light of this discussion, if $P_1(t_0)\le\Cmin$ occurs, then
there exist two times $t'\le t$ such that $P_1(t')> P_0(t')-3C$,
$P_1(t)\le\Cmin$,
and $P_1(t'')\le P_0(t'')-C$ for every
$t'\le t''\le t$.
For $\ell\ge 0$ and $i\ge 1$, let $T(\ell, i)$ be the $i$-th time step $t'$
such that $\Cstart+\ell<P_0(t')\le \Cstart+\ell+1$. Let $\cE(\ell, i)$ be the
event that, at the time $t'=T(\ell, i)$, we have
$P_1(t')> \Cstart+\ell-3C$, and that there exists $t\ge t'$ such that 
$P_1(t)\le\Cmin$ and $P_1(t'')\le P_0(t'')-C$
for all $t'\le t''\le t$.

By the discussion above, if $P_1(t_0)\le\Cmin$ occurs,
then either
there exists $\ell$ such that $N_\ell> K\cdot (1+\ell)$,
or there exist
$\ell$ and $1\le i\le K\cdot (1+\ell)$ such that $\cE(\ell, i)$ occurs.
In other words, by union bound we have
\begin{align}
\Pr[P_1(t_0)\le \Cmin]
&\le \Pr[\exists\ell\ge 0:|N_\ell|>K(1+\ell)]
+\sum_{\ell\ge 0}\sum_{i=1}^{K(1+\ell)}\Pr[\cE_{\ell,i}]\;.\\
&\le 0.1+\sum_{\ell\ge 0}\sum_{i=1}^{K(1+\ell)}\Pr[\cE_{\ell,i}]\;.
\end{align}

To estimate the probability of $\cE_{\ell,i}$, we use
the fact that $t'=T(\ell,i)$ is a stopping time
and apply \Cref{lem:modified-azuma}.
If $\Ctilde\ge 3C$, then $P_1(t')>\Cstart+\ell-3C>\Cmin$.
Since $P_1(t')> \Cstart+\ell-3C$ and $P_1(t'')\le P_0(t'')-C$ for $t''\ge t'$,
by~\eqref{eq:p1-bounded} it follows
$P_1(t'+s)>\Cstart+\ell-3C-Cs\ge \Cmin$, where the last inequality holds
as long as 
$s\le \frac{\ell}{C}+\frac{\Ctilde}{C}-3$.
Let $s_0=\lceil\frac{\ell}{C}+\frac{\Ctilde}{C}-3\rceil$.
It follows that
\begin{align}
\Pr[\cE_{\ell,i}]
&\le\sum_{s=0}^\infty
\Pr[P_1(t'+s)\le\Cmin\text{ and }
P_1(t'')\le P_0(t'')-C\text{ for }t'\le t''\le t+s]\\
&=\sum_{s\ge s_0}
\Pr[P_1(t'+s)\le\Cmin\text{ and }
P_1(t'')\le P_0(t'')-C\text{ for }t'\le t''\le t+s]\\
&\le\sum_{s\ge s_0}
\Pr[P_1(t'+s)\le P_1(t')\text{ and }
P_1(t'')\le P_0(t'')-C\text{ for }t'\le t''\le t+s]\\
&\overset{\text{Lem \ref{lem:modified-azuma}}}{\le}
\sum_{s=s_0}^\infty \exp(-cs)
\le\frac{\exp\left(-\frac{c\Ctilde}{C}+3c\right)}{1-\exp(-c)}
\exp\left(-\frac{c}{C}\ell\right)
\end{align}
for some constant $c(C)>0$ (note that $c$ does not depend on $\Ctilde$). 
Choosing sufficiently large $\Ctilde$, it follows
\begin{align}
\sum_{\ell=0}^\infty\sum_{i=1}^{K(\ell+1)}
\Pr[\cE_{\ell,i}]
&\le\frac{K\exp\left(-\frac{c\Ctilde}{C}+3c\right)}{1-\exp(-c)}
\sum_{\ell=0}^\infty(\ell+1)\exp\left(-\frac{c}{C}\ell\right)
=\frac{K\exp\left(-\frac{c\Ctilde}{C}+3c\right)}{1-\exp(-c)}
\frac{1}{\left(1-\exp\left(-\frac{c}{C}\right)\right)^2}
\le 0.1\;.
\end{align}

\medskip

To sum up, so far we showed that there exists a choice of $\Ctilde$
such that if
$P_0(0)>\Cstart$ and $P_1(0)>P_0(0)-C$, then
$\Pr[P_1(t_0)\le\Cmin]\le 0.2$.
In particular, the theorem is proved in the case of $P_0(0),P_1(0)>\Cstart$
and $P_1(0)>P_0(0)-C$. It remains to drop this last assumption.

To that end, assume that $P_0(0),P_1(0)>\Cstart$ and
$P_1(0)\le P_0(0)-C$.
Let $R(t)=P_1(t)-P_0(t)$.
As long as the condition
$P_1(t)\le P_0(t)-C$ holds, we have $|R(t+1)-R(t)|\le 2C$ and
$\expec[R(t+1)\;\vert\;\mathcal{F}_t]\ge R(t)+2/C$. Therefore, 
the stopping time $t_1=\min\{t: R(t)> -C\}$ is almost surely finite. 
It is sufficient to prove
\begin{align}
\label{eq:86}
\Pr[\exists t\le t_1: P_1(t)\le\Cmin]\le 0.1\;,
\end{align}
since if $P_1(t)>\Cmin$ for all $t\le t_1$, then either
$t_0\le t_1$, in which case certainly $P_1(t_0)>\Cmin$ or
$t_0>t_1$, in which case
$P_0(t_1)>\Cstart$ and $P_1(t_1)>P_0(t_1)-C$, so continuing
the process from $t_1$, by the first part of the proof, 
the event $P_1(t_0)\le \Cmin$ occurs with
additional probability of at most 0.2.
Accordingly, let us turn to showing~\eqref{eq:86} (for large enough $\Ctilde)$. 
\medskip

Due to~\eqref{eq:p0-bounded}
and~\eqref{eq:p1-lower-bound}, it holds
$P_1(t)>\Cmin=\Cstart-\Ctilde$ for $t\le\Ctilde/C$. On the other hand,
by \Cref{lem:modified-azuma}, it also holds
\begin{align}\label{eq:20}
\Pr[t\le t_1\text{ and }P_1(t)\le\Cmin]
&\le\Pr\left[t\le t_1\land P_1(t)\le P_1(0)\right]\le\exp(-ct)\;
\end{align}
for some $c(C)>0$. Let $t'$ be the smallest $t$ such that
$t>\Ctilde/C$. Then, as a consequence of~\eqref{eq:20}, it holds
\begin{align}
\Pr[\exists t\le t_1: P_1(t)\le\Cmin]
&\le\sum_{t=t'}^{\infty} \exp(-ct)
=\frac{\exp(-ct')}{1-\exp(-c)}\;.
\end{align}
It $\Ctilde$ is chosen large enough, then $t'$ satisfies $\frac{\exp(-ct')}{1-\exp(-c)}\le 0.1$ and indeed it follows
\begin{align}
\Pr[\exists t\le t_1: P_1(t)\le\Cmin]&\le 0.1\;,
\end{align}
which concludes the proof.\qed

\appendix
\section{Proof of \Cref{cor:bounded_step}}
\label{app:bounded-step}
Recall~\eqref{eq:new-aij}, which we will be using multiple times.
\begin{enumerate}
        \item
        If $|A_{ij}|< \eps_0$, then 
        $\min(|A_{i\ell}|,|A_{j\ell}|)< \eps_0$.
        Indeed, if $\min(|A_{i\ell}|,|A_{j\ell}|)>1-\eps_1^2$,
        then by \Cref{cl:consistent-signs}
        and since $\cU$
        is $(\eps_0,\eps_1)$-inactive,
        also
        $|A_{ij}|>1-\eps_1^2>\eps_0$, a contradiction.
        
        Therefore, from~\eqref{eq:new-aij},
        $|A'_{ij}|\le |A_{ij}|+\alpha|A_{i\ell}|\cdot|A_{j\ell}|
        <(1+\alpha)\eps_0\le\frac{1}{4(1+\alpha)}\le 1/2$.
        \item 
        If $|A_{ij}|> 1-\eps_1^2$ 
        and $\max(|A_{i\ell}|,|A_{j\ell}|)< \eps_0$, then
        \begin{align}
        |A'_{ij}|&>
        \frac{1-\eps_1^2}{\sqrt{1+(2\alpha+\alpha^2)\eps_0^2}}
        -\alpha\eps_0^2
        \ge (1-\eps_1^2)(1-(2\alpha+\alpha^2)\eps_0^2)-\alpha\eps_0^2
        \ge 1-(3\alpha+\alpha^2)\eps_0^2-\eps_1^2\\
        &\ge 1/2\;,
        \end{align}
        where the last line holds since from the assumption
        $\max(\eps_0,\eps_1^2)\le\frac{1}{4(2+\alpha)^2}$
        it follows
        $\eps_1^2\le 1/4$
        and 
        $(3\alpha+\alpha^2)\eps_0^2\le (3\alpha+\alpha^2)\eps_0\le 1/4$.
        By a similar calculation, it also holds
        $|A_{ij}-A'_{ij}|\le 1/2$, so
        $\sign(A_{ij})=\sign(A'_{ij})$.
        \item 
        If $|A_{ij}|> 1-\eps_1^2$ and
        $\max(|A_{i\ell}|,|A_{j\ell}|)>1-\eps_1^2$, then again
        by \Cref{cl:consistent-signs}
        it holds
        $\min(|A_{i\ell}|,|A_{j\ell}|)>1-\eps_1^2$and $\sign(A_{ij})=\sign(A_{i\ell})\sign(A_{j\ell})$. Then,
        \begin{align}
        |A'_{ij}|&>
        \frac{1-\eps_1^2+\alpha(1-\eps_1^2)^2}
        {1+\alpha}
        \ge
        1-\frac{1+2\alpha}{1+\alpha}\eps_1^2\ge\frac{1}{2}\;.
        \end{align}
\end{enumerate}
The lemma follows, as we exhausted
all possible cases.\qed

\section{Proof that \Cref{thm:technical-informal} implies \Cref{thm:main-polarization}}
\label{app:implication}

First, let us argue that an inactive configuration with one cluster polarizes.
This follows from a result proved
in~\cite{ABH24}.

\begin{lemma}[Lemma~3.11 in~\cite{ABH24}]
\label{lem:strictly-convex}
Let $\cU^0$ be an initial configuration of
$n$ agents such that
there exist $b_1,\ldots,b_n\in\{\pm 1\}$
with $\langle b_i\vec u_i^0, b_j\vec u_j^0
\rangle>0$ for every $i,j\in[n]$.
Then, $(\cU^t)_t$ polarizes almost surely.
\end{lemma}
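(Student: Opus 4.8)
The plan is to reduce to the case where every pair of opinions is positively correlated, and then argue that the opinion cloud almost surely contracts to a single point. First I would invoke the sign-flip symmetry recorded in \Cref{rem:polarization-clusters}: replacing $\vu_i^0$ by $-\vu_i^0$ simultaneously for all indices with $b_i=-1$ produces, under the \emph{same} interaction sequence, a run in which each opinion is negated at every time; this does not change whether the run polarizes, nor the law of the (uniformly generated) interaction sequence, so I may assume $b_1=\cdots=b_n=1$, i.e.\ $A_{ij}^0>0$ for all $i,j$.

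Next I would record two facts valid on every sample path. (i) \emph{Positivity is preserved}: if all correlations of $\cU^t$ are positive and agent $\ell$ influences agent $i$, then by~\eqref{eq:new-aij} the numerator $A_{ij}^t+\alpha A_{i\ell}^t A_{j\ell}^t$ is positive, so $A_{ij}^{t+1}>0$, and the remaining correlations are unchanged; hence $A_{ij}^t>0$ for all $t$, equivalently every pair of opinions is at geodesic distance $<\pi/2$. (ii) Writing $R(t):=\max_{i,j}\dist(\vu_i^t,\vu_j^t)=\arccos(\min_{i,j}A_{ij}^t)<\pi/2$ and letting $H^t$ be the closed geodesic convex hull of $\{\vu_1^t,\dots,\vu_n^t\}$ (well-defined, since these points lie in the convex ball $\overline B(\vu_1^t,R(t))$ of radius $<\pi/2$, where minimizing geodesics are unique), I claim $R$ is non-increasing and $H^{t+1}\subseteq H^t$: because $A_{i\ell}^t>0$, the new opinion $\vu_i^{t+1}$ lies on the minimizing geodesic from $\vu_i^t$ to $\vu_\ell^t$, hence inside $H^t$, and — since balls of radius $<\pi/2$ are convex — within $\max(\dist(\vu_i^t,\vu_k^t),\dist(\vu_\ell^t,\vu_k^t))\le R(t)$ of every $\vu_k^t$. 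Thus $R(t)\downarrow R_\infty$ for some $R_\infty\ge0$.

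The main work is to show $R_\infty=0$ almost surely. Fix a rational $q>0$, let $\beta_0:=\min_{i,j}A_{ij}^0>0$, and consider $E_q:=\{R(t)>q\ \text{for all }t\}$; since $\{R_\infty>0\}=\bigcup_{q\in\mathbb{Q}_{>0}}E_q$, it suffices to prove $\prob[E_q]=0$. The key observation is that $f(x):=(1+\alpha)x/\sqrt{1+(2\alpha+\alpha^2)x^2}$ is strictly increasing on $[0,1]$ with $f(x)>x$ on $(0,1)$ and $f(1)=1$, so iterating $f$ from any value $\ge\beta_0$ converges to $1$; hence, choosing $\epsilon\in(0,1/4)$ with $\arccos(1-4\epsilon)<q$, there is a finite $K$ for which $K$ iterations from any starting point $\ge\beta_0$ exceed $1-\epsilon$. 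Now fix the \emph{deterministic} block $\sigma$ of $L:=(n-1)K$ interactions in which agent $1$ influences agent $2$ exactly $K$ times, then agent $1$ influences agent $3$ exactly $K$ times, and so on through agent $n$. Applied to any configuration with positive correlations and $\min_{i,j}A_{ij}\ge\beta_0$, while $\sigma$ processes agent $k$ only the correlation $A_{1k}$ among agent $1$'s correlations changes, and it only increases; so $\sigma$ ends with $A_{1k}\ge1-\epsilon$ for every $k$, whence by \Cref{cl:consistent-signs} all correlations become $\ge1-4\epsilon$, i.e.\ $R<q$. On $E_q$, the monotonicity of $R$ gives $\min_{i,j}A_{ij}^t=\cos R(t)\ge\cos R(0)=\beta_0$ and positivity at every time, so any occurrence of $\sigma$ as a contiguous block of interactions would force $R<q$ at the end of that block, contradicting $E_q$. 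But $\sigma$ coincides with the $m$-th disjoint length-$L$ block of the i.i.d.\ interaction stream with probability $n^{-2L}>0$, independently over $m$, so by Borel--Cantelli it almost surely occurs for some $m$; hence $\prob[E_q]=0$.

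On the almost sure event $R(t)\downarrow0$, the nested nonempty compact sets $H^t$ have $\operatorname{diam}(H^t)\le 2R(t)\to0$, so $\bigcap_t H^t=\{\vu^*\}$ is a point; since $\vu_i^t\in H^t\subseteq\overline B(\vu^*,\operatorname{diam}H^t)$ for every $i$, each $\vu_i^t\to\vu^*$, so $\cU^t$ converges to a consensus configuration, which is polarized by \Cref{def:polarization}. I expect the crux to be the argument that $R_\infty=0$ almost surely (the third step above): isolating a \emph{single fixed} finite interaction block that provably contracts the cloud from every admissible state, so that a clean Borel--Cantelli argument closes the case $R_\infty>0$ rather than a delicate adaptive one — and, relatedly, the upgrade in the last step from ``the diameter tends to $0$'' to genuine convergence of each individual opinion, which is exactly where the nestedness of the geodesic convex hulls is used.
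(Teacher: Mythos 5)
Your proof is correct. Note, however, that the paper itself does not prove this statement: it is imported verbatim as Lemma~3.11 of \cite{ABH24}, so what you have produced is a self-contained substitute for a cited result rather than an alternative to an in-paper argument. Your overall strategy matches what the surrounding machinery suggests the original proof does — reduce by the sign-flip symmetry of \Cref{rem:polarization-clusters} to the all-positive case, exhibit a single fixed finite interaction block that provably contracts any admissible configuration below a given threshold (your block where agent $1$ repeatedly influences each other agent is essentially the $\binom{n}{2}$-interaction sequence of \Cref{lem:strictly-convex-increasing}), and close with Borel--Cantelli over disjoint blocks. Where you genuinely diverge is in the bookkeeping: \cite{ABH24} tracks the correlation potential $\max_{i,j}(1-\lvert A_{ij}\rvert)$ and its monotonicity under inside-cluster interactions, whereas you work with the geodesic diameter $R(t)$ and the nested geodesic convex hulls $H^t$. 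The geometric formulation buys you a particularly clean resolution of the one step that any version of this proof must address — upgrading ``the opinions cluster together'' to ``each opinion actually converges'' — since $\bigcap_t H^t$ being a singleton immediately pins down the limit; the correlation-potential route needs an equivalent (if more algebraic) argument for the same point. All the individual steps check out: positivity is preserved by \eqref{eq:new-aij}, balls of radius below $\pi/2$ are convex so $R$ is non-increasing and the hulls are nested, the map $x\mapsto(1+\alpha)x/\sqrt{1+(2\alpha+\alpha^2)x^2}$ iterates any value $\ge\beta_0$ above $1-\eps$ in a uniformly bounded number of steps, and \Cref{cl:consistent-signs} propagates the near-alignment with agent $1$ to all pairs.
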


\begin{corollary}
\label{cor:one-cluster}
Let $\cU^0$ be an $(\epsbase,\epsbase)$-inactive
initial configuration with one cluster.
Then, $(\cU^t)_t$ polarizes almost surely.
\end{corollary}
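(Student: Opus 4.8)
The plan is to deduce \Cref{cor:one-cluster} directly from \Cref{lem:strictly-convex} by producing a sign vector $b\in\{\pm1\}^n$ that makes every pairwise correlation of $\cU^0$ positive. Since $\cU^0$ is $(\epsbase,\epsbase)$-inactive, \Cref{lem:cluster} says its clusters partition $[n]$, so having a single cluster means that cluster equals all of $[n]$; consequently $|A^0_{ij}|>1-\epsbase^2>0$ for every pair $i,j$, and in particular each $\sign(A^0_{ij})$ is well-defined and nonzero.

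First I would set $b_1:=1$ and $b_i:=\sign(A^0_{1i})$ for $i\in[n]$. The point to check is that $\sign(A^0_{ij})=b_ib_j$ for all $i,j$. This is exactly the sign-consistency statement of \Cref{cl:consistent-signs}, which applies because $\epsbase\le 1/256$ forces $\epsbase^2<1/4$ and because all agents lie in the same cluster: taking the third index to be $1$ there gives $\sign(A^0_{ij})=\sign(A^0_{i1})\sign(A^0_{j1})=\sign(A^0_{1i})\sign(A^0_{1j})=b_ib_j$, using symmetry $A^0_{i1}=A^0_{1i}$. Hence
\begin{align}
\langle b_i\vec u_i^0,\,b_j\vec u_j^0\rangle=b_ib_j\,A^0_{ij}=(b_ib_j)^2|A^0_{ij}|=|A^0_{ij}|>0
\end{align}
for every $i,j\in[n]$, so the hypothesis of \Cref{lem:strictly-convex} is met and $(\cU^t)_t$ polarizes almost surely.

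I do not foresee a genuine obstacle here: the only thing to be careful about is that \Cref{cl:consistent-signs} is actually available, which is ensured by the definition of $\epsbase$ and the one-cluster hypothesis (so that all triples of agents lie in a common cluster); the remainder is routine sign bookkeeping together with the symmetry of the correlation matrix $A^0$.
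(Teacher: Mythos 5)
Your proposal is correct and follows essentially the same route as the paper: choose $b_i=\sign(A^0_{1i})$, invoke \Cref{cl:consistent-signs} within the single cluster to get $\sign(A^0_{ij})=b_ib_j$, and conclude via \Cref{lem:strictly-convex}. The extra care you take (verifying that $\epsbase$ makes the claim applicable and that $b_1=1$ is consistent) is fine but not a different argument.
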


\begin{proof}
Let $\cU=\cU^0$ and $b_i=\sign(A_{1i})$.
Since $\cU$ has only one cluster, 
for every $i,j\in [n]$,
using \Cref{cl:consistent-signs},
it holds
$\sign(\langle b_i \vec u_i, b_j \vec u_j\rangle)
=\sign(A_{1i})\sign(A_{1j})\sign(A_{ij})
=1$. Therefore, $(\cU^t)_t$ polarizes
almost surely by \Cref{lem:strictly-convex}.
\end{proof}

Furthermore, we will use the fact that
a configuration which is not inactive
must become so. For that we need
an elementary geometrical claim:
\begin{claim}[Corollary 2.10 in \cite{ABH24}]
\label{cl:orthogonal-transitive}
Let $\eps>0$. If
$|A_{ij}|\le \eps$,
$|A_{ii'}|\ge 1-\eps^2$,
and $|A_{jj'}|\ge 1-\eps^2$,
then $|A_{i'j'}|\le 64 \eps$.
\end{claim}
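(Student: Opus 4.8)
The plan is to treat this as a direct computation using the orthogonal decomposition of the relevant unit vectors. Write $A_{pq}=\langle\vec u_p,\vec u_q\rangle$ throughout. First I would dispose of the trivial regime: if $\eps\ge 1$ then $|A_{i'j'}|\le 1\le 64\eps$ and there is nothing to prove, so assume $\eps<1$. In this regime $1-\eps^2>0$, so the hypotheses $|A_{ii'}|\ge 1-\eps^2$ and $|A_{jj'}|\ge 1-\eps^2$ yield $A_{ii'}^2\ge(1-\eps^2)^2\ge 1-2\eps^2$ and, likewise, $A_{jj'}^2\ge 1-2\eps^2$.

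Next, I would decompose $\vec u_{i'}=A_{ii'}\,\vec u_i+\vec r_i$ with $\vec r_i\perp\vec u_i$, and similarly $\vec u_{j'}=A_{jj'}\,\vec u_j+\vec r_j$ with $\vec r_j\perp\vec u_j$. Since $\vec u_{i'}$ is a unit vector, Pythagoras gives $\norm{\vec r_i}^2=1-A_{ii'}^2\le 2\eps^2$, hence $\norm{\vec r_i}\le\sqrt2\,\eps$, and the same bound holds for $\norm{\vec r_j}$; note that the sign of $A_{ii'}$ (i.e.\ whether $\vec u_{i'}$ is near $\vec u_i$ or near $-\vec u_i$) plays no role, since only $\norm{\vec r_i}$ enters the estimate. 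Expanding the inner product gives
\[
A_{i'j'}=A_{ii'}A_{jj'}A_{ij}+A_{ii'}\langle\vec u_i,\vec r_j\rangle+A_{jj'}\langle\vec r_i,\vec u_j\rangle+\langle\vec r_i,\vec r_j\rangle.
\]

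Finally I would bound the four terms by Cauchy--Schwarz and the triangle inequality, using $|A_{ii'}|,|A_{jj'}|\le 1$ and $\norm{\vec u_i}=\norm{\vec u_j}=1$: the first term is at most $|A_{ij}|\le\eps$; the middle two are at most $\norm{\vec r_j}\le\sqrt2\,\eps$ and $\norm{\vec r_i}\le\sqrt2\,\eps$ respectively; and the last is at most $\norm{\vec r_i}\,\norm{\vec r_j}\le 2\eps^2\le 2\eps$ (using $\eps<1$). Summing, $|A_{i'j'}|\le(3+2\sqrt2)\,\eps\le 64\eps$, which is the claim, with a great deal of room to spare (the constant $64$ is very far from tight, which is presumably why it appears as a loose corollary of a sharper statement in \cite{ABH24}). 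I do not expect any real obstacle: the only points needing care are the trivial case $\eps\ge 1$ and the remark that the signs of $A_{ii'},A_{jj'}$ never matter, both dealt with above.
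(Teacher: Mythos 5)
Your proof is correct. Note that the paper does not actually prove this claim itself: it is imported verbatim as Corollary~2.10 of \cite{ABH24}, so there is no in-paper argument to compare against. Your self-contained computation is sound: the case split on $\eps\ge 1$ is needed (otherwise the hypotheses are vacuous and the residuals $\vec r_i,\vec r_j$ are uncontrolled), the squaring step $|A_{ii'}|\ge 1-\eps^2\Rightarrow A_{ii'}^2\ge(1-\eps^2)^2\ge 1-2\eps^2$ legitimately uses $1-\eps^2>0$, the orthogonal decompositions give $\norm{\vec r_i},\norm{\vec r_j}\le\sqrt2\,\eps$ by Pythagoras, and the four-term expansion with Cauchy--Schwarz yields $|A_{i'j'}|\le(3+2\sqrt2)\eps$, comfortably within the stated $64\eps$. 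Your observation that the signs of $A_{ii'}$ and $A_{jj'}$ are immaterial is the right thing to flag, since it is the only place a careless argument could go wrong. The constant you obtain is in fact sharper than the one quoted; the factor $64$ in \cite{ABH24} is presumably an artifact of how that corollary is derived there from intermediate lemmas, and nothing in the present paper relies on it being tight.
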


\begin{lemma}
\label{lem:path-to-inactive}
For every $\eps>0$ there exists $K$
such that the following holds:
Let $\cU^0$ be any
initial configuration.
Then, there exists a sequence of $K$
interactions such that $\cU^K$ is
$(\eps,\eps)$-inactive.

In particular, almost surely, there exists
a time $t$ such that $\cU^t$ is $(\eps,\eps)$-inactive.
\end{lemma}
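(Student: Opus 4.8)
The plan is to establish the first (``deterministic'') assertion by induction on the number of agents $n$, and to deduce the ``in particular'' clause from it by a routine argument. For the deduction, first observe that the interaction $(i,i)$ does nothing: by~\eqref{eq:06}, $\vec w=(1+\alpha)\vec u_i^t$. Hence any sequence of at most $K$ interactions witnessing the first assertion can be padded with copies of $(i,i)$ to have length exactly $K$. Now, at every time $t$, conditioned on $\cU^t$, there is such a sequence of length $K$, and it is realized by the process during steps $t,\dots,t+K-1$ with probability at least $n^{-2K}>0$; this lower bound holds conditioned on the past. Applying this at times $0,K,2K,\dots$, a standard Borel--Cantelli argument shows that almost surely $\cU^{t}$ is $(\eps,\eps)$-inactive for some $t$.

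For the induction, the base case $n=1$ is vacuous. For the step I would fix a small parameter $\eps_1=\eps_1(\eps,n)>0$, chosen at the very end, and run two phases. In \emph{Phase~1}, for $i=2,\dots,n$ in turn: if $\eps_1\le|A_{1i}|\le 1-\eps_1^2$, let agent $1$ influence agent $i$ repeatedly. By~\eqref{eq:new-aij} each such interaction multiplies $1-A_{1i}^2$ by $\bigl(1+(2\alpha+\alpha^2)A_{1i}^2\bigr)^{-1}$, a quantity bounded away from $1$ while $|A_{1i}|\ge\eps_1$, so after a bounded number $m_1(\eps_1,\alpha)$ of interactions one has $|A_{1i}|>1-\eps_1^2$. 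Since an interaction of the form $(j,1)$ with $j\neq i$ does not alter $A_{1i}$, after Phase~1 every pair $\{1,i\}$ satisfies $|A_{1i}|<\eps_1$ or $|A_{1i}|>1-\eps_1^2$. Put $C=\{1\}\cup\{i:|A_{1i}|>1-\eps_1^2\}$ and $R=[n]\setminus C$, so $1\in C$ and $|R|\le n-1$. By \Cref{cl:consistent-signs}, any $i,j\in C$ satisfy $|A_{ij}|\ge 1-(2\eps_1)^2>1-\eps^2$ provided $\eps_1$ is small enough; and an elementary estimate from $|A_{1i}|>1-\eps_1^2$, $|A_{1j}|<\eps_1$ gives $|A_{ij}|<3\eps_1$ for every $i\in C$, $j\in R$ (alternatively one may invoke \Cref{cl:orthogonal-transitive}). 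In \emph{Phase~2}, apply the induction hypothesis to the subconfiguration on the $|R|$ agents of $R$: there is a sequence of at most $K(\eps,n-1)$ interactions among those agents after which that subconfiguration is $(\eps,\eps)$-inactive. Such interactions do not touch the opinions in $C$, hence leave $A_{ij}$ unchanged for $i,j\in C$; and by~\eqref{eq:new-aij} each of them multiplies the largest cross-correlation $\max_{i\in C,j\in R}|A_{ij}|$ by at most $1+\alpha$. Thus at the end of Phase~2 this quantity is at most $3\eps_1(1+\alpha)^{K(\eps,n-1)}$. Finally, choosing $\eps_1=\eps_1(\eps,n)$ small enough that $2\eps_1<\eps$ and $3\eps_1(1+\alpha)^{K(\eps,n-1)}<\eps$ --- legitimate because $K(\eps,n-1)$ is already defined --- the configuration is $(\eps,\eps)$-inactive: pairs inside $C$ and inside $R$ are fine by the above, and pairs across $C,R$ have correlation below $\eps$. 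The total number of interactions is $K(\eps,n)\le (n-1)m_1(\eps_1,\alpha)+K(\eps,n-1)$, padded to exactly $K(\eps,n)$.

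The main obstacle is the interaction between the two phases: Phase~2 must drive $R$ into an inactive state without spoiling the near-orthogonality of $C$ and $R$ already obtained in Phase~1. This is precisely why $\eps_1$ must be chosen only after $K(\eps,n-1)$ is known --- the crude bound ``one interaction inflates a cross-correlation by a factor at most $1+\alpha$'' lets a bounded number of Phase~2 interactions inflate cross-correlations by only a fixed constant, which is absorbed by shrinking $\eps_1$. The remaining ingredients --- the geometric estimates from~\eqref{eq:new-aij}, the sign and magnitude facts from \Cref{cl:consistent-signs}, and the triviality of $(i,i)$-interactions --- are routine.
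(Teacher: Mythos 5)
Your proof is correct, and its skeleton matches the paper's: peel off the cluster of a distinguished agent (who influences every agent it is noticeably correlated with until they are nearly aligned with it, without itself moving), recurse on the remaining agents, and pass to the almost-sure statement by the standard fixed-probability-per-block argument. The one genuine difference is how the near-orthogonality between the formed cluster $C$ and the remainder $R$ survives the recursive phase --- the obstacle you correctly single out. You track it forward: each interaction inside $R$ inflates $\max_{i\in C,\,j\in R}|A_{ij}|$ by at most a factor $1+\alpha$, so you shrink the threshold to $\eps_1(\eps,n)$ depending on $K(\eps,n-1)$, making the clustering threshold level-dependent. The paper instead never tracks these cross-correlations: it uses a single threshold $\eps/64$ at every recursion depth, observes that agent $1$ and the anchors $w(a)$ of the recursively formed clusters never move (so $|A_{1,w(a)}|$ is literally unchanged and below $\eps/64$), and recovers $|A_{ij}|\le\eps$ only at the end via \cref{cl:orthogonal-transitive}: two tight clusters whose fixed anchors are nearly orthogonal are themselves nearly orthogonal. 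Your route is self-contained and avoids \cref{cl:orthogonal-transitive} at the cost of constants that compound with $n$; the paper's anchor bookkeeping is slightly cleaner and is reused verbatim in \cref{app:implication}. Both give a valid $K=K(\eps,n,\alpha)$, and all the individual estimates in your argument (the multiplicative decay of $1-A_{1i}^2$, the $3\eps_1$ bound across $C$ and $R$, the $(1+\alpha)$ inflation per step) check out.
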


\begin{proof}
Let $\cU=\cU^0$ be a configuration.
Recall from~\eqref{eq:new-aij} that
if $0<|A_{i\ell}|<1$ and
agent $\ell$ influences agent $i$,
then their new correlation satisfies
$|A'_{i\ell}|>|A_{i\ell}|$. From this
and continuity, there exists $K_0=K_0(\eps)$
such that if 
$\eps/64\le |A_{i\ell}|\le 1-(\eps/64)^2$,
and $\ell$ influences $i$ for $K_0$ times,
then $|A_{i\ell}^{K_0}|>1-(\eps/64)^2$.

Let $K=K_0\cdot n$.
Let us define a sequence of at most $K$
interactions after which
the configuration is $(\eps,\eps)$-inactive.
(This sequence can be extended to length
exactly $K$, for example 
by adding interactions where
some $i$ influences itself.)

Let $S_1=\{i: |A_{1i}|\ge\eps/64\}$ and define the
\emph{anchor} of $S_1$ to be agent 1,
that is $w(1) = 1$.
For every agent $i\in S_1$,
let agent 1 influence agent $i$
for $K_0$ times. Let us call the new configuration $\Utilde$.
Consider $\Utilde$ with agents from
$S_1$ removed. If it is empty, stop.
Otherwise, apply the same procedure recursively
on the remaining agents. This results
in a new configuration with clusters
$S_2,\ldots,S_k$ and anchors
$w(2),\ldots,w(k)$. Let us add back $S_1$
and call the final configuration $\cU^K$.
Clearly, $\cU^K$ is constructed by applying
at most $K$ interactions to $\cU$.
Furthermore, we claim that $\cU^K$ is $(\eps,\eps)$-inactive with clusters
$S_1,\ldots,S_k$.

This is seen by induction on the number
of agents. In fact, let us prove
that the configuration $\cU^K$ is 
$(\eps,\eps)$-inactive,
and furthermore for every cluster $a$
and every $i\in S_a$ it holds
$|A^K_{w(a),i}|>1-(\eps/64)^2$.
Indeed, by induction, the clusters
$S_2,\ldots,S_k$ form an $(\eps,\eps)$-inactive
configuration. For $i,j\in S_1$,
by construction it holds
$\min(|A^K_{1i}|,|A^K_{1j}|)>1-(\eps/64)^2$,
which from \Cref{cl:consistent-signs}
implies $|A^K_{ij}|>1-(\eps/32)^2>1-\eps^2$.
Finally, for $i\in S_1,j\notin S_1$,
assume that $j\in S_a$ with the anchor
$w(a)$. By construction, agent $w(a)$
did not move and therefore we have
$|A^K_{1w(a)}|=|A_{1w(a)}|<\eps/64$.
Since also $|A^K_{1i}|>1-(\eps/64)^2$
and, by induction,
$|A^K_{j,w(a)}|>1-(\eps/64)^2$,
from \Cref{cl:orthogonal-transitive},
it follows $|A^K_{ij}|<\eps$.
\end{proof}

\begin{proof}[Proof of \Cref{thm:main-polarization}]
Let $\cU^0$ be a configuration which
is not separable. Recall constants
$\eps,\eps_1$ and $T$ from \Cref{thm:technical-informal}.

We define two sequences of stopping
times $\Tstart(\ell)$ and $\Tend(\ell)$, and a related sequence $\NC(\ell)$, as follows:
Let 
$\Tstart(0)=\min\{t: \cU^t \text{ is $(\eps,\eps)$-inactive}\}$. Note that $\Tstart(0)$ is almost surely
finite, by \Cref{lem:path-to-inactive}.

Given $\Tstart(\ell)$, let $\NC(\ell)$ be the number
of clusters of the configuration at time
$\Tstart(\ell)$. If $\NC(\ell)=1$,
let $\Tend(\ell)=\Tstart(\ell)$,
and $\Tend(\ell')=\Tstart(\ell')=\Tstart(\ell)$,
$\NC(\ell')=1$ for every $\ell'>\ell$.

If $\NC(\ell)>1$, then let
\begin{align}
\Tend(\ell)
&=\min\left\{
t: \text{
$t=\Tstart(\ell)+kT$ for some $k\ge 0$, and 
$\cU^t$ is not $(\eps,\eps_1)$-inactive
}
\right\}\;.
\end{align}
Since $\NC(\ell)>1$ and the configuration
is not separable, the assumptions of
\Cref{thm:technical-informal} are satisfied.
Hence, $\Tend(\ell)$ is almost surely finite. Finally, we let
\begin{align}
\Tstart(\ell+1)
&=\min\left\{t>\Tend(\ell): 
\text{$\cU^t$ is $(\eps,\eps)$-inactive}\right\}\;.
\end{align}
As at time $\Tend(\ell)$ the configuration
is not $(\eps,\eps_1)$-inactive, hence also
not $(\eps,\eps)$-inactive, the value
of $\Tstart(\ell+1)$ is almost surely finite
by \Cref{lem:path-to-inactive}.

\medskip

By \Cref{lem:cluster}, it holds $1\le \NC(\ell)\le d$
for every $\ell\ge 0$. We will now show that
almost surely there exists $\ell$ with $\NC(\ell)=1$.
By \Cref{cor:one-cluster}, that implies
that the process $(\cU^t)_t$ almost surely polarizes.

To that end, it is sufficient to show that
there exists a fixed $p>0$ such that
\begin{align}
\label{eq:25}
\Pr\left[\NC(\ell+1)\le \max(1,\NC(\ell)-1)
\;\vert\;\cU^{\Tstart(\ell)}\right]\ge p\;,
\end{align}
as indeed that implies
$\Pr[\NC(\ell+d)=1\;\vert\;\cU^{\Tstart(\ell)}]
\ge p^d$ and therefore $\NC(\ell)=1$ almost surely
happens for some $\ell$.

To show~\eqref{eq:25}, consider a 
configuration $\cU$ at time $\Tstart(\ell)$
such that $\NC(\ell)>1$. By \Cref{thm:technical-informal}, with probability at least 0.7,
the configuration $\Utilde$ at time
$\Tend(\ell)$ is $(\epsbase,\epsbase)$-inactive with clusters
$S_1,\ldots,S_{\NC(\ell)}$ and furthermore 
has two distinct clusters $S_a,S_b$ and
opinions $i_0\in S_a,j_0\in S_b$ such that
$|\Atilde_{i_0j_0}|\ge \eps$.
Given such $\Utilde$, we will now define a sequence
of at most $K$ (for some fixed $K$) 
interactions such that the resulting
configuration $\Uhat$ is $(\eps,\eps)$-inactive,
with at most $\NC(\ell)-1$ clusters. That implies
$\Pr[\NC(\ell+1)\le \NC(\ell)-1]\ge 0.7\cdot n^{-2K}$,
and therefore \eqref{eq:25}, concluding the proof.

\medskip

First, for every $i\in S_a$ such that
$|\Atilde_{ij_0}|<\frac{\alpha\eps}{4}$,
let $i_0$ influence $i$ one time. 
Let this intermediate configuration be called $\Utilde'$.
After any such interaction,
from~\eqref{eq:new-aij} and due to
$|\alpha \Atilde_{ii_0}\Atilde_{i_0j_0}|\ge\frac{\alpha\eps}{2}$, it holds
$|\Atilde'_{ij_0}|\ge\frac{\alpha\eps}{4(1+\alpha)}$.
Therefore, we obtain a configuration where
for every $i\in S_a$ it holds
\begin{align}
\label{eq:26}
|\Atilde'_{ij_0}|&\ge 
\min\left(\frac{\alpha\eps}{4},\frac{\alpha\eps}{4(1+\alpha)}\right)=\frac{\alpha\eps}{4(1+\alpha)}\;.
\end{align}
Let $\eps'=\min\left(\frac{\alpha\eps}{4(1+\alpha)},\frac{\eps}{64}\right)$ and
\begin{align}
S=\left\{i: |\Atilde'_{ij_0}|\ge\eps'\right\}\;.
\end{align}
There exists a fixed $K_0$ such that if agent $j_0$
influences $i\in S$ for $K_0$ times, then their 
new absolute correlation exceeds $1-(\eps/64)^2$.
Let $j_0$ influence every $i\in S$ for $K_0$ times.
Note that $S_a\cup S_b\subseteq S$, where for $i\in S_a$
this follows from~\eqref{eq:26} and for $i\in S_b$
since $|\Atilde'_{ij_0}|=|\Atilde_{ij_0}|> 1-\epsbase^2$.

After that, forget about the agents in $S$ and apply
the procedure from the proof of \Cref{lem:path-to-inactive}
to the remaining agents. Call the final configuration $\Uhat$.
Indeed, this configuration is obtained from $\Utilde$ using
$O(1)$ interactions. From \Cref{lem:path-to-inactive},
configuration $\Uhat$ is $(\eps,\eps)$-inactive, 
with clusters
$S,\Shat_2,\ldots,\allowbreak\Shat_k$
and anchors $j_0,\what(2),\allowbreak\ldots,\allowbreak\what(k)$.
And indeed $k<\NC(\ell)$, since, as already mentioned,
$S_a\cup S_b\subseteq S$, and on the other hand
for any distinct $a',b'$ the anchors $\what(a')$ and $\what(b')$
could not have been in the same cluster in $\Utilde$:
On the one hand, the anchors have the same position
in $\Utilde$, and if they were in the same cluster
their absolute correlation must be more than $1-\epsbase^2$.
On the other hand, by construction,
their mutual absolute
correlations must be at most $\eps/64$, a contradiction.
\end{proof}

\printbibliography
\end{document}